\def\bigE{{\mathbb{E}}}
\def\P{{\mathbb{P}}}
\def\CF{{\mathrm{CF}}}
\def\ze{{\mathrm{ze}}}
\def\sh{{\mathrm{sh}}}
\def\ran{{\mathrm{ran}}}
\def\det{{\mathrm{det}}}
\def\S{{\mathcal{S}}}
\def\G{{\mathcal{G}}}
\def\H{{\mathcal{H}}}
\DeclareMathOperator*{\argmax}{arg\,max}
\newtheorem{thm}{Theorem}[section]
\newtheorem{defn}{Definition}[section]
\newtheorem{remark}{Remark}[section]
\newtheorem{lem}[thm]{Lemma}
\newtheorem{cor}[thm]{Corollary}
\let\oldexample\example
\renewcommand{\example}{\oldexample\normalfont}
\begin{document}
\title{Control Capacity}
\author{\IEEEauthorblockN{Gireeja Ranade\IEEEauthorrefmark{1}\IEEEauthorrefmark{2} and Anant Sahai\IEEEauthorrefmark{2}\thanks{This paper was presented in part at ISIT 2015~\cite{controlcapacity} and at ICC 2016~\cite{controlcapacitysideinfo}.}\\}
\IEEEauthorblockA{\IEEEauthorrefmark{1}Microsoft Research, Redmond\\}
\IEEEauthorblockA{\IEEEauthorrefmark{2}Electrical Engineering and Computer Sciences, University of California, Berkeley\\ }
gireeja@eecs.berkeley.edu, sahai@eecs.berkeley.edu
}

\maketitle
\doublespacing
\begin{abstract}
Feedback control actively dissipates uncertainty from a dynamical system by means of actuation. We develop a notion of ``control capacity'' that gives a fundamental limit (in bits) on the rate at which a controller can dissipate the uncertainty from a system, i.e. stabilize to a known fixed point. We give a computable single-letter characterization of control capacity for memoryless stationary scalar multiplicative actuation channels. Control capacity allows us to answer questions of stabilizability for scalar linear systems: a system with actuation uncertainty is stabilizable if and only if the control capacity is larger than the log of the unstable open-loop eigenvalue. 

For second-moment senses of stability, we recover the classic uncertainty threshold principle result. However, our definition of control capacity can quantify the stabilizability limits for any moment of stability. Our formulation parallels the notion of Shannon's communication capacity, and thus yields both a strong converse and a way to compute the value of side-information in control. 
The results in our paper are motivated by bit-level models for control that build on the deterministic models that are widely used to understand information flows in wireless network information theory. 

\end{abstract}


\section{Introduction}

Shannon's notion of communication capacity has been instrumental in developing communication strategies over information bottlenecks~\cite{shannon}. This powerful idea provides engineers with a language to discuss the performance of a wide-range of systems going from a single point-to-point link to complex networks. 

Just like communication systems, control systems also face many different performance bottlenecks. Some of these are explicitly informational in nature, such as rate-limited channels in networked control systems. Other performance bottlenecks come from the fact that parameters of the system might be uncertain or unpredictable. Model parameters (e.g. mass, moment-of-inertia) must be estimated from data. Other parameters (e.g. linearization constants) might be changing with time; algorithms such as iterated-LQR control require re-linearizing the system at each step. 

The impact of of model uncertainty has been investigated by robust control, and robust control provides worst-case bounds on the controllability of a system when only partial information about system parameters is available (e.g.~\cite{zhou1998essentials,el1995state}). Many works have also investigated how side-information regarding uncertain parameters can improve the performance of control systems, and this work is discussed later in the introduction. We build on these ideas and are motivated by Witsenhausen's comments in~\cite{witsenhausen1971separation}, where he points out the need for a theory of information that is compatible with control. We propose that model parameter variations/uncertainties in control systems can also be interpreted as informational bottlenecks in the system. We define a family of notions of control capacity that capture the fundamental limits of a controller's ability to stabilize a system over an unreliable actuator.

\subsection{Main results}

\begin{figure}[htbp]
\begin{center}
\includegraphics[width = .3\textwidth]{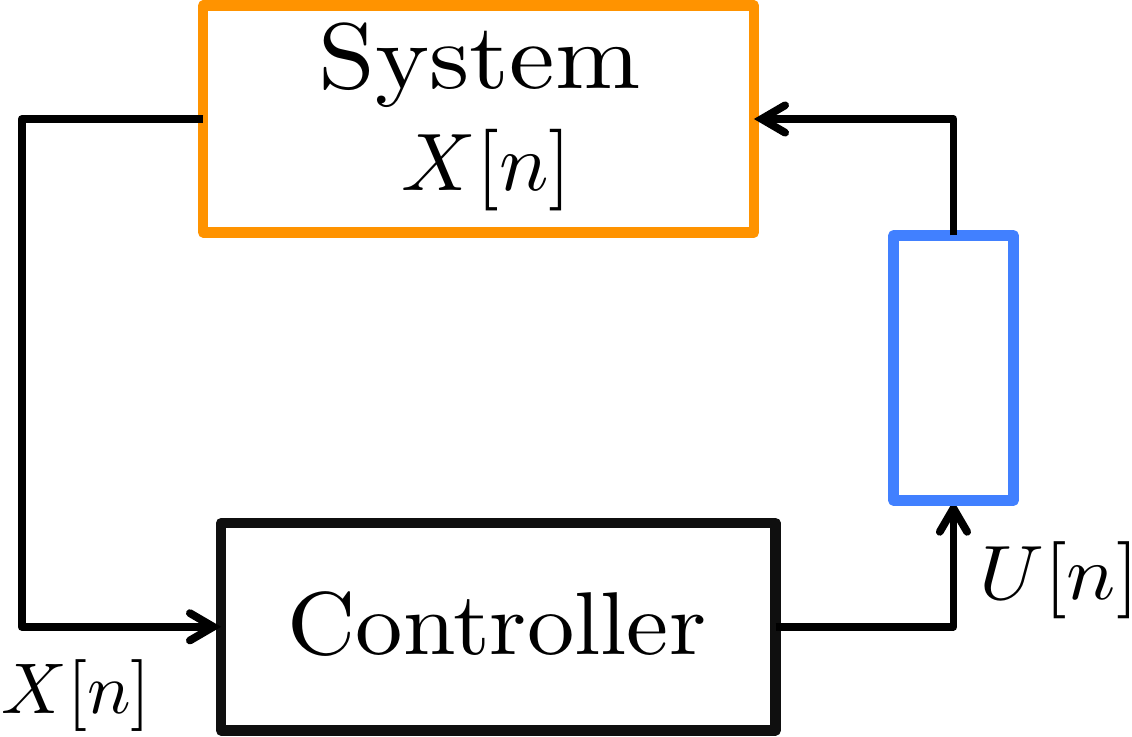}
\caption{A block diagram representation of an ``actuation'' channel, where the chosen control signal has an uncertain effect, i.e.~it is only applied after passing through an unreliable channel. Since this actuation channel models physical unreliabilities in the system (e.g.~unreliable gains) an encoder and decoder are not a part of the model.}
\label{fig:actuationchannel}
\end{center}
\end{figure}

The results in this paper focus on the stability of a scalar system with unpredictable control gains $B[n]$:
\begin{align}
\begin{split}
X[n+1] &= a \left(X[n] + B[n] U[n]\right) + W[n],\\
Y[n] &= X[n] + V[n].
\end{split}
\label{eq:sys1}
\end{align}
A block diagram for this is shown in Figure~\ref{fig:actuationchannel}.  Let the initial state $X[0] = x_{0} \neq 0$ be arbitrary, and let $V[n]$ and $W[n]$ be i.i.d. $\mathcal{N}(0,1)$ white Gaussian noise sequences. $U[n]$ is a causal control that must be computed using only the observations $Y[0]$ to $Y[n]$. The goal is to stabilize the system. While traditionally stochastic control has considered second moment stability, we consider the more general notion of $\eta$-th moment stability, i.e. to ensure that $\sup_{n}\bigE\left[\left|X\left[n\right]\right|^{\eta}\right]$ is always bounded.

$B[n]$ is a random variable that is drawn i.i.d.~from a known distribution $p_{B}$, and represents the uncertainty in how the applied control will actually influence the state. This leads us to call the block $(X[n] + B[n]U[n])$, the ``actuation channel'' in this system. 

Note that our model does not include an encoder-decoder pair around the actuation channel --- this is because we would like to model physical unreliabilities in the system as opposed to traditional communication constraints. Our objective is to develop the notion of control capacity that can be measured in bits, in order to provide an informational interpretation for these physical unreliabilities that is compatible with the language of information theory. 

The precise definition of the $\eta$-th moment control capacity ($C_{\eta}$) is deferred to Section~\ref{sec:eta}. Our first main family of results (Theorems~\ref{thm:shannoncapacitygrowth},~\ref{thm:debug},~\ref{thm:etareal}) shows that the system in~\eqref{eq:sys1} is stabilizable in the $\eta$-th moment if 
$C_{\eta} > \log |a|,$ and only if $C_{\eta} \geq \log |a|.$
The second family of results (Theorems~\ref{thm:calcshannoncap},~\ref{thm:zeroerrorcalc},~\ref{thm:calcetacap}) show that $C_{\eta}$ is actually computable, and depends only on the distribution of the i.i.d~$B$'s, i.e.,
$$C_{\eta} = \max_{d} - \frac{1}{\eta}\log \bigE\left[ \left| 1+ B \cdot d \right|^{\eta}\right].$$
It turns out the limiting cases of the notion of $\eta$-th moment stability as $\eta\to0$ and $\eta \to \infty$ are of particular interest, as the weakest and strongest potential notions of stability. The ``Shannon'' notion of control capacity, $C_{\sh},$ is what becomes relevant as $\eta \to 0$, and captures the stabilization limit for $\sup_{n}\bigE\left[\log \left|X\left[n\right]\right|\right] < \infty$. While this logarithmic sense of stability might seem artificial, Thm.~\ref{thm:tight} shows a strong converse style result for the Shannon control capacity --- without enough Shannon control capacity, it is impossible to have the state be stable in any sense. As $\eta\to\infty$ we approach the regime traditionally considered in robust control, and require stability with probability $1$. This is captured by the zero-error notion of control capacity, $C_{\ze}$. Theorem~\ref{thm:limits} establishes these limits formally.

One advantage of our information theoretic formulation is that it allows us to easily compute how side-information about a channel can improve our ability to control a system, paralleling calculations we are so familiar with in information theory. In particular, it turns out that the value of side-information about an actuation channel can be computed as a conditional expectation (Theorems~\ref{thm:sideinfocap},~\ref{thm:sideinfocapeta}). 

\subsection{Previous work}

This paper builds on many previous ideas in information theory and control. Y\"{u}ksel and Basar provide a detailed discussion and summary on the work at this intersection with a focus on understanding information structures and stabilization in their book~\cite{yuksel2013stochastic}. A recently released book by Fang et al.~\cite{fangtowards} provides some more of the history as well as recently developments on the work to integrate information theory with control, with a focus on performance-limits from the input-output perspective represented by Bode's integral formula. This section discusses some representative results and a more detailed discussion of related work can be found in~\cite{gireejaBeast}.

Stochastic unreliability in control systems has been extensively studied, starting with the uncertainty threshold principle~\cite{uncertaintyThreshold}, which considers the control of a system where both the system growth and control gain are unpredictable at each time step. The notion of control capacity provides an information-theoretic interpretation of the uncertainty threshold principle, and generalizes it to consider $\eta$-th moment stability instead of just second-moment stability. Further, control capacity can understand general uncertainty distributions (as opposed to just Gaussian distributions in~\cite{uncertaintyThreshold}). We discuss different categories of related work as well as the inspiration we have drawn from these different areas.

\subsubsection{Control with communication constraints}
Our work is strongly inspired by the family of data-rate theorems~\cite{wong1997systems, tatikonda, nairStabilization,nair2007feedback,minero2009data}. These results quantify the minimum rate required over a noiseless communication channel between the system observation and the controller to stabilize the system\footnote{One of the objectives of this line of work is to be able to consider problems with explicit rate-limits as well as parameter uncertainty in a unified framework, and this has been explored in later work in~\cite{kostina2016rate}.}. The related notion of anytime capacity considers control over noisy channels and also general notions of $\eta$-th moment stability~\cite{anytime}. Elia's work~\cite{EliaBodeShannon} uses the lens of control theory to examine the feedback capacity of channels in a way that is inspired by and extends the work of Schalkwijk and Kailath in \cite{SchalkwijkPart1, SchalkwijkPart2}. These papers ``encode'' information into the initial state of the system, and then stabilize it over a noisy channel. 

All these results allow for encoders-decoder pairs around the unreliable channels, and thus capture a traditional communication model for uncertainty. The main results have the flavor that the appropriate capacity of the bottlenecking communication channel must support a rate $R$ that is greater than a critical rate. This critical rate represents the fundamental rate at which the control system generates uncertainty --- it is typically the sum of the logs of the unstable eigenvalues of the system for linear systems. Our paper focuses on extending this aesthetic to capture the impact of physical unreliabilities. 

The data-rate theorems and anytime results consider the system plant as a ``source'' of uncertainty. This source must be communicated\footnote{In fact, the traditional data-rate theorems tend to come in pairs where the control system is paired with a pure estimation problem involving an open-loop version of the plant.} over the information bottleneck.  Our current paper focuses on how control systems can reduce uncertainty about the world by moving the world to a known point as suggested in~\cite{mitter2005information}. We refer to the dissipation of information/uncertainty as the ``sink'' nature of a control system. This difference is made salient by focusing on the control limits imposed by physical unreliabilities, which cannot be coded against. 
This represents how our perspective is different from previous data-rate theorems, while the aim of this work is to provide tools for use in conjunction with the data-rate results.

\subsubsection{Actuation channels and other parameter uncertainties}
A few other results have explicitly focused on understanding physical unreliabilities, and these have guided our explorations. Elia and co-authors were among the first to consider control actions sent over a real-erasure actuation channel~\cite{elia2004limitations,eliaFading,elia2011limitations}, with a focus on second-moment stability. They restricted the search space to consider only linear time-invariant (LTI) strategies so that the problem might become tractable and connected the problem to second-moment robust control. Related work by Schenato et al.~\cite{schenato2007foundations} and Imer et al.~\cite{imer2006optimal} also looked at this problem using dynamic programming techniques and showed that LTI strategies are in fact optimal in the infinite horizon when the effect of the control actions is available to the controller through feedback. Together these results show that the restriction to LTI strategies in~\cite{eliaFading} is in fact not a restriction at all! Our results recover some of these results for the scalar case and generalize them to $\eta$-th moment notions of stability.

Other related works by Garone et al.~\cite{garone2012lqg} and Matveev et al.~\cite{matveev2004} both consider control actions that are subject to packet drops, but allow the controller and system to use an encoder-decoder pair to treat this limitation as a traditional communication constraint. The works~\cite{schenato2007foundations, garone2012lqg} also examined the impact of control packet acknowledgements (a kind of side information); this becomes relevant for choosing between practical protocols such as TCP and UDP when building a system. In this paper, we provide a more general method to consider side-information that goes beyond the packet drop case.

Martins et al.~\cite{martinsUncertain} considered the stabilization of a system with uncertain growth rate in addition to a rate limit. Okano et al.~\cite{okano2014arxiv} also considered uncertain system growth from a robust control perspective. By contrast, the focus of our current work is on uncertainty in the control gain. Recent work~\cite{kostina2016rate} tries to provide an informational perspective that can bridge these various results.

\subsubsection{Value of information in control}
Parameter uncertainty in systems has been studied previously in control, and there has been a long quest to understand a notion of ``the value of information'' in control. One perspective on this has been provided by the idea of preview control and how it improves performance. A series of works have examined the value of information in stochastic control problems with \emph{additive} disturbances. For a standard LQG problem, Davis~\cite{davis1989anticipative} defines the value of information as the control cost reduction due to using a clairvoyant controller with non-causal knowledge. He observed that future side information about noise realizations can effectively reduce a stochastic control problem to a set of deterministic ones: one for each realization of the noise sequence. 

The area of non-anticipative control characterizes the price of not knowing the future realizations of the noise~\cite{rockafellar1976nonanticipativity,flaam1985nonanticipativity}. Rockafellar and Wets~\cite{rockafellar1976nonanticipativity} first considered this in discrete-time finite-horizon stochastic optimization problems, and then Dempster~\cite{dempster1981expected} and Flam~\cite{flaam1985nonanticipativity} extended the result for infinite horizon problems. Finally, Back and Pliska~\cite{back1987shadow} defined the shadow price of information for continuous time decision problems as well. Other related works and references for this also include~\cite{tadmor2005h,middleton2004tracking,chen2001optimal}. 

An important related result is also that of Martins et al.~\cite{martins2007fundamental}. Their paper studied how a preview of the noise can improve the frequency domain sensitivity function of the system, in the context of systems with additive noise. We are motivated by a similar spirit, however, our paper considers multiplicative parameter uncertainty in the system in addition to additive noise, and we take an information-theoretic approach where the value of information is measured in bits. 

\subsubsection{Multiplicative noise in information theory}
Because fading (channel gain) in wireless channels behaves multiplicatively and is modeled as random, such channels have been extensively studied in information theory \cite{lapidoth1998reliable}. However, even the qualitative nature of the results hinge crucially on whether the fading is known at the transmitter and/or receiver, and how fast the fades change. Most closely related to this paper is the work of Lapidoth and his collaborators, e.g.~\cite{lapidoth2002fading, lapidothLogLog}. These works consider non-coherent channels (that change too fast to predict) and show that the scaling of capacity with signal-to-noise ratio (SNR) is qualitatively different than if fading were known --- growing only as the double logarithm of the SNR rather than with the logarithm of the SNR. However, achieving even this scaling requires having an encoder and decoder around the channel and more importantly, using a specially tailored input distribution that is far from continuous.

\subsubsection{Side information in information theory}

There are two loci of uncertainty in point-to-point communication. The first is uncertainty about the system itself, i.e.~uncertainty about the channel state, and the second is uncertainty about the message bits in the system. Information theory has looked at both channel-state side information in the channel coding setting and source side information in the source coding setting. 

Wolfowitz was among the first to investigate side-information in the channel coding setting with time-varying channel state~\cite{wolfowitz1961coding}. Goldsmith and Varaiya~\cite{goldsmith1997capacity} build on this to characterize how channel-state side information at the transmitter can improve channel capacity. Caire and Shamai~\cite{caire1999capacity} provided a unified perspective to understand both causal and imperfect CSI. Lapidoth and Shamai quantify the degradation in performance due to channel-state estimation errors by the receiver~\cite{lapidoth2002fading}. Medard in~\cite{medard2000effect} examines the effect of imperfect channel knowledge on capacity for channels that are decorrelating in time and  Goldsmith and Medard~\cite{goldsmith2007capacity} further analyze causal side information for block memoryless channels. Their results recover Caire and Shamai~\cite{caire1999capacity} as a special case.


The impact of side information has also been studied in multi-terminal settings. For example, Kotagiri and Laneman~\cite{kotagiriMAC} consider a helper with non-causal knowledge of the state in a multiple access channel. Finally, there is the surprising result by Maddah-Ali and Tse~\cite{maddah2012completely}. They showed that in multi-terminal settings stale channel state information at the encoder can be useful even for a memoryless channel. It can enable a retroactive alignment of the signals. Such stale information is completely useless in a point-to-point setting.

On the source-coding side there are the classic Wyner-Ziv and Slepian-Wolf results for source coding with source side information that are found in standard textbooks~\cite{coverandthomas}. Pradhan et al.~\cite{pradhan2003duality} showed that even with side information, the duality between source and channel coding continues to hold: this is particularly interesting given the well-known parallel between source coding and portfolio theory. Source coding with fixed-delay side information can be thought of as the dual problem to channel coding with feedback~\cite{martinian2004source}. 

Another related body of work looks at the uncertainty in the distortion function as opposed to uncertainty of the source. Uncertainty regarding the distortion function is a way to model uncertainty of meaning. In this vein, Martinian et al. quantified the value of side information regarding the distortion function used to evaluate the decoder~\cite{martinian2004dist,martinian2008source}. 


Moving beyond communication, the MMSE dimension looks at the value of side information in an estimation setting. In a system with only additive noise, Wu and Verdu show that a finite number of bits of side information regarding the additive noise cannot generically change the high-SNR scaling behavior of the MMSE~\cite{wu2011mmse}. 

Finally, the ideas in this paper were inspired by the change in doubling rate with side information in portfolio theory~\cite{kelly1956new}. Permuter et al.~\cite{permuter2008directed,permuter2011interpretations} showed that directed mutual information is the gain in the doubling rate for a gambler due to causal side information.

\subsubsection{Bit-level models}
Our core results were first obtained for simplified bit-level carry-free models for uncertain control systems~\cite{controlcapacity}. These models build on previous bit-level models developed in wireless network information theory such as the deterministic models developed by Avestimehr, Diggavi and Tse (ADT models)~\cite{tseDetmodel}, and lower-triangular or carry-free models developed by Niesen and Maddah-Ali~\cite{lowerTriangular}.  Our previous work on carry-free models~\cite{carryfree} generalized these models to understand communication with noncoherent fading. The results in~\cite{carryfree} form the basis of the bit-level models for the control of systems with uncertain parameters. Some of the results in this paper were previously summarized in~\cite{controlcapacity}, which appeared at ISIT 2015.

\subsection{Outline}
The next section, Section~\ref{sec:setup} introduces the problem formulation. Subsequently, Sections~\ref{sec:shannon},~\ref{sec:zeroerror},~\ref{sec:eta} introduce the Shannon notion, the zero-error notion and the general $\eta$-th moment notions of control capacity for the simple case of systems with only multiplicative noise on the actuation channel (no additive noise). These are discussed in Section~\ref{sec:discussion}, and extended to the case of additive noise in Section~\ref{sec:additive}. Section~\ref{sec:sideinfo} then discusses how this notion of control capacity can be used to understand the value of side-information in systems. The discussion of the visual intuition from the bit-level carry-free models that inspired the work is deferred to the Appendix~\ref{sec:carryfreeappendix}.

\section{Problem setup and definitions} \label{sec:setup}
We focus our attention on a scalar system without additive system disturbances or observation noise. Section~\ref{sec:additive} will then extend these ideas to systems with additive disturbances.

First we consider a system $\S$ with system gain $a = 1$. 
\begin{align}
\begin{split}
X[n+1] &= X[n] + B[n] U[n],\\
Y[n] &= X[n]. 
\label{eq:Ssystem}
\end{split}
\end{align}
We call the basic setup in~\eqref{eq:Ssystem} the actuation channel as
it captures the basic multiplicative bottleneck in the system. The
control signal $U[n]$ can be any causal function of $Y[i]$ for $0\leq
i \leq n$. The random variables $B[k], 0 \leq i \leq n,$ are
independent. $B[k] \sim p_{B[k]}$, and these distributions are known
to the controller beforehand. Let $X[0] = x_{0} \neq 0$ be an
arbitrary but fixed known nonzero initial state. In the case where the $B[k]$'s are distributed i.i.d.~according to a distribution $p_{B}(\cdot)$, we will parameterize the system by this distribution as $\S(p_{B})$.

Our objective is to use the control capacity of the actuation channel to understand the stabilizability of the related system ${\S}_{a}$ with non-trivial intrinsic growth $a > 1$, 
\begin{align}
\begin{split}
X_{a}[n+1] &= a \left(X_{a}[n] + B[n] U_{a}[n]\right)\\
Y_{a}[n] &= X_{a}[n].\label{eq:SAsystem}
\end{split}
\end{align}
We fix the initial condition of this system to be the same as system $\S$, i.e. $X_{a}[0] = X[0] = x_{0}$. The random variables $B[i]$ in the system are the same as those of system $\mathcal{S}$ defined in \eqref{eq:Ssystem}.

We will introduce a few different notions of stability that are intimately related to each other. For each of these notions of stability we will later define a notion of control capacity of the actuation channel, which will be the maximum growth rate that can be tolerated while maintaining stability. First, we consider a notion of $\eta$-th moment stability, as has been considered in the past in~\cite{anytime} and other related works.

\begin{restatable}{defn}{etastable}
Consider $\eta > 0$. A system (e.g.~\eqref{eq:sys1}, \eqref{eq:Ssystem}, and~\eqref{eq:SAsystem}) is said to be stabilizable in the $\eta$-th moment sense if there exists a causal control strategy $U[0], U[1], \cdots$, i.e. a strategy such that each $U[k]$ is a function of the observations $Y[0]$ to $Y[k]$, such that for some $M < \infty$,
\begin{align*}
\limsup_{n\to\infty}\bigE \left[|X[n]|^{\eta}\bigr.\right] < M.
\end{align*}
\end{restatable}
$\eta$-th moment stability captures a family of stability notions as
$\eta$ varies from zero to infinity, and there is clearly an order to
this notion of stability: a system is stabilizable in an $(\eta +
\epsilon)$-th moment sense $(\epsilon > 0)$, only if it is
stabilizable in an $\eta$-th moment sense. We are particularly
interested in the limits $\eta \to 0$ and $\eta \to \infty$, and
define two more notions of stability that capture those limits in an
interpretable way.

The first of these is a notion of logarithmic stability, a sense of stability that corresponds to that of the ``zeroth'' moment; if a system is not logarithmically stabilizable, it is not $\eta$-th moment stabilizable for any $\eta$.

\begin{restatable}{defn}{logstable}
A system (e.g.~\eqref{eq:sys1}, \eqref{eq:Ssystem}, and~\eqref{eq:SAsystem}) is said to be logarithmically stabilizable if there exists a causal control strategy $U[0], U[1], \cdots$ such that for some $M < \infty$,
\begin{align*}
\limsup_{n\to\infty}\bigE\left[\log |X[n]|\right] < M.
\end{align*}
\end{restatable}

We will see later in Thm.~\ref{thm:limits} and illustrated by examples in Section~\ref{sec:discussion} that the $\eta$-th moment control capacity converges to the ``logarithmic'' control capacity as $\eta \to 0$. We are motivated to call the notion of control capacity associated with logarithmic stability as ``Shannon'' control capacity because of this convergence. It is reminiscent of the convergence of R\`{e}nyi $\alpha$-entropy to Shannon entropy as $\alpha \to 1$. 

The next notion of stability corresponds to a worst-case or traditional robust control perspective. The name ``zero-error'' stability comes from an analogy to the notion of zero-error communication capacity in information theory, which is the rate at which a communication channel can transmit bits with probability $1$. Correspondingly, in control, we require that the system be bounded by a finite box with probability 1.

\begin{restatable}{defn}{zeroerrorstable}
A system (e.g.~\eqref{eq:sys1}, \eqref{eq:Ssystem}, and~\eqref{eq:SAsystem}) is said to be
stabilizable in the zero-error sense if there exists $M < \infty$, an
$N > 0$ and a causal control strategy $U[0], U[1], \cdots$ such that
for all $n > N$, 
\begin{align}
\P\left( |X[n]| < M \right) = 1.
\end{align}
\end{restatable}

The last definition for stability we introduce here is ostensibly the weakest notion of stability for a system, and builds on the concept of tightness of measure. This notion requires that all the probability mass of the system state remain bounded, even if we are not requiring any moment to remain bounded.
\begin{defn}
We say the controller can keep the system $\S$ in~\eqref{eq:Ssystem}
tight there exists a causal control strategy $U[0], U[1], \cdots$, such that for every $\epsilon > 0$, there exist $M_{\epsilon},
N_{\epsilon}<\infty$ such that for $n > N_{\epsilon}$,
$$\mathbb{P}(|X_{n}| < M_{\epsilon}) \geq 1 - \epsilon$$
\label{def:tight}
\end{defn}
Logarithmic stability implies that the system can be kept tight (by Markov's inequality), but the reverse is not necessarily true. A further connection between these notions will be developed in Thm.~\ref{thm:tight}, which gives a control counterpart to the strong-converse in the information theory of communication channels.

With these definitions, we move to understanding the corresponding notions of control capacity. We start with logarithmic stability and the corresponding notion of control capacity as the simplest case.

\section{``Shannon'' control capacity} \label{sec:shannon}
In this section, we introduce the Shannon notion of control capacity in Def.~\ref{def:shannoncc}. After this definition, we first state and prove Thm.~\ref{thm:shannoncapacitygrowth}, which connects the control capacity of the actuation channel $\S$ to the growth and decay of the system $\S_{a}$. Then, Thm.~\ref{thm:calcshannoncap} discusses explicitly calculating the capacity through a single-letterization whenever the random variables $B[n]$ are distributed i.i.d.. Theorem~\ref{thm:tight}, the last in this section, connects logarithmic stability to the tightness of a system.

We define ``Shannon'' control capacity in the context of the system,
$\S$, with no system gain $(a = 1)$. Once we understand the decay rate
of this simple system, it can be translated to understand the stabilizability of system $\S_{a}$.

\begin{defn} \label{def:shannoncc}
The Shannon control capacity of the system $\S$ in~\eqref{eq:Ssystem} is defined as 
\begin{align}
C_{\sh}(\mathcal{S}) = \liminf_{n\rightarrow\infty}~\underset{U[0], \cdots, U[n-1]}{\max}-\frac{1}{n}\bigE\left[\log \frac{|X[n]|}{|X[0]|}\right].
\end{align}
\end{defn}

\begin{thm}
The system $\S_{a}$ in~\eqref{eq:SAsystem} is stabilizable in a
logarithmic sense if the Shannon-control capacity of the associated
system $\S$ in~\eqref{eq:Ssystem} $C_{\sh}\left(\S\right)  > \log
|a|$. Conversely, if the system $\S_{a}$ is stabilizable in a
logarithmic sense, then $C_{\sh}\left(\S\right)  \geq \log |a|$. 
\label{thm:shannoncapacitygrowth}
\end{thm}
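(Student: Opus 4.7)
The plan is to exploit a direct isomorphism between systems $\S$ and $\S_a$. Concretely, defining $Z[n] = X_a[n]/a^n$ and $V[n] = U_a[n]/a^n$, a short computation shows $Z[n+1] = Z[n] + B[n]V[n]$, which is precisely the system $\S$ with initial state $Z[0] = x_0$. Since the controller knows $a$ and $n$ and observes $X_a[n] = a^n Z[n]$ exactly, causal strategies for $\S_a$ are in bijection with causal strategies for $\S$, and we obtain the key identity $\bigE[\log|X_a[n]|] = n\log|a| + \bigE[\log|Z[n]|]$. Everything now reduces to controlling $\bigE[\log|Z[n]|]$ in the unit-gain system.

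For the converse, suppose $\S_a$ is logarithmically stable under some strategy, so $\limsup_n \bigE[\log|X_a[n]|] < M$. Translating through the identity, the corresponding strategy on $\S$ satisfies $\bigE[\log|Z[n]|] < M - n\log|a|$ for all large $n$, hence $-\frac{1}{n}\bigE[\log(|Z[n]|/|Z[0]|)] \geq \log|a| - M/n - (\log|x_0|)/n$. Since the max over length-$n$ strategies is at least this specific strategy's value, taking $\liminf$ yields $C_{\sh}(\S) \geq \log|a|$.

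For achievability, assume $C_{\sh}(\S) > \log|a|$ and fix $\epsilon \in (0, C_{\sh}(\S) - \log|a|)$. By the $\liminf$ definition there exist a block length $n_0$ and a length-$n_0$ strategy $\phi$ with $\bigE[\log|Z[n_0]|/|Z[0]|] \leq -n_0(C_{\sh}(\S) - \epsilon)$. The crucial structural property is the scale invariance of $\S$: if $(Z[k], V[k])$ is a trajectory, so is $(cZ[k], cV[k])$ for any $c \neq 0$. Therefore $\phi$ can be canonically rescaled to any starting state $x$, yielding $\bigE[\log|Z[n_0]| \mid Z[0]=x] = \log|x| + \bigE_\phi[\log|Z[n_0]|]$. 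I then define an infinite-horizon strategy for $\S$ by block-concatenation: on each interval $[kn_0, (k+1)n_0)$ apply the rescaled $\phi$ starting from the observed state $Z[kn_0]$. Telescoping the per-block conditional bound gives $\bigE[\log|Z[kn_0]|] \leq \log|x_0| - kn_0(C_{\sh}(\S) - \epsilon)$, and passing back through the $\S_a$-identity yields $\bigE[\log|X_a[kn_0]|] \leq \log|x_0| - kn_0(C_{\sh}(\S) - \epsilon - \log|a|)$, which is bounded (in fact driven to $-\infty$) in $k$.

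The main obstacle is the last mile: the $\liminf$-of-max definition only gives good strategies for discrete block lengths, while logarithmic stability requires $\limsup_n \bigE[\log|X_a[n]|] < \infty$ over all $n$. To close this gap I would argue that within a single block the expected log can only change by a finite amount $M_\phi$ depending only on the fixed strategy $\phi$ and the distribution of $B$; adding $M_\phi + \log|x_0|$ to the block-boundary bound gives a uniform bound for all $n$. A clean way to see this is to note that each step of $\phi$ contributes an additive term $\bigE[\log|1 + B[k]V[k]/Z[k]|]$, and by the scale invariance this quantity is the same as in the base-strategy at $Z[0]=1$, where it is a fixed constant. The converse direction is essentially mechanical; the forward direction hinges on scale invariance plus the block-concatenation gadget, which is the critical step to verify carefully.
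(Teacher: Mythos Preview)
Your reduction $Z[n]=X_a[n]/a^n$ is exactly the content of the paper's Lemma~\ref{lem:whichcontrol}, and your converse argument matches the paper's essentially line for line. The difference is in the forward direction. The paper simply asserts that $C_{\sh}(\S)>\log|a|$ yields a \emph{single} infinite-horizon strategy for $\S$ with $-\tfrac{1}{n}\bigE[\log|X[n]/X[0]|]\geq\log|a|$ for all large $n$, and then transports it via the lemma; as you correctly flag, the $\liminf$-of-$\max$ definition does not directly hand you one strategy good at every horizon. The paper is implicitly leaning on the subsequent single-letterization (Thm.~\ref{thm:calcshannoncap}), which shows that in the i.i.d.\ case a fixed memoryless linear law $U[n]=d\,Y[n]$ already achieves the maximum at every $n$, so no stitching is needed. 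Your block-concatenation construction via scale invariance is a correct and more self-contained alternative that does not rely on the single-letter theorem, at the cost of the extra intra-block bookkeeping. For that last step, note you only need the finitely many constants $c_j:=\bigE_{\phi,\,Z[0]=1}[\log|Z[j]|]$ for $0\leq j<n_0$ to be bounded \emph{above}, which holds under the mild condition $\bigE[\log^+|B|]<\infty$ (each step contributes $\bigE[\log|1+B d_j|]$ for finitely many effective gains $d_j$); also route the event $Z[kn_0]=0$ to the zero control so the rescaling is well-defined.
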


The following Lemma, which shows that $\S$ and $\S_{a}$ can be made to track each other, is used to prove the theorem.

{\begin{restatable}{lem}{lemwhichcontrol}
Let $U[0], U[1], \cdots $ be a control strategy applied to $\S$
in~\eqref{eq:Ssystem}. Set $U_{a}[k] = a^{k}U[k]$ as the controls
applied to $\S_{a}$ in~\eqref{eq:SAsystem}. Then, $U_{a}[k]$ is computable as a function of observations $Y_{a}[0], \cdots, Y_{a}[k]$
for system $\S_{a}$, and for all $k\geq 0$ we have that $X_{a}[k] = a^{k}X[k]$. 

Similarly, if we start with $U_{a}[0], U_{a}[1], \cdots$ as a 
control strategy applied to $\S_{a}$, and we set $U[k] = a^{-k}
U_{a}[k]$ as the controls to be applied to $\S$ in \eqref{eq:Ssystem},
then each $U[k]$ is computable as a function of the observations
$Y[0], \cdots, Y[k]$, and for all $k\geq 0$ we have that $X[k] =
a^{-k}X_{a}[k]$.  
\label{lem:whichcontrol} \end{restatable}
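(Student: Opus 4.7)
The plan is to prove both directions by a simple induction on $k$, using the fact that the noise terms $B[k]$ are shared across the two systems and the observations are deterministic functions of the state (since there is no observation noise in $\S$ or $\S_a$).

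For the first direction, I would start by establishing the base case $X_a[0] = x_0 = a^0 X[0]$ by construction. For the inductive step, assuming $X_a[k] = a^k X[k]$, I would compute
\begin{align*}
X_a[k+1] &= a\bigl(X_a[k] + B[k] U_a[k]\bigr) \\
 &= a\bigl(a^k X[k] + B[k] \cdot a^k U[k]\bigr) \\
 &= a^{k+1}\bigl(X[k] + B[k] U[k]\bigr) = a^{k+1} X[k+1],
\end{align*}
which closes the induction on the state relation. For the causality claim, I would observe that $Y_a[j] = X_a[j] = a^j X[j] = a^j Y[j]$ for $0 \le j \le k$, so the controller for $\S_a$ can recover $Y[0], \ldots, Y[k]$ by the deterministic rescaling $Y[j] = a^{-j} Y_a[j]$. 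Hence the prescribed control $U_a[k] = a^k U[k]$ is computable from $Y_a[0], \ldots, Y_a[k]$, since $U[k]$ is by assumption a causal function of $Y[0], \ldots, Y[k]$.

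The converse direction is entirely symmetric: starting from a causal strategy $U_a[k]$ for $\S_a$ and setting $U[k] = a^{-k} U_a[k]$, the same induction (multiplying through by $a^{-(k+1)}$ instead of $a^{k+1}$) shows $X[k] = a^{-k} X_a[k]$, and $U[k]$ is a valid causal function of $Y[0], \ldots, Y[k]$ since these determine $Y_a[0], \ldots, Y_a[k]$ through the deterministic rescaling $Y_a[j] = a^{j} Y[j]$.

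There is no real obstacle here; the lemma is essentially a change of variables that exploits the absence of additive noise and observation noise in the definitions of $\S$ and $\S_a$, together with the fact that the two systems share the same $B[k]$ sequence and the same initial condition. The only care needed is to make sure the bijection between the information available to the two controllers at each time step is made explicit, so that ``causal function of $Y_a$'' and ``causal function of $Y$'' really are interchangeable under the rescaling.
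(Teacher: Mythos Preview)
Your proposal is correct and follows essentially the same approach as the paper: both argue by induction on $k$, with the identical computation for the inductive step and the same observation that $Y_a[j] = a^j Y[j]$ makes the causality claim immediate. The paper's proof is slightly terser (it just says ``the reverse direction follows by a similar argument''), but the content is the same.
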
}

\begin{proof}
The proof is a consequence of linearity and proceeds by
induction. $X[0] = X_{a}[0] = x_{0}$ serves as the base case. Further,
$U_{a}[0]$ is computable as a function of $Y_{a}[0]$, since $Y_{a}[0] = x_{0}$. Now, using the controls $U_{a}[k]$ and applying the induction hypothesis gives:
\begin{align*}
X_{a}[k+1] &= a (X_{a}[k] + B[k]U_{a}[k])\\
&= a  (a^{k}X[k] + a^{k}B[k]U[k]). \\
&= a^{k+1}(X[k] + B[k]U[k]) = a^{k+1} X[k+1].
\end{align*}
Furthermore, since $Y_{a}[j] = a^{j}Y[j]$ for all $j \leq k$, we know
that $U_{a}[k]$ is function of $Y_{a}[0]$ to $Y_{a}[k]$. 
The reverse direction follows by a similar argument. 
\end{proof}

\begin{proof}[Proof of Thm.~\ref{thm:shannoncapacitygrowth}]
We first use Lemma~\ref{lem:whichcontrol} to construct an achievable scheme and show sufficiency of the Shannon control capacity. Since $C_{\sh}\left(\S\right) > \log |a|$, we know that there exists a control strategy $U[0], U[1], \cdots$ and an $N \geq 0$ such that for all $n > N$,
\begin{align*}
-\frac{1}{n}\bigE\left[\log\frac{|X[n]|}{|X[0]|} \right] \geq \log |a|.
\end{align*}
Since $|a|>1$, this can be re-written as:
\[
\bigE\left[\log\frac{|a^{n}X[n]|}{|X[0]|} \right]  \leq 0.
\]

Now, choose $U_{a}[k] = a^{k}U[k]$. Then we know from Lemma~\ref{lem:whichcontrol} that $a^{n} X[n] = X_{a}[n]$, and we can write:
\[  
\bigE\left[\log|X_{a}[n]| \right]  \leq \bigE\left[\log|x_0| \right] < \infty.
\]
Hence $\S_{a}$ is logarithmically stabilizable.

Now to show the necessity of Shannon control capacity, assume there
exists an $N, M$ and control law $U_{a}[0], U_{a}[1] \cdots $ such that $\bigE\left[\log|X_{a}[n]| \right] < M$ for all $n > N$. Hence
\[
\bigE\left[\log\frac{|X_{a}[n]|}{x_{0}} \right] < M -  \bigE\left[\log|x_{0}| \right].
\]
Applying Lemma~\ref{lem:whichcontrol} ($|X_{a}[n]| = |a^{n} X[n]|$), and dividing by $n$ gives:
\[
\log|a| + \frac{1}{n}\bigE\left[\log\frac{|X[n]|}{|x_{0}|} \right] < \frac{M -  \bigE\left[\log|x_{0}| \right]}{n},
\]
or 
\[
-\frac{1}{n}\bigE\left[\log\frac{|X[n]|}{|x_{0}|} \right] > \log |a| - \frac{M -  \bigE\left[\log|x_0| \right]}{n}.
\]
Thus, taking a limit, since $\bigE\left[\log|x_0|\right]$ is a constant, 
\[
\liminf_{n\rightarrow\infty} - \frac{1}{n}\bigE\left[\log\frac{|X[n]|}{|X[0]|} \right] \geq \log |a|, 
\]
which implies that the Shannon control capacity $C_{\sh}(\S)\geq \log|a|$.
\end{proof}

 Thm.~\ref{thm:shannoncapacitygrowth} provides us with an operational
 meaning for Shannon control capacity. However, this notion of control capacity is only valuable if we can actually compute it. The definition of control capacity involves an optimization over an infinite sequence of potential control laws, which could potentially be hard to evaluate. However, Thm.~\ref{thm:calcshannoncap} shows that this reduces to a single-letter optimization in the case where the $B[n]$'s are distributed i.i.d.~with distribution $p_{B}$. We parameterize the system with this distribution to indicate this, and denote it by $\S(p_{B})$.

Before we get to the main result, we take care of the trivial case. 

\begin{thm}
The Shannon control capacity is $\infty$ for the system $\S(p_{B})$
in~\eqref{eq:Ssystem}, with the~$B[n]$'s distributed i.i.d.~according
to $p_{B}$ if the $p_{B}$ has an atom not at $0$.
\end{thm}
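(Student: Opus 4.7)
The plan is to exhibit a single causal control law whose action makes $\bigE\left[\log(|X[n]|/|X[0]|)\right]=-\infty$ for every $n\geq 1$. Since $C_{\sh}$ is defined as a $\liminf$ of a maximum over strategies, producing even one such law forces $C_{\sh}(\S(p_B))=+\infty$.

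Let $b_0\neq 0$ be the location of the atom of $p_B$ and put $p:=\P(B=b_0)>0$. The natural candidate is the ``shoot the atom'' law $U[k]=-X[k]/b_0$ (with the convention $U[k]=0$ once $X[k]=0$), which is causal because $Y[k]=X[k]$ in system $\S$. Under this law, $X[k+1]=X[k]+B[k]U[k]=X[k]\bigl(1-B[k]/b_0\bigr)$, so whenever $B[k]=b_0$ the state is exactly zeroed, and once $X[k]=0$ it remains there forever. Since the $B[k]$'s are i.i.d., $\P(X[n]=0)=1-(1-p)^n>0$ for every $n\geq 1$, which means $\log|X[n]|=-\infty$ on a set of positive measure.

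Next, I would decompose $\bigE\left[\log(|X[n]|/|X[0]|)\right]$ into its contributions on $\{X[n]=0\}$ and its complement: the first piece contributes $-\infty$. Provided $\bigE\bigl[(\log|1-B/b_0|)^+\bigr]<\infty$, independence of the $B[k]$'s gives $\bigE\bigl[(\log(|X[n]|/|X[0]|))^+\bigr]\leq n\,\bigE\bigl[(\log|1-B/b_0|)^+\bigr]<\infty$, so the two pieces cannot both be infinite; the expectation is unambiguously $-\infty$, and therefore $-\tfrac{1}{n}\bigE\left[\log(|X[n]|/|X[0]|)\right]=+\infty$ for every $n\geq 1$. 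Taking the $\liminf$ in the definition of $C_{\sh}$ then yields $C_{\sh}(\S(p_B))=+\infty$.

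The main (and only) delicate point is the integrability assumption $\bigE\bigl[(\log|1-B/b_0|)^+\bigr]<\infty$, which could fail for pathologically heavy-tailed $p_B$. To handle this corner case I would gate the strategy: apply $U[k]=-X[k]/b_0$ only while $|X[k]|$ stays below a preselected finite threshold and set $U[k]=0$ otherwise. The gating still guarantees probability at least $p$ of hitting $0$ at the first step, while bounding the positive tail of $|X[n]|$ by construction, which recovers the $-\infty$ conclusion and completes the argument.
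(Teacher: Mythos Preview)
Your approach is essentially identical to the paper's: both choose the ``bet on the atom'' control law $U[k]=-X[k]/b_0$, observe that $X[n]=0$ with positive probability for every $n\geq 1$, and conclude that $\bigE[\log(|X[n]|/|X[0]|)]=-\infty$. The paper's proof is in fact terser than yours and does not pause over the integrability of the positive part at all; your decomposition into positive and negative parts, and the observation that $\bigE\bigl[(\log|1-B/b_0|)^+\bigr]<\infty$ is needed for the expectation to be unambiguously $-\infty$, is additional care beyond what the paper provides.

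That said, your gating fix for the pathological heavy-tailed case does not work as written. Gating on $|X[k]|$ staying below a threshold only freezes the dynamics \emph{after} the threshold is crossed; the very first step already produces $X[1]=x_0(1-B[0]/b_0)$, whose positive log-tail is exactly the quantity you are trying to control, and setting $U[k]=0$ thereafter just holds $|X[n]|=|X[1]|$. So the gating does not bound $\bigE\bigl[(\log|X[n]/x_0|)^+\bigr]$. A cleaner repair is to approach the atom rather than hit it: for each target level $M$, pick $d$ close to (but not equal to) $-1/b_0$ so that $-\bigE[\log|1+Bd|]$ is finite but exceeds $M$, and apply $U[k]=d\,X[k]$; this shows the single-letter supremum, and hence $C_{\sh}$, is $+\infty$ without ever producing an ill-defined expectation. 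The paper does not address this corner case at all, so your instinct to flag it is good even if the patch needs adjusting.
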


\begin{proof}
Let $p_{B}$ have an atom at $\beta\neq 0$. Then consider the strategy $U[k] = -\frac{1}{\beta} X[k]$. In this case, $\frac{X[n]}{X[0]}$ can be $0$ with positive probability, and hence the negative log can be infinite, which implies that the Shannon control capacity is infinite. This captures the idea that betting on the atom will eventually pay off if we wait long enough. 
\end{proof}

\begin{thm}
The Shannon control capacity of the system $\S(p_{B})$
in~\eqref{eq:Ssystem}, where $p_{B}$ is a distribution with no atoms
(except possibly at zero), is given as:
\begin{align}
C_{\sh}(\S(p_{B})) =
  \underset{d\in\mathbb{R}}{\max}~\bigE\left[-\log|1 + B \cdot
  d|\right], \label{eq:computingShannonCapacity}
\end{align}
\noindent where $B \sim p_{B}$.  \label{thm:calcshannoncap}
\end{thm}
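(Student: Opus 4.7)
\medskip

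\noindent\textbf{Proof plan for Theorem~\ref{thm:calcshannoncap}.}

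My plan is to prove the equality in two directions: achievability by exhibiting a simple stationary memoryless linear control strategy, and a converse by reducing any causal control to a ``ratio'' random variable $D[k] = U[k]/X[k]$ that is independent of $B[k]$.

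For the achievability direction, let $d^{\star}$ attain (or approximately attain) the maximum in~\eqref{eq:computingShannonCapacity}. I would choose the stationary linear control $U[k] = d^{\star} X[k]$. Under this strategy, the recursion in~\eqref{eq:Ssystem} becomes $X[k+1] = X[k] (1 + B[k] d^{\star})$, so by induction $X[n]/X[0] = \prod_{k=0}^{n-1}(1 + B[k] d^{\star})$. Taking logs, using linearity of expectation, and using that the $B[k]$'s are i.i.d.\ with law $p_{B}$,
\[
-\frac{1}{n}\bigE\left[\log\frac{|X[n]|}{|X[0]|}\right] = \bigE\!\left[-\log|1 + B \cdot d^{\star}|\right],
\]
which matches the right-hand side of~\eqref{eq:computingShannonCapacity}. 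Taking a liminf over $n$ therefore yields $C_{\sh}(\S(p_{B})) \geq \max_{d} \bigE[-\log|1+Bd|]$.

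The converse is where the work lies. The key structural observation is that, provided $X[k] \neq 0$, any causal $U[k]$ can be written as $U[k] = D[k] \cdot X[k]$, where $D[k] = U[k]/X[k]$ is still a (measurable) function of $Y[0],\ldots,Y[k] = X[0],\ldots,X[k]$, hence of $B[0],\ldots,B[k-1]$, hence independent of $B[k]$. The recursion becomes $X[k+1] = X[k](1 + B[k] D[k])$, and telescoping gives
\[
-\log\frac{|X[n]|}{|X[0]|} = -\sum_{k=0}^{n-1} \log|1 + B[k] D[k]|.
\]
Conditioning on $D[k]$ and using its independence from $B[k]$,
\[
\bigE\!\left[-\log|1 + B[k] D[k]|\right] = \bigE_{D[k]}\!\left[\bigE_{B}\!\left[-\log|1 + B \cdot D[k]|\right]\right] \leq \max_{d\in\mathbb{R}} \bigE[-\log|1 + B d|],
\]
where the inner expectation is taken with respect to an independent copy $B \sim p_{B}$. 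Averaging over $k$ and taking a liminf gives $C_{\sh}(\S(p_{B})) \leq \max_{d} \bigE[-\log|1+Bd|]$.

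The one technical step I need to discharge carefully is the well-definedness of $D[k]$, which requires $X[k] \neq 0$ almost surely for every $k$. I would prove this by induction: $X[0] = x_{0} \neq 0$, and given $X[k] \neq 0$ a.s., $X[k+1] = 0$ would force either $U[k] = 0$ (in which case $X[k+1] = X[k] \neq 0$, a contradiction) or $B[k] = -X[k]/U[k]$; the latter has probability zero because $p_{B}$ is atomless away from zero and $-X[k]/U[k] \neq 0$. The case where $p_{B}$ has an atom at zero is harmless: when $B[k] = 0$ one has $X[k+1] = X[k]$ and the $k$-th summand $\log|1 + B[k] D[k]|$ vanishes, so the argument above goes through verbatim. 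The main obstacle I anticipate is making the ``write $U[k]$ as $D[k] X[k]$'' substitution rigorous on the measure-zero event where $X[k]=0$; handling this exceptional set is the one place where the atomless hypothesis on $p_{B}$ is used.
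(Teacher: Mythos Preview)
Your proposal is correct and follows essentially the same approach as the paper: the achievability via the stationary linear law $U[k]=d^\star X[k]$ is identical, and your converse---rewriting $U[k]=D[k]X[k]$ with $D[k]$ measurable with respect to $B[0],\ldots,B[k-1]$ and hence independent of $B[k]$, then bounding each telescoped term by the one-step optimum---is exactly the content of the paper's Lemma~\ref{lem:shcclem} combined with its inductive splitting~\eqref{eq:inductionsplit}. Your inductive argument that $X[k]\neq 0$ a.s.\ is precisely the paper's Lemma~\ref{lem:nozero}; the only cosmetic difference is that you telescope and bound all $n$ terms at once, whereas the paper peels off the last ratio and invokes an induction hypothesis on the first $n-1$ steps.
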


The proof of Theorem~\ref{thm:calcshannoncap} relies on the following lemma:
\begin{lem}
Suppose the system state at time $n$ is $x[n] \in \mathbb{R}, x[n]
\neq 0$. Then, for the system $\mathcal{S}$ in~\eqref{eq:Ssystem}
define the one-step Shannon control capacity
\begin{align*}
C_{1,\sh}(x[n]) =\underset{U[n]}{\max}~ - \bigE\left[\log \frac{|X[n+1]|}{|x[n]|}\right].
\end{align*}
where $U[n]$ is any function of $x[n]$.
\noindent Here, the expectation is taken over the random realization of $B[n] \sim p_{B}$. Hence, $X[n+1]$ is a random variable even though $X[n] =x[n]$ has been realized and fixed. Then, $C_{1,\sh}(x[n])$ does not depend on $x[n]$ or $n$, and is given by
\begin{align*}
C_{1,\sh}(x[n]) = C_{1,\sh} = \underset{d\in\mathbb{R}}{\max}~\bigE\left[ -\log |1 + B \cdot d| \right],
\end{align*}
where $B \sim p_{B}$.
Hence, there exists a scalar $d$ so that $U[n] = d \cdot Y[n] = d \cdot x[n]$ such that 
\begin{align*}
-\bigE\left[\log \biggr|\frac {X[n+1]}{x[n]}\biggr|\right] = C_{1,\sh}.
\end{align*}
\label{lem:shcclem}
\end{lem}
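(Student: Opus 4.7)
The plan is a direct change of variables exploiting the scale invariance of the multiplicative actuation channel. Using the dynamics in~\eqref{eq:Ssystem} conditioned on $X[n] = x[n]$, we have $X[n+1] = x[n] + B[n]U[n]$, and since $x[n] \neq 0$ we can factor
$$\frac{X[n+1]}{x[n]} = 1 + B[n]\cdot\frac{U[n]}{x[n]}.$$
Taking absolute values and the negative log, then the expectation over $B[n] \sim p_B$ (with $x[n]$ and $U[n]$ deterministic), gives
$$-\bigE\!\left[\log\frac{|X[n+1]|}{|x[n]|}\right] \;=\; -\bigE\!\left[\log\left|1 + B\cdot\tfrac{U[n]}{x[n]}\right|\right].$$

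Next I would argue that the inner optimization over $U[n]$ is equivalent to an optimization over a single real parameter $d := U[n]/x[n]$. Since $U[n]$, given $x[n]$, is simply a real number the controller picks (any function of $x[n]$), and $x[n]$ is a fixed nonzero real, the map $U[n] \mapsto d$ is a bijection on $\mathbb{R}$. Therefore
$$C_{1,\sh}(x[n]) \;=\; \max_{U[n]\in\mathbb{R}}\,-\bigE\!\left[\log\left|1 + B\cdot\tfrac{U[n]}{x[n]}\right|\right] \;=\; \max_{d\in\mathbb{R}}\,-\bigE\!\left[\log|1 + B\cdot d|\right],$$
which manifestly has no dependence on $x[n]$ or on $n$. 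The achievability clause---that some optimizer $d^{*}$ realizes the max via $U[n] = d^{*}\cdot x[n] = d^{*}\cdot Y[n]$---is then immediate from this reparameterization, using $Y[n] = X[n] = x[n]$ in the noiseless setting of $\mathcal{S}$.

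I do not anticipate a substantive obstacle, since the content is really a one-line change of variables driven by the scale invariance of the multiplicative model. The only point to be careful about is that the supremum is attained (the statement writes $\max$) and that the expectation is well defined. Under the no-atoms hypothesis inherited from Thm.~\ref{thm:calcshannoncap}, for any $d\neq 0$ the event $\{1 + Bd = 0\} = \{B = -1/d\}$ has probability zero, and the atom that may exist at $B=0$ contributes $\log 1 = 0$; hence $\log|1+Bd|$ is integrable or diverges to $-\infty$, in which case that $d$ is simply not a maximizer. Beyond this bookkeeping, the lemma is just the scale-invariance observation $|X[n+1]/x[n]| = |1+Bd|$ repackaged.
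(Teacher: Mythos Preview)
Your proposal is correct and matches the paper's own proof essentially line for line: factor $X[n+1]/x[n] = 1 + B[n]\cdot U[n]/x[n]$, then reparameterize $d := U[n]/x[n]$ and observe the objective no longer depends on $x[n]$ or $n$. The paper does not even include your extra bookkeeping about integrability and attainment of the max, so if anything you have been slightly more careful than the original.
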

\begin{proof}
We can write
\begin{align*}
C_{1,\sh}(x[n]) &=\underset{U[n]}{\max} - \bigE\left[\log \frac{|X[n+1]|}{|x[n]|} \right]\\
&=\underset{U[n]}{\max}~ - \bigE\left[\log \biggr|1 + B[n] \cdot \frac{U[n]}{x[n]}\biggr|\right].
\end{align*}
Now, since $U[n]$ is a function of $x[n]$, we can replace $\frac{U[n]}{x[n]}$ by the parameter $d \in \mathbb{R}$ and optimize over that instead. Hence,
$C_{1,\sh}=\underset{d}{\max}~  \bigE\left[-\log \left|1 + B[n] \cdot d\right|\right].$ The proportionality constant $d$ in the lemma statement is just the $d$ we have found in the optimization.
\end{proof}

\begin{lem}
If the distribution $p_{B}$ has no atoms (except possibly at zero), then the distribution of $X[n]$, $p_{X[n]}$, cannot have an atom at $0$. \label{lem:nozero}
\end{lem}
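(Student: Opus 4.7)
The plan is to prove the lemma by induction on $n$, with the inductive step using the independence of $B[n]$ from the past to reduce the non-atom condition at $0$ on $X[n+1]$ to the non-atom condition at a nonzero real on $p_B$.

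For the base case, $X[0] = x_0 \neq 0$ is a fixed deterministic constant, so $p_{X[0]}$ is a point mass away from $0$ and hence has no atom at $0$. For the inductive step, assume $P(X[n] = 0) = 0$, and recall that $X[n+1] = X[n] + B[n]U[n]$. Because $U[n]$ is a causal function of $Y[0], \ldots, Y[n] = X[0], \ldots, X[n]$, and $B[n]$ is independent of $B[0], \ldots, B[n-1]$, the random variable $B[n]$ is independent of the pair $(X[n], U[n])$, and its conditional distribution given $(X[n], U[n])$ is still $p_B$.

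I would then condition on $(X[n], U[n])$ and split into cases. If $U[n] = 0$, then $X[n+1] = X[n]$, so $P(X[n+1] = 0 \mid X[n], U[n] = 0) = \mathbf{1}\{X[n] = 0\}$, which is $0$ almost surely by the inductive hypothesis. If $U[n] \neq 0$, then $X[n+1] = 0$ is equivalent to $B[n] = -X[n]/U[n]$; conditioned on $X[n] \neq 0$ and $U[n] \neq 0$, the target value $-X[n]/U[n]$ is a fixed nonzero real number, and since $p_B$ has no atoms except possibly at $0$, the conditional probability $P(B[n] = -X[n]/U[n] \mid X[n], U[n])$ equals $0$. Integrating out over $(X[n], U[n])$ and using that $X[n] = 0$ occurs with probability zero by hypothesis, we conclude $P(X[n+1] = 0) = 0$, completing the induction.

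I do not expect a real obstacle here; the only subtlety worth spelling out is the independence argument that justifies treating $B[n]$ as fresh with distribution $p_B$ even after conditioning on $(X[n], U[n])$. This follows cleanly because the $B[k]$'s are independent and $U[n]$ is measurable with respect to $\sigma(B[0], \ldots, B[n-1])$ (through its dependence on $X[0], \ldots, X[n]$).
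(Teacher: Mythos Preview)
Your proposal is correct and follows essentially the same inductive argument as the paper: base case $X[0]=x_0\neq 0$, then split on whether $U[n]=0$ and use that $B[n]$ hitting the nonzero value $-X[n]/U[n]$ has probability zero. If anything, you are more careful than the paper in explicitly justifying the conditioning on $(X[n],U[n])$ and the independence of $B[n]$ from this pair; the paper's proof simply treats $X[n]/U[n]$ as ``a constant'' without spelling out that step.
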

\begin{proof}
We use induction to show this. Since $X[0] = x_{0} \neq 0$, it serves
as the base case. Assume the statement is true for $X[n]$,
i.e.~$p_{X[n]}$ has no atoms at zero. First, consider the case $U[n]=
0$ is applied. Then, $X[n+1] = X[n]$, and $p_{X[n+1]}$ cannot have an
atom at $0$ by the induction hypothesis.  

If $U[n] \neq 0$ is applied, then:
\begin{align*}
\P(X[n+1] = 0 ) =&~\P\left(X[n] + B[n] U[n] = 0 \right)\\
=&~\P\left(B[n] = -\frac{X[n]}{U[n]} \right).
\end{align*}
But this probability is equal to zero since $p_{B}$ has no atoms
except at zero and $\frac{X[n]}{U[n]}$ is a constant that is not equal
to zero by the induction hypothesis.
\end{proof}

This brings us to the proof of Thm.~\ref{thm:calcshannoncap}.

\begin{proof}[Proof of Thm.~\ref{thm:calcshannoncap}]

\textbf{Achievability:} We know from Lemma~\ref{lem:shcclem} that
there exists $U[n] = d Y[n]$ such that 
\begin{align}
-\bigE\left[\log \biggr|\frac{X[n+1]}{x[n]}\biggr|\right] = \underset{d\in\mathbb{R}}{\max}~\bigE\left[ -\log|1 + B[n] \cdot d| \right]
\end{align}
\noindent for every $x[n]$. Starting with $X[0] = x_{0}$ we apply the sequence of controls generated by applying Lemma~\ref{lem:shcclem} at each time step $0\leq k \leq n$. 

\noindent We can rewrite the expression for control capacity by using
a telescoping sum (division by $X[k]$ is permitted due to Lemma~\ref{lem:nozero}) and linearity of expectation as:
\begin{align}
-\frac{1}{n}\bigE\left[\log \frac{|X[n]|}{|X[0]|}\right] = -\frac{1}{n}\sum_{i=0}^{n-1} \bigE\left[\log \frac{|X[k+1]|}{|X[k]|}\right].\label{eq:linexp}
\end{align}

Plugging in the control law $U[n] = d Y[n]$ tells us
that \eqref{eq:linexp} is equal to:
\begin{align} 
-\frac{1}{n}\sum_{i=0}^{n-1} \bigE\left[\log \frac{|X[k](1+d
  B[k]|}{|X[k]|}\right] = \frac{1}{n}\sum_{i=0}^{n-1} \bigE\left[- \log
  |1+d B[k]| \right].
\end{align}

Since the $B[k]$ are i.i.d., the terms inside are identical and by
the choice of $d$ in Lemma~\ref{lem:shcclem} we have the desired
result. 

\noindent \textbf{Converse:}
We will prove this using induction. 
Recall that we are allowed to divide by $X[k]$ below due to the Lemma~\ref{lem:nozero} above. 
We can bound the term of interest as:
\begin{align}
\underset{U[0], \cdots, U[n-1]}{\max}~-\frac{1}{n}\bigE\left[\log \frac{|X[n]|}{|X[0]|}\right] 
\leq \underset{U[0], \cdots,U[n-1]}{\max}~-\frac{1}{n}\bigE\left[\log \frac{|X[n]|}{|X[n-1]|}\right] + \underset{U[0], \cdots, U[n-2]}{\max}~-\frac{1}{n}\bigE\left[\log \frac{|X[n-1]|}{|X[0]|}\right]. \label{eq:inductionsplit}
\end{align}
We condition the first term on the RHS in~\eqref{eq:inductionsplit} on $X[n-1]$ so that the inner expectation is over $B[n-1]$, and the outer is over $X[n-1]$. This gives:
\[
\bigE\left[\log \frac{|X[n]|}{|X[n-1]|}\right] = \bigE\left[\bigE\left[\log \frac{|X[n]|}{|X[n-1]|}~\biggr|~X[n-1]\right]\right]. 
\]
We can now bound this by looking at the maximizing realization of $X[n-1] = x[n-1]$. Hence,
\[
\underset{U[0], \cdots,U[n-1]}{\max}~ -\bigE\left[\bigE\left[\log
    \frac{|X[n]|}{|X[n-1]|}~\bigr|~X[n-1]\right]\right] \leq
\max_{x[n-1]\neq 0}  \underset{U[0], \cdots,U[n-1]}{\max}~ \bigE\left[- \log \frac{|X[n]|}{|x[n-1]|} \right]. 
\]
Clearly, on the RHS, the control laws before time $n-1$ no longer
matter and so by Lemma~\ref{lem:shcclem}, we can write
\[
\max_{x[n-1]\neq 0} \max_{U[n-1]} \bigE\left[- \log \frac{|X[n]|}{|x[n-1]|} \right] = \underset{d}{\max}~ \bigE\left[- \log |1 + B[n-1] \cdot d|\right].
\]
Now, using the induction hypothesis for the second term in~\eqref{eq:inductionsplit} gives the result. 
\end{proof}

\begin{remark}
As a consequence of the proof we see that linear memoryless strategies are optimal for calculating the Shannon control capacity. 
\end{remark}

Theorem~\ref{thm:calcshannoncap} allows us to relate this notion of Shannon control capacity to the tightness of systems as in Definition~\ref{def:tight} through a strong converse style result. To show this we first prove a lemma that bounds relevant random variables. 

{\begin{restatable}{lem}{lemvarbound}
Consider a random variable $B$ with a bounded density $p_{B}$ (except
possibly with an atom at $B=0$) such that  there exist 
$\gamma, \xi \geq 1$ so that the density $p_B(b) \leq \gamma \min\left(1,
\frac{\xi}{|b|}\right)$. Then, there exists a universal exponential tail
bound for the left tail of the random variable $Z_d = \ln|1 + Bd|$
for all possible values of $d$, i.e.~there exists $K_{B}$ such that for all $
t>0$, we know $\P(Z_d < -t) \leq K_{B} e^{-t}$.

If $B$ furthermore has finite first and second moments $\bigE\left[|B|\right]$
and $\bigE\left[|B|^2\right]$, then there also exists a universal upper bound 
$\overline{\sigma}_{B}^2$ that bounds the variance
$\mbox{Var}\left[Z_d\right] \leq \overline{\sigma}_{B}^2$.

\label{lem:varbound} 
\end{restatable}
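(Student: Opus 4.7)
The plan is to prove the two claims in sequence, with the exponential tail estimate feeding directly into the variance bound. The common analytical core is the observation that $\{|1+Bd|<e^{-t}\}$ equals $\{B\in I_{t,d}\}$, where $I_{t,d}$ is an interval of length $2e^{-t}/|d|$ centered at $-1/d$. Everything reduces to integrating $p_B$ over $I_{t,d}$ while exploiting the two-part density bound $p_B(b)\le\gamma\min(1,\xi/|b|)$.

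For the exponential tail bound, I would split on $|d|\ge 1$ versus $|d|<1$. When $|d|\ge 1$ the flat part of the density bound combined with the interval length gives $\P(Z_d<-t)\le 2\gamma e^{-t}/|d|\le 2\gamma e^{-t}$. When $|d|<1$ and $t\ge\ln 2$, every point of $I_{t,d}$ has $|b|\ge (1-e^{-t})/|d|\ge 1/(2|d|)$, so the $\gamma\xi/|b|$ part of the density bound is at most $2\gamma\xi|d|$ on $I_{t,d}$, leading to $\P(Z_d<-t)\le 4\gamma\xi e^{-t}$. The remaining regime $t<\ln 2$ is absorbed by the trivial bound $\P\le 1\le 2e^{-t}$. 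These pieces combine into a single universal $K_B$ of order $\gamma\xi$.

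For the variance bound, I exploit $\mathrm{Var}[Z_d]\le\bigE[(Z_d-c_d)^2]$ for any deterministic shift $c_d$, chosen to make $Z_d-c_d$ uniformly tame: $c_d=0$ when $|d|\le 1$ and $c_d=\ln|d|$ when $|d|>1$. In the large-$|d|$ case $Z_d-c_d=\ln|B+\alpha|$ with $\alpha=1/d$ and $|\alpha|<1$, so both regimes reduce to bounding $\bigE[(\ln|B+\alpha|)^2]$ with $|\alpha|\le 1$. A layer-cake representation splits this expectation into a negative-log piece bounded by $\int_0^\infty 2t\,\P(\ln|B+\alpha|<-t)\,dt$, which is a universal multiple of $\gamma$ (or of $K_B$) by the first part applied to the shifted variable, and a positive-log piece bounded by $\bigE[\ln^2(1+|B|)]\le \bigE[B^2]$ using $\ln(1+x)\le x$. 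Putting these together yields $\overline\sigma_B^2$ on the order of $\bigE[B^2]+K_B+\gamma$.

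The main obstacle is the $|d|>1$ regime of the variance bound: without subtracting $\ln|d|$, the right-tail piece $\bigE[Z_d^2\mathbf{1}\{Z_d>0\}]$ grows like $\log^2|d|$ and no uniform estimate can exist. Recognizing that variance is shift-invariant is the conceptual step that collapses the large-$|d|$ case onto exactly the same footing as the small-$|d|$ case (a shifted density of $B+\alpha$ with $|\alpha|\le 1$), after which only the routine exponential-integral bookkeeping remains.
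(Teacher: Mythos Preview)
Your proposal is correct and follows essentially the same route as the paper: the interval interpretation of $\{|1+Bd|<e^{-t}\}$ together with the density bound gives the exponential tail, and for the variance you split on $|d|\le 1$ versus $|d|>1$, using shift-invariance of the variance (subtracting $\ln|d|$) in the latter case, then bound the negative part via the tail estimate and the positive part via $\ln x\le x-1$ and $\bigE[|B|^2]$. Your reduction of both variance cases to a single $\bigE[(\ln|B+\alpha|)^2]$ with $|\alpha|\le 1$ is slightly more unified than the paper's parallel-but-separate treatment of the two regimes, but the underlying arguments are the same.
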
}

\begin{proof}
  We first establish the tail bound on $Z_{d}$. Let $Z_d = Z_d^+ + Z_d^-$ where $Z_d^+ = Z_d$ whenever $Z_d \geq
  0$ and $Z_d^- = Z_d$ whenever $Z_d < 0$. Consequently $Z_d^+ Z_d^- =
  0$ and as a result, the second moment $\bigE[Z_d^2] =
  \bigE[(Z_d^+)^2] + \bigE[(Z_d^-)^2]$.

  The event $Z_d^{-} \leq -t$ is the same as $|1+Bd|
  \leq e^{-t}$. This implies that $B$ must belong to an interval of 
  length $\frac{2}{|d|}e^{-t}$, and that if $d>0$, then $\frac{1}{d}
  (1 - e^{-t}) \leq B \leq \frac{1}{d}(1 + e^{-t})$. If $d < 0$, then $\frac{1}{d}
  (1 - e^{-t}) \geq B \geq \frac{1}{d}(1 + e^{-t})$. Notice that if $t
  > 
  0$, such an interval cannot include $B = 0$. Now consider $t \geq \ln 2$ so that all points in the interval are at a distance of at least $\frac{1}{2|d|}$ from the origin. 

  Because of this, we know from the tail bound on the density $p_{B}$, we know that the probability
  \begin{align}
   \P(Z_d^- \leq t) &\leq \frac{2}{|d|}e^{-t}  \gamma \min\left(1,
   {2 |d|\xi}\right) \nonumber \\
   &\leq \gamma 4 \xi e^{-t}, \label{eq:exptailboundharmonic}
  \end{align}
where we use $\min\left(1, 2 |d|\xi \right) \leq  2 |d|\xi.$ Putting everything together, we know 
   \begin{align}
     \P(Z_d^{-} \leq -t) &\leq
     \begin{cases}
       1 & \mbox{if }t < \ln 2 \\
       4 \gamma \xi e^{-t} & \mbox{if }t > \ln 2.
     \end{cases} \label{eq:fullexptailbound}
   \end{align}
  This establishes the tail bound on $Z_{d}^{-}$ since we can choose
  $K_{B} = 4 \gamma \xi$ as $4 e^{-\ln 2} = 2 > 1$ is a trivially
  valid bound. 

To show the variance bound, we consider the cases $d \leq 1$ and $d > 1$ separately. If $d \leq 1$, we bound the variance by the second moment.
   We integrate~\eqref{eq:fullexptailbound} to get the bound
   \begin{align}
     \bigE[(Z_d^{-})^2] &=
     \int_0^\infty \P((Z_d^{-})^2 \geq x)~dx \nonumber \\
     &= \int_0^\infty \P(Z_d^{-} \leq -\sqrt{x})~dx \nonumber \\
     &\leq (\ln 2)^2 + \int_{(\ln 2)^2}^\infty 4 \gamma \xi e^{-\sqrt{x}}~dx
       \nonumber \\
       & = (\ln 2)^2 + 4(1 + \ln 2) \gamma \xi. \label{eq:boundonnegativepartsmalld}
   \end{align}
It remains to bound $\bigE\left[ (Z_{d}^{+})^{2}\right]$. We know that  $Z_d > 0$ implies that $|1 + Bd| > 1$. Now for $x > 1$, we know that $\ln(x) \leq x -1$ and hence 
$$Z_d \leq |1 + Bd| - 1 \leq |Bd| \leq |B|,$$
where the second inequality second inequality follows by applying the triangle inequality. The last inequality follows form the assumption $|d| \leq 1$. Consequently $\bigE\left[(Z_d^+)^2\right] \leq \bigE\left[|B|^2\right]$ which is bounded by assumption.
Combing this with~\eqref{eq:boundonnegativepartsmalld} implies that that there is a universal upper bound $k_1$ such that $\mbox{Var}[Z_d] \leq k_{1}$ for all $|d| \leq 1$.

Now consider the case when $|d| > 1$. We have that
   \begin{align}
     \mbox{Var}[Z_d] &=  \mbox{Var}[\ln|d| + \ln |\frac{1}{d} + B|]
     \nonumber\\
     &= \mbox{Var}[\ln |\frac{1}{d} + B|] 
   \end{align}
   Let $\widetilde{Z}_d = \ln |\frac{1}{d} + B|$. We can essentially
   repeat the same style of arguments as in the case $d \leq 1$, except with
   some minor variations.

  Split $\widetilde{Z}_d = \widetilde{Z}_d^+ + \widetilde{Z}_d^-$
  where $\widetilde{Z}_d^+ = \widetilde{Z}_d$ whenever
  $\widetilde{Z}_d \geq 0$ and $\widetilde{Z}_d^- = \widetilde{Z}_d$
  whenever $\widetilde{Z}_d < 0$. As before, $\bigE[\widetilde{Z}_d^2]
  = \bigE[(\widetilde{Z}_d^+)^2] + \bigE[(\widetilde{Z}_d^-)^2]$.  

  Consider the negative case first. If $\widetilde{Z}_d^{-} \leq -t$,
  then we must have $|\frac{1}{d}+ B| \leq e^{-t}$ and this implies that $B$ is in
  an interval of length $2 e^{-t}$ that begins at $-\frac{1}{d} -
  e^{-t}$ and extends to $-\frac{1}{d} + e^{-t}$. As above,
  the upper bound $\gamma$ on the density of $B$ tells us that
  $\P(\widetilde{Z}_d^{-} \leq -t) \leq 2 \gamma e^{-t}$ for all
  $t>0$. Integrating this bound gives $\bigE[(\widetilde{Z}_d^{-})^2]
  \leq 4 \gamma$.

  For the positive side, if $\widetilde{Z}_d > 0$, we know that $|\frac{1}{d} + B| > 1$. Again, using the fact that $\ln (x) \leq x-1$ for $x \geq 1$, we know that $\widetilde{Z}_d \leq |\frac{1}{d} + B| - 1 \leq
   |B|$ since $d > 1$ by assumption here. Consequently
   $\bigE[(\widetilde{Z}_d^+)^2] \leq \bigE[|B|^2]$ which is 
   bounded by assumption.

   Combining the bounds on $\bigE[(\widetilde{Z}_d^{-})^2]
$ and $\bigE[(\widetilde{Z}_d^{+})^2]$ we obtain the desired bound on $\mbox{Var}[Z_{d}] \leq k_{2}$ when $d > 1$. Thus the maximum over $k_{1}$ and $k_{2}$ gives a universal
   bound on the variance of $Z_d$. 
\end{proof}

With that lemma in hand, we are ready to prove a counterpart of the
strong converse in channel coding for control capacity. If there is not sufficient Shannon capacity, then the system state
eventually blows up with probability $1$. 

Notice that the technical condition imposed on the bounded density is very mild
since the density has to integrate to $1$ anyway and so the condition
that $p_{B}(b) \leq \gamma\min(1,\frac{\xi}{|b|})$ is just ruling out
densities with ever shortening bursts of wild oscillations.

\begin{thm} \label{thm:tight}
Let $p_{B}(\cdot)$ have a bounded density (except
possibly for an atom at zero) 
such that there exists $\gamma, \xi > 1$ so that the density $p_B(b) \leq \gamma \min\left(1, \frac{\xi}{|b|}\right).$ 

If $C_{\sh}\left(\S (p_{B})\right) > \log |a|$, then the system
$\S_{a}(p_{B})$ in~\eqref{eq:SAsystem} with the $B[k]$'s
i.i.d.~according to $p_{B}$ can be kept tight. Furthermore, if
$C_{\sh}\left(\S (p_{B})\right) < \log |a|$, then for all causal
control strategies, and all bounds $M > 0$, the limiting probability
\begin{align}
  \lim_{n \rightarrow \infty} \P(|X_{a}| \geq M) =
  1. \label{eq:blowsup}
\end{align}

\end{thm}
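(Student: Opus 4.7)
The plan is to use Lemma~\ref{lem:varbound} to obtain bounded variance of the one-step log-increments and then apply a Chebyshev-based concentration argument to the sum of those increments. The forward direction uses the optimal linear memoryless strategy; the converse is a martingale-style (Doob decomposition) argument exploiting the one-step bound from Lemma~\ref{lem:shcclem}.

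For the achievability direction ($C_{\sh} > \log|a|$ implies tight), I would first invoke the remark after Theorem~\ref{thm:calcshannoncap} that optimal strategies are linear and memoryless, so there exists $d^\star$ achieving $C_{\sh} = \bigE[-\log|1+B d^\star|]$. Apply the control $U_a[k] = d^\star X_a[k]$ to $\S_a$, which gives the telescoping identity
\begin{align*}
\log|X_a[n]| = \log|x_0| + n\log|a| + \sum_{k=0}^{n-1} \log|1 + B[k] d^\star|,
\end{align*}
a sum of i.i.d.\ terms with mean $-C_{\sh}$ and variance at most $\overline{\sigma}_B^2$ by Lemma~\ref{lem:varbound}. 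Thus $\bigE[\log|X_a[n]|] = \log|x_0| + n(\log|a|-C_{\sh}) \to -\infty$ while the variance grows only linearly in $n$. For any fixed $M>0$, Chebyshev yields $\P(|X_a[n]|>M) = \P(\log|X_a[n]| > \log M) \leq n\overline{\sigma}_B^2 / (\log M - \bigE[\log|X_a[n]|])^2 = O(1/n)$, which goes to $0$. Choosing $M_\epsilon$ to be any fixed bound and $N_\epsilon$ large enough then verifies Definition~\ref{def:tight}.

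For the converse ($C_{\sh} < \log|a|$), I would argue that \emph{any} causal strategy is trapped. By Lemma~\ref{lem:nozero}, $X_a[k] \neq 0$ almost surely, so define the random ``gain'' $d_k := U_a[k]/X_a[k]$, which is measurable with respect to $\mathcal{F}_k := \sigma(B[0],\dots,B[k-1])$. The one-step increments $\Delta_k := \log|X_a[k+1]| - \log|X_a[k]| = \log|a| + \log|1 + B[k] d_k|$ then satisfy, by the definition of $C_{\sh}$ via Lemma~\ref{lem:shcclem}, $\bigE[\Delta_k\mid \mathcal{F}_k] \geq \log|a| - C_{\sh} =: \alpha > 0$, while Lemma~\ref{lem:varbound} guarantees $\mathrm{Var}(\Delta_k\mid \mathcal{F}_k) \leq \overline{\sigma}_B^2$ uniformly in the realized $d_k$.

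Write the Doob decomposition $S_n := \log|X_a[n]| - \log|x_0| = M_n + A_n$ where $A_n := \sum_{k=0}^{n-1}\bigE[\Delta_k\mid\mathcal{F}_k]$ and $M_n := S_n - A_n$ is a mean-zero martingale with $\mathrm{Var}(M_n) = \sum_k \bigE[(\Delta_k-\bigE[\Delta_k\mid\mathcal{F}_k])^2] \leq n\overline{\sigma}_B^2$. Since $A_n \geq n\alpha$, the event $\{|X_a[n]|<M\}$ forces $M_n < \log M - \log|x_0| - n\alpha$, so Chebyshev gives
\begin{align*}
\P(|X_a[n]| < M) \leq \P\!\left(|M_n| \geq n\alpha - (\log M - \log|x_0|)\right) \leq \frac{n\overline{\sigma}_B^2}{(n\alpha - \log M + \log|x_0|)^2} = O(1/n),
\end{align*}
which vanishes, yielding \eqref{eq:blowsup}. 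The main obstacle is precisely this converse step: we cannot restrict to linear strategies, so the key leverage is that both the one-step expectation bound (Lemma~\ref{lem:shcclem}) and the one-step variance bound (Lemma~\ref{lem:varbound}) hold \emph{uniformly in $d_k$}, which is what permits the martingale Chebyshev argument to survive arbitrary adaptive controls.
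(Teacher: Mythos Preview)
Your proposal is correct and matches the paper's argument almost exactly. The converse is identical in substance: the paper's split $J[k] = H[k] + (J[k]-H[k])$ with $H[k] = \bigE[J[k]\mid\mathcal{F}_{k-1}]$ is precisely your Doob decomposition $S_n = A_n + M_n$, and both finish with the same Chebyshev bound on the martingale part using the uniform variance bound from Lemma~\ref{lem:varbound}. The only notable difference is in the forward direction: the paper dispatches it in one line by invoking Theorem~\ref{thm:shannoncapacitygrowth} (log-stabilizability) together with the remark after Definition~\ref{def:tight} that logarithmic stability implies tightness via Markov, whereas you run an explicit i.i.d.\ Chebyshev argument---this is more work than needed but perfectly valid and in fact yields the slightly stronger conclusion $\P(|X_a[n]|>M)\to 0$.
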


\begin{proof}
If $C_{\sh}\left(\S (p_{B})\right) > \log |a|$, then we know that the
system $\S_{a}\left(p_{B}\right)$ can be logarithmically stabilized,
which implies that it can be kept tight.

Now, consider an actuation channel whose Shannon control capacity is
not big enough, $C_{\sh}\left(\S (p_{B})\right) = \log |a| - \epsilon$
where $\epsilon > 0$. For convenience, throughout this proof we will
take all logarithms to be natural logarithms and thus work in base $e$
instead of $2$. 

Let $U$ be an arbitrary control law. For $n>1$, if $X[n] =0$ then by
choosing $U[n] = 0$, the controller can ensure the minimum possible
$|X[n+1]| = 0$, which is the optimal action since it will keep the
state at $0$ forever. Consequently, without loss of generality we
restrict attention to control strategies that apply a $0$ control when
faced with a $0$ state. These are sample-path by sample-path as good
as or better than other strategies when it comes to keeping the state
within bounds. 

Because the initial condition $X_{a}[0] = x_{0}$ is assumed to be known and
because the controller can recall all past observations and controls,
the control law $U[n]$ can be interpreted as being a function of all
the random gains $B[0], B[1], \ldots, B[n-1]$. Let ${\mathcal
  F}_{n-1}$ be the sigma field generated by $B[0], B[1], \ldots,
B[n-1]$. Because the controller applies a zero control to a zero
state, we can re-interpret the control law as being $U[k] = D[k] X_{a}[k]$
where $D[k] = \frac{U[k]}{X_{a}[k]}$ and hence a deterministic function of
$B[0], B[1], \ldots, B[n-1]$. 

Using this to expand \eqref{eq:SAsystem}, we see that
\begin{align}
|X_{a}[n]| &= \left|a \left(1 + D[n-1] \cdot B[n-1]\right) X_{a}[n-1] \right| \nonumber\\
&= \left| x_{0}\right| \prod_{k=0}^{n-1} |a|\left| 1 + D[k] \cdot B[k]
\right| \nonumber \\
&= \left| x_{0}\right| e^{\epsilon n} \prod_{k=0}^{n-1}
|a|e^{-\epsilon}\left| 1 + D[k] \cdot B[k]
\right| \label{eq:prod1martingale}
\end{align}

Take the natural log to get
\begin{align}
  \ln |X_{a}[n]| = \ln|x_0| + \epsilon n + \sum_{k=0}^{n-1} \ln(|a|e^{-\epsilon}\left| 1 + D[k]
  \cdot B[k] \right|). \label{eq:logstate}
\end{align}

Let $J[k] = \ln(|a|e^{-\epsilon}\left| 1 + D[k] \cdot B[k] \right|) =
\ln|a| - \epsilon + \ln \left| 1 + D[k] \cdot B[k] \right|$.

It turns out that $\{(J[k],{\mathcal F}_k)\}_{k=0}^{\infty}$ is a
sub-martingale difference sequence.
\begin{align}
  \bigE\left[J[k] | {\mathcal F}_{k-1}\right] &= \ln|a| - \epsilon +
  \bigE\left[\ln | 1 + D[k] \cdot B[k] | \big| {\mathcal F}_{k-1} \right] \nonumber \\
  &\geq \ln|a| - \epsilon -(\ln |a| -
  \epsilon) \label{eq:substituteinmax} \\
  &\geq 0 \nonumber
\end{align}
  where \eqref{eq:substituteinmax} comes from the fact that $D[k]$ is
  a deterministic function of $B[0], B[1], \ldots, B[k-1]$ and hence a
  constant in that expectation, and Shannon control capacity is the
  maximum that $-\bigE[\ln|1 + d B|]$ can be over the choice of
  constants $d$.

Define a process of strictly non-negative random variables $H[k] =
\bigE\left[J[k] | {\mathcal F}_{k-1}\right] \geq 0$, and consider $J[k] =
H[k] + (J[k] - H[k])$. Clearly, $\{((J[k] - H[k]),{\mathcal
  F}_k)\}_{k=0}^{\infty}$ is a martingale difference sequence.

Furthermore, because $H[k]$ is a constant relative to ${\mathcal
  F}_{k-1}$ as are $\ln |a|$ and $\epsilon$, we know that
\begin{align}
  \mbox{Var}[(J[k] - H[k]) | {\mathcal F}_{k-1}] &= 
  \mbox{Var}[J[k] | {\mathcal F}_{k-1}] \nonumber \\
  &= \mbox{Var}\left[ \ln | 1 + D[k] B[k] |  \big| {\mathcal
      F}_{k-1}\right] \nonumber \\
  &\leq \overline{\sigma}_{B}^2. \label{eq:martingalevariance}
\end{align}
where \eqref{eq:martingalevariance} comes from Lemma~\ref{lem:varbound}. Taking expectations on both sides gives us that
\begin{align}
  \mbox{Var}[(J[k] - H[k])] \leq  \overline{\sigma}_{B}^2.\label{eq:martingalevariance1}
\end{align}

Now, we can use $J[k]$ and $H[k] \geq 0$ to rewrite
\eqref{eq:logstate} as 
\begin{align}
  \ln |X_{a}[n]| = \ln|x_0| + \epsilon n + \sum_{k=0}^{n-1} H[k] +
  \sum_{k=0}^{n-1} (J[k] - H[k]). \label{eq:logstate2}
\end{align}
The first  term is a constant and the second and third terms are non-negative and growing at least linearly. The
martingale $\sum_{k=0}^{n-1} (J[k] - H[k])$ is the only one that could
possibly be negative. However, a simple application of Chebyshev's
inequality tells us that this is not likely.
\begin{align}
  \P\left(\sum_{k=0}^{n-1} (J[k] - H[k]\right) \leq -\frac{\epsilon n}{2})
  &\leq \P\left(|\sum_{k=0}^{n-1} (J[k] - H[k])| \geq \frac{\epsilon n}{2}\right)  \nonumber \\
  &\leq \frac{4\mbox{Var}\left[\sum_{k=0}^{n-1} (J[k] - H[k])\right]}{\epsilon^2
    n^2} \nonumber \\
  &= \frac{4   \sum_{k=0}^{n-1} \mbox{Var}\left[(J[k] - H[k]) \right]}{\epsilon^2
    n^2} \label{eq:applymartingalevariance} \\ 
  &\leq \frac{4  \overline{\sigma}_{B}^2}{\epsilon^{2}n}. \label{eq:finalchebyshevpart}
\end{align}
where \eqref{eq:applymartingalevariance} follows inductively from the
property of martingales and in particular, the uncorrelatedness of the
successive martingale difference terms and from~\eqref{eq:martingalevariance1}. The above bound in
\eqref{eq:finalchebyshevpart} goes to zero as $n \rightarrow \infty$.

Consequently,  we know that with a probability tending to $1$ that
$\sum_{i=0}^{n-1} (J[k] - H[k]) >  -\frac{n \epsilon}{2}$. Hence from
\eqref{eq:logstate2} and the fact that the $H[k] \geq 0$, we know that
\begin{align*}
  \lim_{n \rightarrow \infty} \P\left(\ln |X_{a}[n]| > \ln|x_0| +
  \frac{\epsilon}{2} n\right) &\geq
  \lim_{n \rightarrow \infty} \P\left(\ln |X_{a}[n]| > \ln|x_0| +
  \frac{\epsilon}{2} n + \sum_{k=0}^{n-1} H[k] \right) \nonumber \\
  &= \lim_{n \rightarrow \infty} \P\left(\sum_{k=0}^{n-1} (J[k] -  H[k])
  > -\frac{\epsilon n}{2}\right) \nonumber \\
  &= 1.
\end{align*}

Since the log of the absolute value of the state is growing
unboundedly with probability approaching $1$, so is the absolute
value of the state itself and the theorem is proved. 
\end{proof}

\begin{remark}
The theorem also holds trivially true in the case where $p_{B}$ has
atoms not at zero, since $C_{\sh}\left(p_{B}\right) = \infty$ in that
case. 
\end{remark}

It is interesting to recall that Burnashev's converse argument for the
reliability function of a communication channel with feedback also had
a martingale argument embedded in it to deal with the entropy
reduction in the posterior of the message \cite{burnashev1976data,
  berlin2009simple}. Here, the role of the entropy is played by the
log of the state itself. 

\section{Zero-error control capacity} \label{sec:zeroerror}
We now move to understanding control capacity for the strictest sense of stability: zero-error control capacity. Theorem~\ref{thm:debug} connects the control capacity of the actuation channel $\S$ to stability of the system $\S_{a}$. Theorem~\ref{thm:zeroerrorcalc} calculates zero-error control capacity for the system $\S(p_{B})$ with i.i.d. $B[n]$'s.

\begin{defn}
The zero-error control capacity of the system $\S$ is defined as 
\begin{align}
C_{\ze}\left(\S\right) := \max \left\{C~\bigr|~ \textrm{ exists } N \textrm{ such that for all } n> N, \underset{U[0], \cdots, U[n-1]}{\max}~\mathbb{P}\left(-\frac{1}{n} \log \frac{|X[n]|}{|X[0]|} \geq C\right) = 1 \right\}.
\end{align}
This is well defined since $X[0] \neq 0$ and $\log 0 = -\infty$ by definition.
\end{defn}

\begin{thm}\label{thm:debug}
The system $\S_{a}$ in~\eqref{eq:SAsystem} is stabilizable in a zero-error sense if and only if the zero-error control capacity of the associated system $\S$ in~\eqref{eq:Ssystem} is $C_{\ze}\left(\S\right) \geq \log |a|$.
\end{thm}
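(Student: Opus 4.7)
My plan is to prove the equivalence by shuttling between control strategies for $\S$ and $\S_a$ using Lemma~\ref{lem:whichcontrol}. The bijection $U_a[k] = a^k U[k]$ (with inverse $U[k] = a^{-k} U_a[k]$) forces $|X_a[n]| = |a|^n |X[n]|$ sample-path by sample-path, so almost-sure exponential decay of $|X[n]|$ at rate $\log|a|$ is \emph{exactly} equivalent to almost-sure boundedness of $|X_a[n]|$. Because the translation is exact (no additive residual), I expect a clean if-and-only-if at the threshold $\log|a|$, unlike the strict/non-strict asymmetry seen in Thm.~\ref{thm:shannoncapacitygrowth}.

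For sufficiency, I would start from $C_{\ze}(\S) \geq \log|a|$. By the definition of $C_{\ze}$ there is an $N$ such that, for every $n > N$, some causal control law drives $|X[n]| \leq |X[0]| \cdot |a|^{-n}$ with probability one. Feeding these controls through Lemma~\ref{lem:whichcontrol} yields $|X_a[n]| \leq |X[0]|$ almost surely for $n > N$, which is zero-error stability of $\S_a$ with bound $M = |X[0]|$. One technical nuisance is that the definition of $C_{\ze}$ allows the optimizing control law to depend on the horizon $n$, whereas zero-error stability of $\S_a$ demands a single fixed sequence of controls valid for all $n > N$ simultaneously. I plan to resolve this by appealing to Thm.~\ref{thm:zeroerrorcalc}, which shows that the zero-error capacity is attained by a stationary memoryless linear law $U[k] = d X[k]$; under Lemma~\ref{lem:whichcontrol} this law translates to $U_a[k] = a^k d X[k] = d X_a[k]$, which is itself a single fixed time-invariant strategy.

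For necessity, suppose $\S_a$ is zero-error stabilizable by some fixed causal sequence $U_a[0], U_a[1], \ldots$ with bound $M$ above horizon $N$. Applying the reverse direction of Lemma~\ref{lem:whichcontrol} produces a causal strategy for $\S$ satisfying $|X[n]| < M|a|^{-n}$ almost surely for $n > N$. Taking logs and dividing by $n$ gives, almost surely,
\[
-\frac{1}{n}\log\frac{|X[n]|}{|X[0]|} > \log|a| - \frac{1}{n}\log\frac{M}{|X[0]|}.
\]
For any $\epsilon > 0$, the right-hand side exceeds $\log|a| - \epsilon$ once $n$ is sufficiently large, so every $C < \log|a|$ belongs to the set in the definition of $C_{\ze}(\S)$, and therefore $C_{\ze}(\S) \geq \log|a|$.

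The step I expect to require the most care is the per-horizon versus fixed-strategy mismatch in the sufficiency direction. Everything else is a straightforward consequence of the exact multiplicative cancellation in Lemma~\ref{lem:whichcontrol} together with the trivial observation that $\frac{1}{n}\log(M/|X[0]|) \to 0$, which together explain why the threshold relation takes the non-strict form $C_{\ze}(\S) \geq \log|a|$ in both directions.
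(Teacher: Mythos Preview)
Your proof is essentially the same as the paper's: both directions shuttle between $\S$ and $\S_a$ via Lemma~\ref{lem:whichcontrol}, and your converse (take logs, divide by $n$, let $\frac{1}{n}\log(M/|X[0]|)\to 0$) matches the paper almost verbatim.

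On the achievability side you are actually more careful than the paper. The paper simply asserts ``there exists $N$ and a causal control strategy $U[0], U[1], \ldots$ such that for all $n>N$ \ldots'' without addressing the fact that the maximizer in the definition of $C_{\ze}$ may a priori depend on the horizon $n$; you correctly flag this and patch it by invoking the stationary memoryless law from Thm.~\ref{thm:zeroerrorcalc}. That patch is sound (Thm.~\ref{thm:zeroerrorcalc} does not rely on Thm.~\ref{thm:debug}, so there is no circularity), with the minor caveat that Thm.~\ref{thm:zeroerrorcalc} is stated for i.i.d.\ $B[k]$ whereas Thm.~\ref{thm:debug} is phrased for the general independent case---a gap the paper itself does not close.
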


As in the case of Shannon control capacity, we use Lemma~\ref{lem:whichcontrol} to prove Thm.~\ref{thm:debug}.
\begin{proof}[Proof of Thm.~\ref{thm:debug}]
\textbf{Achievability:}

First we show that $\S_{a}$ is stabilizable in a zero-error sense if $C_{\ze}(\S) \geq \log |a|$. Hence, there exists $N$ and a causal control strategy $U[0], U[1], \cdots $ such that for all $n > N$ we have that
\begin{align}
1=&~\P\left(-\frac{1}{n} \log \frac{|X[n]|}{|X[0]|}  \geq \log|a|\right) \nonumber\\
=&~\P\left(|a|^{n} \cdot |X[n]|  \leq  x_{0}\right). \label{eq:acze}
\end{align}
Choose $U_{a}[k] = a^{k} U[k]$. Then, Lemma~\ref{lem:whichcontrol} implies that $|X_{a}[n]| = |a|^{n} \cdot |X[n]|$, which combined with~\eqref{eq:acze} completes the proof.

\textbf{Converse:}
To show necessity, we assume that $\S_{a}$ is zero-error stabilizable. Let $U_{a}[0], \cdots, U_{a}[n]$ be a causal control law that stabilizes the system. There exist $N$ and $M$ for $n > N$, such that:
\begin{align*}
\P\left(|X_{a}[n]| < M\right) = 1.
\end{align*}
Choosing $U[k] = a^{-k} U_{a}[k]$, and using Lemma~\ref{lem:whichcontrol} we have that $|X_{a}[n]| = |a|^{n} \cdot |X[n]|$, and hence  
\begin{align*}
\P\left(|X[n]|  < |a|^{-n} M\right) = 1.
\end{align*}
Dividing by $|X[0]| = x_{0} \neq 0$ and rearranging this gives:
\begin{align*}
\P\left(-\frac{1}{n} \log \frac{|X[n]|}{|X[0]|}  > \log|a| - \frac{1}{n}\frac{M}{|X[0]|} \right) = 1.
\end{align*}
As $n\to\infty$, we see that the lower bound on $-\frac{1}{n} \log \tfrac{|X[n]|}{|X[0]|}$ will get arbitrarily close to $\log|a|$, which gives that:
\begin{align*}
C_{\ze}(\S) \geq \log|a|. 
\end{align*}\qedhere
\end{proof}

The operational definition for zero-error control capacity involves an optimization over an infinite sequence of potential control laws. In this section, we show that in fact this quantity can be easily computed for the system $\S(p_{B})$ where the $B[n]$'s are distributed i.i.d.~with distribution $p_{B}$.  

We first focus on the case where $p_{B}$ has bounded support. We will later show (Theorem~\ref{thm:zeroerrorzero}) that when the support of $p_{B}$ is unbounded, the zero-error control capacity is zero. 

\begin{thm}
Consider the system $\S(p_{B})$ in~\eqref{eq:Ssystem} with $B[n]$'s
drawn from $p_{B_n}$, each with essential support on $[b_{1}, b_{2}]$. 
Then the zero-error control capacity of the system is
\begin{align}
C_{\ze}(\mathcal{S}(p_{B})) &= - \log \min_d \max_{b \in [b_1,b_2]} |1+b\cdot d| \label{eq:optimizationcharacterizationzeroerror}\\
&= 
\begin{cases}
\log \frac{|b_{2}-b_{1}|}{|b_{2}+b_{1}|}~&\textrm{if}~~0 \not\in [b_{1}, b_{2}],\\\
0~&\textrm{if}~~0 \in [b_{1},b_{2}]. 
\end{cases} \label{eq:explicitzeroerror}
\end{align}
\label{thm:zeroerrorcalc}
\end{thm}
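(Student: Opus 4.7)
The plan is to prove the two equalities in turn. For the variational characterization $C_{\ze}(\S(p_B)) = -\log \min_d \max_{b \in [b_1,b_2]} |1+bd|$, call this quantity $C^*$ with $\rho^* := \min_d \max_b |1+bd|$. Achievability mirrors the Shannon case: apply the linear memoryless strategy $U[k] = d^* X[k]$ where $d^*$ attains $\rho^*$. By a telescoping product, $|X[n]|/|X[0]| = \prod_{k=0}^{n-1}|1+d^* B[k]|$. Since each $B[k]$ almost surely lies in the essential support $[b_1,b_2]$, each factor is almost surely at most $\max_{b\in[b_1,b_2]}|1+d^* b| = \rho^*$. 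Therefore $-\frac{1}{n}\log \frac{|X[n]|}{|X[0]|} \geq -\log \rho^* = C^*$ with probability $1$, proving $C_{\ze}(\S(p_B)) \geq C^*$.

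For the converse, I would argue by backwards induction on the time step. Fix any causal control law that achieves rate $C$, so that $\P(|X[n]|/|X[0]| \leq e^{-Cn}) = 1$ for all $n$ beyond some $N$. Condition on the past $B[0],\ldots,B[n-2]$: the quantity $d_{n-1} := U[n-1]/X[n-1]$ becomes a constant (a deterministic function of the past), so $|X[n]|/|X[n-1]| = |1 + d_{n-1} B[n-1]|$. Because $B[n-1]$ has essential support $[b_1,b_2]$ and $b \mapsto |1+d_{n-1}b|$ is continuous, the essential supremum of this ratio equals $\max_{b\in[b_1,b_2]} |1+d_{n-1}b| \geq \rho^*$. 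For the event $\{|X[n]|/|X[0]|\leq e^{-Cn}\}$ to have probability $1$ conditional on the past, we must therefore have $|X[n-1]|/|X[0]| \leq e^{-Cn}/\rho^*$ for almost every realization of the past. Iterating the same argument backwards through steps $n-2, n-3, \ldots, 0$ yields $1 = |X[0]|/|X[0]| \leq e^{-Cn}/(\rho^*)^n$, hence $C \leq -\log \rho^* = C^*$.

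For the explicit formula, I would compute the inner minimax. If $0 \in [b_1,b_2]$, then $|1+bd|=1$ at $b=0$ for every $d$, so $\max_b|1+bd|\geq 1$, while $d=0$ achieves $1$; thus $\rho^* = 1$ and $C_{\ze} = 0$. If $0 \notin [b_1,b_2]$, both endpoints have the same sign, and since $b \mapsto 1+bd$ is monotone on $[b_1,b_2]$, the max of $|1+bd|$ is attained at an endpoint. Equating $|1+b_1 d| = |1+b_2 d|$ with opposite signs (the standard minimax condition for two affine functions) yields $d^* = -2/(b_1+b_2)$, giving the common value $|b_2-b_1|/|b_2+b_1|$, and hence the claimed logarithm after taking $-\log$.

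The main obstacle I anticipate is the converse induction: one must be careful that the ``almost every past realization'' qualifier propagates cleanly through the induction. The key technical fact enabling this is that $b \mapsto |1+db|$ is continuous, so that the essential supremum of $|1+dB[k]|$ coincides with $\max_{b\in[b_1,b_2]}|1+db|$ on the nose (every open neighborhood of the maximizer carries positive probability under $p_B$ by the definition of essential support). Once this is in hand, the induction reduces to a clean bookkeeping argument that does not require any concentration inequalities, in contrast to the Shannon-capacity converse in Theorem~\ref{thm:tight}.
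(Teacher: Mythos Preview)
Your achievability argument and the explicit minimax computation are essentially the paper's own (the paper packages them as a one-step lemma, Lemma~\ref{lem:zeroerroronestep}, and a separate minimax calculation, but the content is identical).

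Your converse, however, takes a genuinely different route. The paper proves the converse by \emph{forward} induction on $n$: it shows that for any control law and any $\delta>0$ the ``bad'' event $\bigl\{-\tfrac{1}{n}\log\tfrac{|X[n]|}{|X[0]|}\le C^*+\delta\bigr\}$ has positive probability, by writing it as the intersection of a past event $\mathcal G$ (about $X[n-1]$) and a one-step event $\mathcal H$, and arguing $\P(\mathcal H\mid\mathcal G)>0$ from the one-step lemma. You instead run a \emph{backward} peeling argument: starting from the almost-sure bound at time $n$, you condition on $B[0],\ldots,B[k-1]$, use that the essential supremum of $|1+d_k B[k]|$ is at least $\rho^*$, and strip off a factor of $\rho^*$ at each step until you reach the deterministic inequality $1\le e^{-Cn}/(\rho^*)^n$. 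Both arguments are correct and of comparable length. The paper's forward induction has the cosmetic advantage that the case $X[k]=0$ is automatically excluded on the event $\mathcal G$ (since then $-\log\tfrac{|X[k]|}{|X[0]|}=\infty$); in your backward scheme you should record explicitly that when $X[k-1]=0$ the inductive bound $|X[k-1]|/|X[0]|\le e^{-Cn}/(\rho^*)^{n-k+1}$ holds trivially, so that you may indeed always form $d_{k-1}=U[k-1]/X[k-1]$ on the complementary event. Your closing remark about the essential-supremum identity is exactly the right technical point, and it is what the paper's Lemma~B.1 establishes (noting that the maximum of the convex map $b\mapsto|1+db|$ on $[b_1,b_2]$ is at an endpoint, whose every neighborhood carries positive $p_B$-mass by definition of essential infimum and supremum).
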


As in the Shannon control capacity case, the proof relies on showing that a simple greedy strategy is optimal. This is captured in the following lemma.

\begin{restatable}{lem}{lemzeroerroronestep}
Suppose the system state at time $n$ is $x[n] \in \mathbb{R}, x[n]
\neq 0$. For the system $\S(p_{B})$ in Thm.~\ref{thm:zeroerrorcalc},
define the one-step zero-error control capacity as 
\begin{align*}
C_{1,\ze}(x[n]) = \sup \left\{C ~\biggr|~ \underset{U[n]}{\max}~P\left(-\log \tfrac{|X[n+1]|}{|x[n]|} \geq C\right) = 1\right\}.
\end{align*}
Then, $C_{1,\ze}(x[n])$ does not depend on $x[n]$, and is given by $-
\log \min_d \max_{b \in [b_1,b_2]} |1+b\cdot d|$ which simplifies to
\begin{align*}
C_{1,\ze}(x[n]) =
\begin{cases}
\log \frac{|b_{2}+b_{1}|}{|b_{2}-b_{1}|}~\textrm{if}~~0 \not\in [b_{1}, b_{2}]\\
0~\textrm{if}~0 \in [b_{1},b_{2}]. 
\end{cases}
\end{align*}
\label{lem:zeroerroronestep}
\end{restatable}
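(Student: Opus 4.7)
My first step is to strip away dependence on $x[n]$. Since $x[n]\neq 0$, every admissible $U[n]$ can be written $U[n] = d \cdot x[n]$ for some $d \in \mathbb{R}$, whence $X[n+1]/x[n] = 1 + B[n]d$ and the event in the definition of $C_{1,\ze}(x[n])$ becomes $\{|1+Bd|\leq e^{-C}\}$, depending only on $d$ and the law of $B$. This already makes manifest that $C_{1,\ze}(x[n])$ is independent of $x[n]$ (and of $n$), so I will suppress that argument below.

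I would next convert the probability-one requirement into a uniform bound over the essential support. Because $b\mapsto|1+bd|$ is continuous, the sublevel set $A_{C,d} := \{b : |1+bd|\leq e^{-C}\}$ is closed; since the essential support is, by definition, the smallest closed set of full $p_B$-measure, the event $\{B \in A_{C,d}\}$ has probability $1$ if and only if $[b_1,b_2]\subseteq A_{C,d}$, equivalently $\max_{b\in[b_1,b_2]}|1+bd|\leq e^{-C}$. For fixed $d$ the largest such $C$ is $-\log\max_b|1+bd|$, and taking the supremum over $d$ gives
\begin{equation*}
C_{1,\ze} \;=\; -\log \min_{d\in\mathbb{R}} \max_{b\in[b_1,b_2]} |1+bd|,
\end{equation*}
which is the first half of the claim.

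The remaining task is to evaluate this minimax explicitly. Convexity of $b\mapsto|1+bd|$ (it is $|\cdot|$ applied to an affine function of $b$) reduces the inner maximum to $\max\bigl(|1+b_1 d|,\,|1+b_2 d|\bigr)$. If $0\in[b_1,b_2]$, then evaluating at $b=0$ forces $\max_b|1+bd|\geq 1$ for every $d$, with equality at $d=0$, so the minimax equals $1$ and $C_{1,\ze}=0$. If $0\notin[b_1,b_2]$, then $b_1+b_2\neq 0$, and I would equalize the two ``V'' pieces by solving $1+b_1 d = -(1+b_2 d)$, which yields $d^\star = -2/(b_1+b_2)$ with common value $|b_2-b_1|/|b_1+b_2|$. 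Global optimality of $d^\star$ follows from the piecewise-linear geometry: away from $d^\star$, at least one of $|1+b_i d|$ strictly exceeds the balanced value, while at $d=0$ the maximum is $1 > |b_2-b_1|/|b_1+b_2|$ (since $b_1 b_2 > 0$). Taking $-\log$ gives $\log(|b_1+b_2|/|b_2-b_1|)$.

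The one delicate step is the translation from a probability-one event over $B$ to a uniform bound on $[b_1,b_2]$, which uses both continuity of $b\mapsto|1+bd|$ and the defining property of essential support as the smallest full-measure closed set; every other piece is a direct substitution or an elementary one-variable optimization, so I do not anticipate further obstacles.
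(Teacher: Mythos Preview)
Your proposal is correct and follows essentially the same route as the paper: reduce to the scalar $d = U[n]/x[n]$, convert the probability-one requirement into a worst-case bound over the support of $B$, and then evaluate $\min_d \max_{b\in[b_1,b_2]}|1+bd|$ by balancing the two endpoint values at $d^\star = -2/(b_1+b_2)$. The paper organizes things slightly differently---it first proves the minimax value in a standalone lemma (optimality argued by contradiction rather than your convexity/piecewise-linear observation) and then plugs that lemma into the achievability and converse for $C_{1,\ze}$---but the substance is the same.
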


The achievability part of this lemma implies that there exists a linear memoryless stationary control law such that 
\begin{align*}
P\left(-\log \frac{|X[n+1]|}{|x[n]|} \geq C_{\ze}\right) = 1.
\end{align*}
The converse implies that for any $\delta > 0$, for all $U[n]$, 
\begin{align*}
P\left(-\log \frac{|X[n+1]|}{|x[n]|} > C_{\ze} + \delta \right) < 1.
\end{align*}

As was the case for Shannon control capacity, the key observation in proving Lemma~\ref{lem:zeroerroronestep} is to notice that $\tfrac{X[n+1]}{x[n]} = 1 + B[n] \tfrac{U[n]}{x[n]}.$
Then, we observe that $U[n]$ is a function of $x[n]$, and what remains to be understood is the quantity:
$$\underset{d \in \mathbb{R}}{\min}~\underset{B \in [b_{1}, b_{2}]}{\max} \bigr| 1 + B\cdot d \bigr|,$$
where $B\sim p_{B}$. The full proof is deferred to Appendix~\ref{sec:appendixzeroerror}.

\begin{proof}[Proof of Thm.~\ref{thm:zeroerrorcalc}]

\noindent \textbf{Achievability:}
First, note that if for any $k$, $X[k]=0$ for $0 \leq k \leq n$ we are done, since with $X[k]=0$ we can guarantee $X[k+1]=0$ with the choice $U[k] = 0$. (Here, we define $\frac{0}{0}=1$.) Consequently, we focus on the case where $X[k] \neq 0$ for any $0 \leq k \leq n$. Now consider:
\begin{align}\label{eq:probintersection}
\P\left(-\tfrac{1}{n} \log \frac{|X[n]|}{|X[0]|}  \geq C_{\ze}\right) 
=&~\P \left(\sum_{k=0}^{n-1} - \log \frac{|X[k+1]|}{|X[k]|}  \geq n \cdot C_{\ze}\right),\\ 
\geq&~\P\left(\left\{-\log \frac{|X[1]|}{|X[0]|}  \geq C_{\ze}\right\},  \cdots , \left\{-\log \frac{|X[n]|}{|X[n-1]|}  > C_{\ze}\right\}\right).\label{eq:pi2}
\end{align}
Hence, we can focus on each of the events $\left\{-\log \frac{|X[k+1]|}{|X[k]|}  \geq C_{\ze}\right\}$ to give us a bound on the probability on the LHS of~\eqref{eq:probintersection}. For every realization $x[k]$ of $X[k]$, Lemma~\ref{lem:zeroerroronestep} provides a $U[k]$ that ensures that 

$$\P\left(-\log \frac{|X[k+1]|}{|X[k]|} \geq C_{\ze}\mid X[k] = x[k]\right) = 1.$$ 

Since the controller has access to a perfect observation $Y[k] = X[k]$
to generate $U[k]$, we can guarantee that the unconditional
probability, $\P\left(-\log \tfrac{|X[k+1]|}{|X[k]|} \geq
  C_{\ze}\right)$ is also equal to $1$. Hence, the controller can
causally generate a sequence of controls $U[0]$ to $U[n]$ such that the probability in~\eqref{eq:pi2}  is equal to $1$, which completes the proof.

\noindent \textbf{Converse:}
To prove the converse, we would like to show that for all $\delta > 0$,
$\P\left(-\frac{1}{n} \log \frac{|X[n]|}{|X[0]|}  > C_{\ze} + \delta \right) < 1.$
This is equivalent to showing that for all sequences of applied controls up to time $n-1$, 
\begin{align}
\P\left(-\frac{1}{n} \log \frac{|X[n]|}{|X[0]|}  \leq C_{\ze} + \delta \right) > 0. \label{eq:con1}
\end{align}
We will use induction to show this. $n = 1$ is the base case. We know from Lemma~\ref{lem:zeroerroronestep} that~\eqref{eq:con1} is true for $n=1.$ Now,
\begin{align}
\log \frac{|X[n]|}{|X[0]|} =\log \frac{|X[n-1]|}{|X[0]|} +  \log \frac{|X[n]|}{|X[n-1]|}. \label{eq:con2}
\end{align}
Let $\G = \left\{ -\frac{1}{n-1}\log \frac{|X[n-1]|}{|X[0]|} \leq C_{\ze} +\delta\right\}$, and the event $\H = \left\{-\log \frac{|X[n]|}{|X[n-1]|} \leq C_{\ze} +\delta\right\}$. 
Then, using~\eqref{eq:con2}, we can lower bound the probability in~\eqref{eq:con1} by $\P(\G \cap \H)$. The induction hypothesis implies that $\P(\G) > 0$.

Now, consider $\P(\H \mid \G)$. If the event $\G$ occurred then we can infer that $X[n-1] \neq 0$, since $-\frac{1}{n-1} \log \frac{|X[n-1]|}{|X[0]|} = \infty$ if $X[n-1]=0$. Hence, we can apply Lemma~\ref{lem:zeroerroronestep} to get that $\P(\H \mid \G) > 0$. Thus, $\P(\G \cap \H) = \P(\H \mid \G) \P(\G) > 0$, which completes the proof. 

\end{proof}

\begin{remark} As in the case of Shannon control capacity, this proof
  reveals that there is nothing lost by going to linear memoryless
  strategies for the zero-error control capacity problem. Furthermore,
  this immediately generalizes to the non-stationary case as well. 
\end{remark}

Finally, the last theorem of this section shows that when $p_{B}$ is unbounded, the zero-error control capacity is zero.
\begin{thm}
The zero-error control capacity of the system $\S$ in~\eqref{eq:Ssystem} with $B[i]$ distributed i.i.d according to $p_{B}$, where $p_{B}$ has unbounded support, is zero.
\label{thm:zeroerrorzero}
\end{thm}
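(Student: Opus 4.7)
The easy inequality $C_{\ze}(\S)\geq 0$ follows from applying the trivial control $U[k]\equiv 0$: then $X[n]=X[0]$ for every $n$, so $-\tfrac{1}{n}\log\tfrac{|X[n]|}{|X[0]|}=0$ with probability one, and $C=0$ belongs to the defining set. The content of the theorem is the matching upper bound $C_{\ze}(\S)\leq 0$, which I will prove by contradiction.

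Suppose some $C>0$ belongs to the defining set for $C_{\ze}(\S)$. Then for all sufficiently large $n$ there exists a causal control law with $\P(|X[n]|\leq |x_0|e^{-nC})=1$. Fix such an $n$ and set $M:=|x_0|e^{-nC}$, a positive constant strictly smaller than $|x_0|$. The plan is to establish by backward induction on $k$ that $\P(|X[k]|\leq M)=1$ for every $k\in\{0,1,\ldots,n\}$; specializing to $k=0$ then yields $|x_0|\leq M=|x_0|e^{-nC}$, a contradiction.

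The inductive step is where the unbounded-support hypothesis enters. Assume $\P(|X[k+1]|\leq M)=1$ and condition on $\mathcal{F}_k:=\sigma(B[0],\ldots,B[k-1])$; since $\bigE[\P(|X[k+1]|\leq M\mid\mathcal{F}_k)]=1$ and the conditional probability lies in $[0,1]$, it must equal $1$ almost surely. The control $u:=U[k]$ and the state $X[k]$ are both $\mathcal{F}_k$-measurable, and $B[k]$ is independent of $\mathcal{F}_k$ with the marginal law $p_B$. On the event $\{u=0\}$ the relation $X[k+1]=X[k]$ reduces the almost-sure conditional statement to $\mathbf{1}_{\{|X[k]|\leq M\}}=1$, i.e.~$|X[k]|\leq M$. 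On the event $\{u\neq 0\}$ the reverse triangle inequality gives $|X[k+1]|\geq |u|\,|B[k]|-|X[k]|$, and the unbounded support of $p_B$ implies $\P(|B[k]|>R\mid\mathcal{F}_k)>0$ for every $R>0$; choosing $R$ so that $|u|R-|X[k]|>M$ produces a strictly positive conditional probability of $\{|X[k+1]|>M\}$, contradicting the almost-sure equality. Hence $\{u\neq 0\}$ has probability zero, and $|X[k]|\leq M$ almost surely, closing the induction.

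The main care needed is the measure-theoretic bookkeeping in the inductive step: one must verify that ``$\P(\cdot\mid\mathcal{F}_k)=1$ a.s.'' really does rule out $\{u\neq 0\}$ on a set of positive measure, which uses the independence of $B[k]$ from $\mathcal{F}_k$ together with the fact that unbounded support of the marginal $p_B$ passes to the (still identical) conditional distribution of $B[k]$ given $\mathcal{F}_k$. Once this is handled, the rest is a clean inductive cascade back to the initial state, and no delicate tail estimates are required.
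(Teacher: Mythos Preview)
Your proof is correct and rests on the same core observation as the paper's: whenever a nonzero control is applied, the unbounded support of $B$ forces the next state to exceed any prescribed bound with positive probability, so the only way to keep $|X[n]|$ almost surely bounded is to apply zero control at every step, which in turn prevents any shrinkage.

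The organization differs, however. The paper argues in two lines: the all-zero strategy gives no decay, and if some $U[k]\neq 0$ then $|X[k+1]|$ exceeds any bound with positive probability. It tacitly takes the nonzero control to be the last one applied, leaving implicit why an earlier nonzero control cannot be ``repaired'' by later controls. Your backward induction from $k=n$ down to $k=0$ makes this explicit: assuming $|X[k+1]|\leq M$ almost surely forces $U[k]=0$ almost surely, hence $X[k]=X[k+1]$, and the bound propagates all the way back to the contradiction $|x_0|\leq M<|x_0|$. This is a tidier and more rigorous packaging of the same idea, and your handling of the conditioning on $\mathcal{F}_k$ (so that $u$ and $X[k]$ become constants against which the fresh independent $B[k]$ is played) is exactly the right way to make the informal step precise.
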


\begin{proof}
First, we note that a trivial ``do nothing'' strategy is useless since if $U[i] = 0$ for $0\leq i\leq n$, then $X[0] = X[n]$. 

Now, consider any strategy $U[0], \cdots, U[n-1]$ with not all
controls zero. Say $U[n-1] = u \neq 0$ and without loss of generality
suppose $u > 0$. Then, for every value of $X[n]$ and chosen $U[n]\neq
0$, whatever bound $M < \infty$ we may try to claim, $\P\left(|X[n] + B[n]
  u| > M\right) = \P\left(B[n] \in \left(-\infty, \frac{-M -
      X[n]}{u}\right) \bigcup \left(\frac{M -
      X[n]}{u},+\infty\right)\right) > 0$. The same argument works for
$u < 0$. 
\end{proof}

\section{$\eta$-th moment control capacity} \label{sec:eta}
Finally, we consider $\eta$-th moment stability. Theorem~\ref{thm:etareal} shows that $\S_{a}$ is stabilizable in the $\eta$-th moment sense if and only if the $\eta$-th moment control capacity of $\S$ is greater than $\log|a|$. Theorem~\ref{thm:calcetacap} shows how to single-letterize the expression for $\eta$-th  moment control capacity. As $\eta$ ranges from $0$ to $\infty$ it captures a range of stabilities from the weaker ``Shannon'' sense as $\eta \rightarrow 0$, to the zero-error notions of stability as $\eta \rightarrow \infty$ --- this is justified by Theorem~\ref{thm:limits}.

\begin{defn}
The $\eta$-th moment control capacity of the system $\S$ as in~\eqref{eq:Ssystem} is defined as
\begin{align}
C_{\eta}(\mathcal{S}) = \liminf_{n\rightarrow\infty}~~\underset{U[0], \cdots, U[n-1]}{\max}-\frac{1}{n}~\frac{1}{\eta}~\log\bigE\left[\biggr|\frac{X[n]}{X[0]}\biggr|^{\eta}\Biggr.\right].
\end{align}
\end{defn}

\begin{thm}
Consider the system $\S_{a}$ from~\eqref{eq:SAsystem}, and the
associated system $\S$ as in~\eqref{eq:Ssystem}. $\S_{a}$ is $\eta$-th
moment stabilizable if $C_{\eta}\left(\mathcal{S}\right) > \log
|a|$. Conversely, if the system $\S_{a}$ is $\eta$-th moment
stabilizable, then $C_{\eta}\left(\mathcal{S}\right) \geq \log
|a|$.
\label{thm:etareal}
\end{thm}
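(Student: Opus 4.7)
The plan is to mirror the two-direction argument used for Thm.~\ref{thm:shannoncapacitygrowth}, with Lemma~\ref{lem:whichcontrol} again serving as the bridge between the ``growth-free'' system $\S$ and the system $\S_a$ with intrinsic growth rate $a$. The only real change from the logarithmic case is that the expectation now sits inside a logarithm rather than the other way around, so the algebra runs through a power of $|a|^{n\eta}$ rather than an additive $n\log|a|$.

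For the achievability (sufficiency) direction I would assume $C_\eta(\S) > \log|a|$ and extract a causal sequence $U[0],U[1],\ldots$ on $\S$ and some $N$ such that for all $n>N$,
\begin{align*}
-\frac{1}{n\eta}\log \bigE\!\left[\left|\frac{X[n]}{X[0]}\right|^{\eta}\right] \;\geq\; \log|a|,
\end{align*}
which I would rearrange to $|a|^{n\eta}\,\bigE[|X[n]|^{\eta}] \leq |x_0|^{\eta}$. Setting $U_a[k]=a^{k}U[k]$ on $\S_a$ is admissible by Lemma~\ref{lem:whichcontrol}, and the same lemma gives $X_a[n]=a^{n}X[n]$, so $\bigE[|X_a[n]|^{\eta}] = |a|^{n\eta}\bigE[|X[n]|^{\eta}] \leq |x_0|^{\eta}$, which is the desired uniform bound establishing $\eta$-th moment stability of $\S_a$.

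For the converse (necessity) direction I would start from an $\eta$-th moment stabilizing strategy $U_a[0],U_a[1],\ldots$ for $\S_a$, giving some $M<\infty$ and $N$ with $\bigE[|X_a[n]|^{\eta}] < M$ for all $n>N$. Setting $U[k]=a^{-k}U_a[k]$, Lemma~\ref{lem:whichcontrol} yields $X[n]=a^{-n}X_a[n]$, so
\begin{align*}
\bigE\!\left[\left|\tfrac{X[n]}{X[0]}\right|^{\eta}\right] \;=\; \frac{\bigE[|X_a[n]|^{\eta}]}{|a|^{n\eta}|x_0|^{\eta}} \;<\; \frac{M}{|a|^{n\eta}|x_0|^{\eta}}.
\end{align*}
Taking $-\frac{1}{n\eta}\log(\cdot)$ of both sides, flipping the inequality, and using $\log(M/|x_0|^\eta)$ as an $o(n)$ term, I get
\begin{align*}
-\frac{1}{n\eta}\log \bigE\!\left[\left|\tfrac{X[n]}{X[0]}\right|^{\eta}\right] \;>\; \log|a| - \frac{\log M - \eta\log|x_0|}{n\eta}.
\end{align*}
Taking a $\liminf$ in $n$ on both sides and noting that the correction term vanishes gives $C_\eta(\S) \geq \log|a|$, as required.

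There is essentially no obstacle: both directions are one-line algebraic consequences of Lemma~\ref{lem:whichcontrol} once one writes the $\eta$-th moment on a common scale. The only minor subtlety worth stating explicitly in the write-up is that the maximization over causal $U[0],\ldots,U[n-1]$ in the definition of $C_\eta(\S)$ is compatible with the constructions above, because Lemma~\ref{lem:whichcontrol} guarantees measurability with respect to the correct observation $\sigma$-algebras in both directions; the deterministic time-varying scaling $a^{\pm k}$ does not disturb causality. This is why the proof is structurally identical to that of Thm.~\ref{thm:shannoncapacitygrowth} and no new tools (e.g.\ no martingale or concentration argument of the kind used in Thm.~\ref{thm:tight}) are needed.
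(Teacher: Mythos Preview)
Your proposal is correct and follows essentially the same approach as the paper's proof: both directions invoke Lemma~\ref{lem:whichcontrol} to transfer between $\S$ and $\S_a$ via the scaling $X_a[n]=a^{n}X[n]$, then perform the same algebraic manipulation of $|a|^{n\eta}$ inside the $\eta$-th moment, arriving at the identical bounds $\bigE[|X_a[n]|^{\eta}]\leq |x_0|^{\eta}$ for achievability and $-\frac{1}{n\eta}\log\bigE[|X[n]/X[0]|^{\eta}]>\log|a|-o(1)$ for the converse. Your remark that causality is preserved under the deterministic time-varying scaling is also exactly the point the paper relies on implicitly through Lemma~\ref{lem:whichcontrol}.
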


\begin{proof}[Proof of Thm.~\ref{thm:etareal}]
\textbf{Achievability}:
Since $C_{\eta}\left(\mathcal{S}\right) > \log |a|$, we know that for the system $\S$ there exists a control strategy $U[0], U[1], \cdots$ and $N < \infty$ such that for $n>N$:
\begin{align*}
\log |a| \leq -\frac{1}{n}~\frac{1}{\eta}~\log\bigE\left[\biggr|\frac{X[n]}{X[0]}\biggr|^{\eta}\Biggr.\right] 
&= -\frac{1}{n}~\frac{1}{\eta}~\log\bigE\left[\biggr|\frac{a^{-n}X_{a}[n]}{X[0]}\biggr|^{\eta}\Biggr.\right].
\end{align*}
The equality follows from Lemma~\ref{lem:whichcontrol}, using $U_{a}[k] = a^{k}U[k]$. This can be rewritten as:
\begin{align*}
\bigE\left[\biggr|\frac{X_{a}[n]}{X[0]}\biggr|^{\eta}\Biggr.\right] \leq 1,
\end{align*}
which gives the required bound $\bigE\left[|X_{a}[n]|^{\eta}\bigr.\right] \leq x_{0}^{\eta}$ to show that $\S_{a}$ is $\eta$-th moment stabilizable.

\textbf{Converse:}
There exists a control strategy $U_{a}[0], U_{a}[1], \cdots $ and $N, M < \infty$ such that for $\forall n>N$, we have that
$\bigE[|X_{a}[n]|^{\eta}] < M$. We rewrite this using Lemma~\ref{lem:whichcontrol} and divide by $|X[0]| = x_{0}$, to get:
\[
\bigE\left[\biggr|\frac{a^{n}X[n]}{X[0]}\biggr|^{\eta}\Biggr.\right] < \frac{M}{x_{0}^{\eta}}.
\]
This can be further manipulated to give:
\[
-\frac{1}{n} \frac{1}{\eta} \log \bigE \left[ \biggr|\frac{X[n]}{X[0]}\biggr|^{\eta}\Biggr.\right] > \log |a| - \frac{1}{n} \frac{1}{\eta} \log \frac{M}{x_{0}^{\eta}}.
\]
Hence, 
\[
\liminf_{n\to\infty} -\frac{1}{n} \frac{1}{\eta} \log \bigE \left[ \biggr|\frac{X[n]}{X[0]}\biggr|^{\eta}\Biggr.\right] \geq \log |a|,
\]
which gives the desired result.
\end{proof}

The next theorem shows how to calculate the $\eta$-th moment control capacity in the case of i.i.d. $B[n]$'s.
\begin{thm}
The $\eta$-th moment control capacity of the system $\S(p_{B})$, parameterized with a single distribution $p_{B}$ with $B[n]$s are i.i.d.~is given by:
\begin{align}
C_{\eta}(\S(p_{B})) =
  \underset{d\in\mathbb{R}}{\max}~-\frac{1}{\eta}\log\bigE\bigr[|1 + B
  \cdot d|^{\eta}\bigr] \label{eq:computingetacapacity}
\end{align}
\noindent where $B \sim p_{B}$. 
\label{thm:calcetacap}
\end{thm}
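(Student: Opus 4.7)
The plan is to mirror the structure of the proof of Theorem~\ref{thm:calcshannoncap}, but replace the log/telescoping-sum argument (which exploited linearity of expectation) with a product-factorization that exploits the independence of the $B[k]$'s. Throughout, I would rely on the same no-atom-at-zero observation (Lemma~\ref{lem:nozero}) to justify dividing by $X[k]$, handling the degenerate $X[k]=0$ case separately by applying $U[k]=0$ thereafter.

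First I would prove a one-step analog of Lemma~\ref{lem:shcclem}: given $X[n]=x[n]\neq 0$, the quantity $-\frac{1}{\eta}\log\bigE\!\left[|X[n+1]/x[n]|^{\eta}\right]$ depends only on the ratio $d=U[n]/x[n]$ because $X[n+1]/x[n]=1+B[n]d$. Maximizing over $U[n]$ (i.e.~over $d\in\mathbb R$) gives the state-independent one-step capacity $C_{1,\eta}:=\max_{d}-\tfrac{1}{\eta}\log\bigE[|1+dB|^{\eta}]$, attained by a linear memoryless law $U[n]=d^{*}x[n]$.

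For achievability, I would apply that linear memoryless strategy at every step. Since $X[k]\neq 0$ almost surely by Lemma~\ref{lem:nozero}, we get the telescoping product
\begin{equation*}
\left|\frac{X[n]}{X[0]}\right|^{\eta}=\prod_{k=0}^{n-1}\left|1+d^{*}B[k]\right|^{\eta}.
\end{equation*}
Because the $B[k]$'s are i.i.d., the expectation factorizes:
\begin{equation*}
\bigE\!\left[\left|\tfrac{X[n]}{X[0]}\right|^{\eta}\right]=\left(\bigE\!\left[|1+d^{*}B|^{\eta}\right]\right)^{\!n},
\end{equation*}
so taking $-\frac{1}{n\eta}\log$ yields exactly $C_{1,\eta}$, giving $C_{\eta}(\S(p_{B}))\geq C_{1,\eta}$.

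For the converse, I would induct on $n$. For any causal law, define $D[k]:=U[k]/X[k]$ whenever $X[k]\neq 0$; because $U[k]$ depends only on $Y[0],\dots,Y[k]$, the quantity $D[k]$ is $\sigma(B[0],\dots,B[k-1])$-measurable, hence independent of $B[k]$. Writing
\begin{equation*}
\bigE\!\left[\left|\tfrac{X[n]}{X[0]}\right|^{\eta}\right]=\bigE\!\left[\prod_{k=0}^{n-1}\left|1+D[k]B[k]\right|^{\eta}\right],
\end{equation*}
I would condition on $\mathcal F_{n-2}:=\sigma(B[0],\dots,B[n-2])$ so that the innermost expectation $\bigE[\,|1+D[n-1]B[n-1]|^{\eta}\mid\mathcal F_{n-2}]$ becomes $\bigE[|1+dB|^{\eta}]$ evaluated at the constant $d=D[n-1](\omega)$, which by definition of $C_{1,\eta}$ is bounded below by $e^{-\eta C_{1,\eta}}$. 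Peeling off one factor at a time and iterating gives
\begin{equation*}
\bigE\!\left[\left|\tfrac{X[n]}{X[0]}\right|^{\eta}\right]\geq e^{-n\eta C_{1,\eta}},
\end{equation*}
so $-\frac{1}{n\eta}\log\bigE[\,\cdot\,]\leq C_{1,\eta}$ for every causal strategy, and hence $C_{\eta}(\S(p_{B}))\leq C_{1,\eta}$.

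The main subtlety, and the only place where care is really needed, is the measurability/conditioning step in the converse: one must verify that $D[k]$ really is a constant once one conditions on $\mathcal F_{k-1}$ and that the event $\{X[k]=0\}$ can be ignored (on it the product is zero, which only strengthens the lower bound). Everything else — the achievability factorization, the reduction to a single-letter optimization, and the remark that linear memoryless strategies are optimal — follows the Shannon-capacity template almost verbatim, with $\log|\cdot|$ replaced by $|\cdot|^{\eta}$ and ``sum becomes product, linearity of expectation becomes independence''.
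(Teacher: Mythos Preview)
Your proposal is correct and matches the paper's proof almost line for line: the paper states exactly the one-step lemma you describe (Lemma~\ref{lem:etalem}), proves achievability by the same independence/product-factorization argument, and proves the converse by the same induction that conditions on $B[0],\ldots,B[n-2]$ to peel off the last factor and lower-bound it by $\min_{d}\bigE[|1+dB|^{\eta}]$. One small caveat: Lemma~\ref{lem:nozero} carries a no-atom hypothesis that is \emph{not} assumed in Theorem~\ref{thm:calcetacap}, so you should not invoke it here; the paper instead handles $X[k]=0$ exactly as you already indicate (apply $U[k]=0$, set the ratio to $0$), which suffices without any atom restriction on $p_{B}$.
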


The proof uses a one-step lemma, just as in the Shannon and zero-error cases. 

\begin{lem}
Suppose the system state at time $n$ is $x[n] \in \mathbb{R}, x[n]
\neq 0$. Then, for system $\S$ in~\eqref{eq:Ssystem} define the
one-step $\eta$-control capacity as 
\begin{align*}
C_{1,\eta}(x[n]) =\underset{U[n]}{\max}~-\frac{1}{\eta}\log\bigE\Biggr[ \biggr|\frac{X[n+1]}{x[n]}\biggr|^{\eta}\Biggr].
\end{align*}
where $U[n]$ is any function of $x[n]$.
\noindent Here, the expectation is taken over the random realization of $B[n] \sim p_{B}$ i.i.d., and $X[n] =x[n] \neq 0$ has been realized and fixed. Then, $C_{1,\eta}(x[n])$ does not depend on $x[n]$ or $n$, and is given by
\begin{align*}
C_{1,\eta}(x[n]) = C_{1,\eta} =  \underset{d\in\mathbb{R}}{\max}~-\frac{1}{\eta} \log \bigE \bigr[|1 + B[n] \cdot d|^{\eta} \bigr],
\end{align*}
where $B[n] \sim p_{B}$.
Hence, there exists a constant $d$ such that if $U[n] = d \cdot X[n]$, then  
\begin{align*}
- \log \bigE\Biggr[\biggr|\frac{X[n+1]}{x[n]}\biggr|^{\eta}\Biggr] = C_{1,\eta}. 
\end{align*}
\label{lem:etalem}
\end{lem}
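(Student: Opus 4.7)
The plan is to mirror the argument used for the Shannon case in Lemma~\ref{lem:shcclem}, exploiting the fact that the only dependence of $X[n+1]$ on the past enters through $x[n]$ (which is now deterministic, having been realized) and $U[n]$ (which by hypothesis is a function of $x[n]$ alone). First I would use $X[n+1] = x[n] + B[n] U[n]$ and divide through by $|x[n]|$, which is legitimate because $x[n] \neq 0$ by assumption. This yields
\begin{align*}
\left|\frac{X[n+1]}{x[n]}\right|^{\eta} = \left|1 + B[n]\cdot \frac{U[n]}{x[n]}\right|^{\eta},
\end{align*}
and since $B[n] \sim p_{B}$ is independent of $x[n]$, the expectation depends on $U[n]$ and $x[n]$ only through the ratio $U[n]/x[n]$.

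Next I would introduce the change of variable $d := U[n]/x[n] \in \mathbb{R}$. As $U[n]$ ranges over all real-valued functions of the fixed scalar $x[n]$, the ratio $d$ ranges over all of $\mathbb{R}$. Substituting this into the defining expression gives
\begin{align*}
C_{1,\eta}(x[n]) = \max_{U[n]} -\frac{1}{\eta}\log \bigE\!\left[\left|1 + B[n]\cdot \tfrac{U[n]}{x[n]}\right|^{\eta}\right] = \max_{d\in\mathbb{R}} -\frac{1}{\eta}\log \bigE\!\left[|1 + B[n]\cdot d|^{\eta}\right],
\end{align*}
which manifestly does not depend on $x[n]$ or on $n$ (because the $B[n]$'s are i.i.d.~$p_{B}$). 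Letting $d^{\star}$ be any maximizer of this one-dimensional optimization (one exists whenever the supremum is attained; otherwise we take an approximate maximizer and the statement becomes a supremum, but under the hypothesis of the lemma we treat $d^{\star}$ as realized), the control law $U[n] = d^{\star} \cdot x[n]$ achieves the claimed value, proving the last sentence of the lemma.

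The only subtlety worth flagging is the existence of the maximizing $d$: the function $d \mapsto \bigE[|1+Bd|^{\eta}]$ is continuous in $d$ on any compact set, equals $1$ at $d=0$, and typically grows without bound as $|d| \to \infty$ (so that $-\frac{1}{\eta}\log$ of it tends to $-\infty$), so the supremum is attained on a compact set by standard compactness arguments; in the boundary cases where the supremum is not attained the statement can be read as ``for every $\epsilon > 0$ there exists $d$ such that $U[n] = d\cdot x[n]$ comes within $\epsilon$ of $C_{1,\eta}$,'' which suffices for the telescoping argument in the proof of Theorem~\ref{thm:calcetacap}. I do not expect any real obstacle: the lemma is essentially a restatement of scale-invariance combined with the i.i.d.~assumption on $B[n]$.
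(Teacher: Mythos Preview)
Your proposal is correct and matches the paper's own approach: the paper simply states that the proof is ``essentially identical to the proof of Lemma~\ref{lem:shcclem} in the Shannon case and is omitted,'' and your argument is precisely that substitution $d = U[n]/x[n]$ applied to $|X[n+1]/x[n]|^{\eta}$. The extra paragraph you include on existence of the maximizing $d$ is a nice bit of care that the paper does not make explicit, but it does not change the route.
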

\begin{proof}
This proof is essentially identical to the proof of Lemma~\ref{lem:shcclem} in the Shannon case and is omitted here. 
\end{proof}

\begin{proof}[Proof of Thm.~\ref{thm:calcetacap}]
\textbf{Achievability:}
The argument here is very similar to that of the Shannon case. We use the linear law from Lemma~\ref{lem:etalem} and take advantage of the independence of the $B[k]$'s to turn an expectation of a product into a product of expectations. Since each of the terms is identical, the result follows. 

\textbf{Converse:}
We will use induction to prove the converse. The base case for $n=1$ follows immediately. 

For $n>1$, if $X[n] =0$ then by choosing $U[n] = 0$, the controller can ensure the minimum possible $|X[n+1]| = 0$, which is the optimal action. Consequently, we restrict attention to control strategies that apply a $0$ control when faced with a $0$ state. These are sample-path by sample-path as good as or better than other strategies when it comes to minimizing $|X[n]|$. 

Let $Q_n := 0$ if $X[n-1] = 0$ and $Q_n := \frac{X[n]}{X[n-1]}$ otherwise. If $X[n-1] \neq 0$, then we can write 

\begin{equation}
\left| \frac{X[n]}{x_{0}}\right|^{\eta} = \left|\frac{X[n]}{X[n-1]}\right|^{\eta} \left| \frac{X[n-1]}{x_{0}}\right|^{\eta}. \label{eq:xnozero}
\end{equation}
Using~\eqref{eq:xnozero} and the definition of $Q_{n}$ we can write:
\[
\bigE \left[\left| \frac{X[n]}{x_{0}}\right|^{\eta} \biggr. \right] = \bigE\left[\left|Q_n\right|^{\eta} \left| \frac{X[n-1]}{x_{0}}\right|^{\eta} \biggr. \right].
\]

Ideally we would like to separate the two terms above to use
induction, but since the terms are not necessarily independent, this
is not directly possible. $B[n-1]$ is the new term in $Q_{n}$, and we condition on $B[0]$ to $B[n-2]$ to focus on this.  We have that
\begin{align}
\bigE \Biggr[\bigE\left[\left|Q_n \right|^{\eta} ~ \left|\frac{X[n-1]}{x_{0}}\right|^{\eta}~\biggr|~B[0], \ldots, B[n-2] \right] \Biggr] 
= \bigE \Biggr[ \left|\frac{X[n-1]}{x_{0}}\right|^{\eta}~ \bigE\left[\left|Q_n \right|^{\eta} ~\biggr|~B[0], \ldots, B[n-2] \right]\Biggr]. \label{eq:induc2}
\end{align}
The equality in~\eqref{eq:induc2} follows because $\tfrac{X[n-1]}{x_0}$ is a constant when conditioned on $B[0], \ldots, B[n-2]$.  

Let $Q := \bigE\left[|Q_n|^\eta \bigr| B[0], \ldots, B[n-2]\right]$ and $R := \left|\frac{X[n-1]}{x_0}\right|^{\eta}$. If $X[n-1] = 0$, then $Q = 0$ and $R = 0$.

If $X[n-1] = x[n-1]\neq 0$, then by Lemma~\ref{lem:etalem} we have that
\[
\min_{U[n-1]} \bigE\left[\left| \frac{X[n]}{x[n-1]}\right|^\eta\right] = \min_{d} \bigE\left[\left|1  + B[n-1]d\right|^\eta\right]. 
\]
Let $Q^{'} := \min_{d} \bigE\left[\left|1  + B[n-1]d\right|^\eta\right].$ Hence $Q \geq Q^{'}$ if $X[n-1] \neq 0$, i.e., whenever $Q$ and $R$ are non-zero themselves. Hence, $ R Q \geq R Q'$ whenever $R \neq 0$.
Thus~\eqref{eq:induc2} is  lower-bounded by 
\begin{align}
\bigE \Biggr[ \left|\frac{X[n-1]}{x_{0}}\right|^{\eta}~ \min_{d} \bigE\left[\left|1  + B[n-1]d\right|^\eta\right]\Biggr]
= \left(\min_{d} \bigE\left[\left|1  + B[n-1]d\right|^\eta\right]\right) ~\bigE \Biggr[ \left|\frac{X[n-1]}{x_{0}}\right|^{\eta}\Biggr].\label{eq:induc3}
\end{align}

From \eqref{eq:induc3}, induction and Lemma~\ref{lem:etalem} give the desired converse. 
\end{proof}

Once again we see that there is no loss of optimality in restricting to linear memoryless stationary strategies for the purposes of calculating the $\eta$-th moment control capacity of the system $\S(p_{B})$.

\begin{cor}
Consider the system $\S(p_{B})$ from~\eqref{eq:Ssystem}, with $B[n] \sim p_{B}$ i.i.d.~with mean $\mu_{B}$ and variance $\sigma_{B}^{2}$. Then
\begin{align}
C_{2}(\S(p_{B})) = \frac{1}{2} \log \left(1 + \frac{\mu_{B}^{2}}{\sigma_{B}^{2}} \right).
\end{align} \label{cor:meansq}
\end{cor}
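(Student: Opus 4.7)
The plan is to directly apply Theorem~\ref{thm:calcetacap} with $\eta = 2$ and then carry out the elementary quadratic optimization over $d$. Specifically, setting $\eta = 2$ gives
\[
C_2(\S(p_B)) = \max_{d \in \mathbb{R}} -\frac{1}{2}\log \bigE\bigl[(1 + Bd)^2\bigr],
\]
so the task reduces to \emph{minimizing} the inner second moment $f(d) := \bigE[(1+Bd)^2]$ over the scalar $d$.

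First I would expand using linearity of expectation and the definitions $\mu_B = \bigE[B]$ and $\sigma_B^2 + \mu_B^2 = \bigE[B^2]$ to obtain
\[
f(d) = 1 + 2\mu_B d + (\sigma_B^2 + \mu_B^2)\, d^2.
\]
This is a convex quadratic in $d$ (assuming $\sigma_B^2 + \mu_B^2 > 0$; the degenerate case $B \equiv 0$ makes $C_2 = 0$, consistent with $\mu_B = 0$). Setting $f'(d) = 0$ gives the unique minimizer
\[
d^\star = -\frac{\mu_B}{\sigma_B^2 + \mu_B^2}.
\]
Substituting back yields
\[
f(d^\star) = 1 - \frac{\mu_B^2}{\sigma_B^2 + \mu_B^2} = \frac{\sigma_B^2}{\sigma_B^2 + \mu_B^2}.
\]

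Finally, plugging this minimum into the outer expression gives
\[
C_2(\S(p_B)) = -\frac{1}{2}\log \frac{\sigma_B^2}{\sigma_B^2 + \mu_B^2} = \frac{1}{2}\log\!\left(1 + \frac{\mu_B^2}{\sigma_B^2}\right),
\]
which is the claimed identity. There is essentially no obstacle here: the only thing to be slightly careful about is the degenerate case $\sigma_B^2 = 0$ (deterministic $B$), where the formula is interpreted as $+\infty$ when $\mu_B \neq 0$ — and indeed this is consistent with the earlier observation that an atom at a nonzero value yields infinite control capacity — and the case $B \equiv 0$, where both $\mu_B$ and $\sigma_B$ vanish and $C_2 = 0$ trivially.
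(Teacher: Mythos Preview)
Your proof is correct and follows essentially the same approach as the paper: apply Theorem~\ref{thm:calcetacap} with $\eta=2$, differentiate the resulting quadratic in $d$ to find the optimizer $d^\star = -\mu_B/(\mu_B^2+\sigma_B^2)$, and substitute back. You are in fact more explicit than the paper, which simply states the optimal $d$ and asserts that substitution gives the result; your handling of the degenerate cases is a nice addition.
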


\begin{proof}
We know that
\begin{align*}
C_{2}(\S(p_{B})) = \underset{d\in\mathbb{R}}{\max}~ -\frac{1}{2} \log\bigE\left[|1 + B \cdot d|^{2}\right]
\end{align*}

We can compute the optimum $d = -\frac{\mu_{B}}{\mu_{B}^{2} +
  \sigma_{B}^{2}}$ by taking derivatives. 
                               
Substituting $d$ back into the equation gives the desired result. This recovers the second-moment result that was known from~\cite{uncertaintyThreshold}.  \end{proof}

\begin{remark}
This optimality of linear strategies for the second moment case was seen in the classical uncertainty threshold principle, but we can now say they are optimal from a control capacity perspective for all moments. 
\end{remark}

\begin{remark}
The similarity of this expression to the $\frac{1}{2}\log(1 + SNR)$ formula for communication capacity in the AWGN case is notable. Previous work by Elia~\cite{eliaFading} as well as Martins and co-authors~\cite{martinsUncertain,martins2007fundamental} noticed similar patterns in related systems. Our goal here is to develop a unifying theory for these observations. \end{remark}

The last theorem of this section shows how the $\eta$-th moment control capacity is really the broadest sense of capacity. As $\eta\to0$ this tends to the Shannon notion of capacity and as $\eta\to\infty$ this tends to the zero error notion of capacity.

\begin{thm}
\label{thm:limits}
Consider the system $\S$ in~\eqref{eq:Ssystem}. If the $B[k]$ are i.i.d~continuous random variables with no
atoms, except possibly at $B=0$, with a bounded density $p_{B}$ such that there exist $ \gamma, \xi > 1$ so that
$p_B(b) \leq \gamma \min\left(1, \frac{\xi}{|b|}\right)$, and have finite first and second moments $\bigE[|B|]$ and $\bigE[|B|^2]$, then if $\lim_{\eta \rightarrow 0} C_{\eta}\left(\S(p_{B})\right)$ exists and is finite, $\lim_{\eta \rightarrow 0} C_{\eta}(\S) = C_{\sh}\left(\S(p_{B})\right).$ If $\lim_{\eta \rightarrow 0} C_{\eta}(\S) = \infty$, then $C_{\sh}(\S) = \infty$ as well.

Similarly, if $B[k]$ are i.i.d.~with essential support on $[b_1,b_2]$ and $C_{\ze}\left(\S(p_{B}) \right) > 0$, then $\lim_{\eta \rightarrow \infty} C_{\eta}\left(\S(p_{B})\right) = C_{\ze}\left(\S(p_{B})\right)$.  If $\lim_{\eta \rightarrow \infty} C_{\eta}\left(\S(p_{B})\right) = 0$, then $C_{\ze}\left(\S(p_{B})\right) = 0$ as well. 
\end{thm}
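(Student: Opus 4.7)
By Theorem~\ref{thm:calcetacap}, write $C_\eta = \sup_d f_\eta(d)$ with $f_\eta(d) := -\tfrac{1}{\eta}\log\bigE[|1+Bd|^\eta]$, and put $f_0(d) := \bigE[-\log|1+Bd|]$ and $f_\infty(d) := -\log\operatorname{ess\,sup}|1+Bd|$, so that $C_{\sh} = \sup_d f_0(d)$ by Theorem~\ref{thm:calcshannoncap} and $C_{\ze} = \sup_d f_\infty(d)$ by Theorem~\ref{thm:zeroerrorcalc}. The backbone of the argument is that for each fixed $d$, $\eta \mapsto f_\eta(d)$ is nonincreasing: this is the $L^p$-norm monotonicity $\eta \mapsto (\bigE[Z^\eta])^{1/\eta}$ applied to $Z=|1+Bd|$, which needs only that $\bigE[|1+Bd|^\eta]$ be finite (true for $\eta\in(0,2]$ under the density/moment hypothesis via the finite $\bigE[|B|^2]$, and for all $\eta > 0$ under bounded support). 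Thus $\eta \mapsto C_\eta$ is also nonincreasing, and both limits exist in $[0,\infty]$.

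\medskip

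\noindent\textbf{Pointwise boundary limits and the $\eta\to 0$ case.} As $\eta\to\infty$ in the bounded-support case, $(\bigE[|1+Bd|^\eta])^{1/\eta}\to\operatorname{ess\,sup}|1+Bd|$ is the standard $L^\eta$-to-$L^\infty$ convergence, so $f_\eta(d)\downarrow f_\infty(d)$. As $\eta\to 0^+$ under the density/tail hypothesis, the exponential left-tail bound of Lemma~\ref{lem:varbound} combined with $\bigE|B|<\infty$ makes $\eta \mapsto \bigE[|1+Bd|^\eta]$ differentiable at $\eta=0$ with derivative $\bigE[\log|1+Bd|]$; L'Hopital then yields $f_\eta(d)\uparrow f_0(d)$. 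The $\eta\to 0$ claim is then immediate by interchanging suprema:
\[
\lim_{\eta\to 0^+} C_\eta \;=\; \sup_{\eta>0}\sup_{d} f_\eta(d) \;=\; \sup_{d}\sup_{\eta>0} f_\eta(d) \;=\; \sup_{d} f_0(d) \;=\; C_{\sh},
\]
where the first equality uses monotonicity of $C_\eta$ in $\eta$ and the third uses $f_\eta(d)\uparrow f_0(d)$. The chain is valid in $[0,\infty]$, which jointly settles both sub-cases of the theorem (finite limit and infinite limit).

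\medskip

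\noindent\textbf{The $\eta\to\infty$ case and main obstacle.} The analogous swap only yields $\inf_\eta \sup_d f_\eta(d) \geq \sup_d \inf_\eta f_\eta(d)$, i.e.\ $\lim_\eta C_\eta \geq C_{\ze}$. The reverse inequality, under $C_{\ze}>0$, is the crux, and the plan is to localize the supremum to a compact $d$-window and then invoke Dini's theorem. Monotonicity gives $f_\eta(d) \leq f_1(d) = -\log\bigE[|1+Bd|]$ for $\eta\geq 1$, and $f_1(d)\to -\infty$ as $|d|\to\infty$ because $\bigE|B|>0$ (which is precisely the condition $0 \notin [b_1,b_2]$, equivalent to $C_{\ze}>0$ by Theorem~\ref{thm:zeroerrorcalc}). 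Choose $M$ so that $f_1(d)<0\leq C_{\ze}$ for $|d|>M$; then $C_\eta=\sup_{|d|\leq M} f_\eta(d)$ for every $\eta\geq 1$. On $[-M,M]$ each $f_\eta$ is continuous in $d$ (dominated convergence, using the bounded support of $B$), and the pointwise monotone decrease $f_\eta\downarrow f_\infty$ is to the continuous limit $f_\infty$, so Dini's theorem upgrades this to uniform convergence, yielding $C_\eta \to C_{\ze}$. The remaining sub-case $\lim_\eta C_\eta=0 \Rightarrow C_{\ze}=0$ is trivial from $C_\eta\geq C_{\ze}\geq f_\infty(0)=0$. The main obstacle is precisely the promotion of pointwise monotone decrease into convergence of the suprema when the $d$-domain is not a priori compact; the uniform-in-$\eta$ tail bound $f_\eta\leq f_1$ together with Dini's theorem is what resolves it.
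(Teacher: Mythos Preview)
Your proof is correct and, interestingly, inverts the structure of the paper's argument. For $\eta \to 0$ the paper invokes Dini's theorem: it first spends considerable effort confining the optimizing $d$ to a bounded interval $[-d_m, d_m]$ (via the auxiliary lower bound $L_k$ on $\ln$ and the tail estimate of Lemma~\ref{lem:varbound}), and only then appeals to Dini on that compact set. Your observation that monotone \emph{increasing} pointwise convergence $f_\eta \uparrow f_0$ makes the interchange a triviality---$\lim_{\eta\to 0}\sup_d f_\eta = \sup_{\eta}\sup_d f_\eta = \sup_d\sup_\eta f_\eta = \sup_d f_0$---sidesteps all of that machinery and is genuinely cleaner. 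Conversely, at $\eta \to \infty$ the paper simply notes that the $L^\eta$ norm tends to the essential supremum, observes that this matches the optimization characterization \eqref{eq:optimizationcharacterizationzeroerror}, and declares the result proved; the interchange of $\lim_\eta$ and $\sup_d$ in the \emph{decreasing} direction is not justified there. Your use of Dini at this end, with the compactification coming from the uniform majorant $f_\eta \le f_1$ and $f_1(d)\to -\infty$, is the rigorous way to close that direction. So you have placed Dini where it is actually needed and removed it where it is not.

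One minor slip: the parenthetical ``$\bigE|B|>0$ (which is precisely the condition $0 \notin [b_1,b_2]$)'' is not literally correct---$\bigE|B|>0$ is strictly weaker, holding whenever $B$ is not a.s.\ zero. This is harmless: $\bigE|B|>0$ is all you need for $\bigE|1+Bd| \ge |d|\,\bigE|B| - 1 \to \infty$ and hence $f_1(d)\to -\infty$, and your localization $C_\eta = \sup_{|d|\le M} f_\eta(d)$ only requires $f_1(d) < 0 = f_\eta(0) \le C_\eta$ for $|d|>M$, which follows.
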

 
\begin{proof}

    The case $\lim_{\eta \rightarrow 0} C_{\eta}(\S) = \infty$ follows
    immediately from the operational meaning of the control
    capacities. Since the Shannon sense is weaker than any $\eta$-th
    moment sense of stability, the Shannon control capacity will at least be $\lim_{\eta \rightarrow 0} C_{\eta}(\S)$. Similarly, the case $\lim_{\eta
      \rightarrow \infty} C_{\eta}(\S) = 0$, also follows since the zero-error sense is operationally stronger than any $\eta$-th moment sense of stability.

For $\eta \to \infty$ we see that the inner expectation in $- \log \sqrt[\eta]{\bigE \left[ |1 + Bd|^\eta \right]}$ will be dominated by the maximum possible values that $|1 + Bd|$ can take. This means that
\begin{align*} 
\lim_{\eta \rightarrow \infty} g_{\eta}(u)  &= - \lim_{\eta \rightarrow \infty} \log \sqrt[\eta]{\max_{b \in [b_1,b_2]} |1 + bd|^\eta } \\
&= - \log \max_{b \in [b_1,b_2]} |1 + bd|,
\end{align*}
which agrees with the optimization characterization \eqref{eq:optimizationcharacterizationzeroerror} of the zero-error control capacity, proving the result.

The nontrivial case is $\eta \rightarrow 0$ when the limits are finite. For convenience of differentiating, in this section we will use nats (base e) instead of bits (base 2) to measure control information.

Let $Z_{d} := \ln |1 + B\cdot d|$ be a family of random variables parametrized by the scalar $d$. The dependence on the random variable $B$ is suppressed for notational convenience. 

Consider $Z_{d}$'s log-moment-generating function $F(d,s) = \ln \bigE
\left[ e^{Z_{d} s}  \right] = \ln \bigE \left[ |1 + B d|^{s}
  \right]$. By the standard properties of log-moment-generating
functions, this is a convex function of $s$. Since the log-moment-generating function is also the cumulant generating
function, the first two terms of the Taylor expansion around $s=0$ are given by the first two cumulants of $Z_{d}$, the
mean and the variance. Thus 
\begin{align}
  F(d,s) &= s \bigE[Z_{d}] + \frac{s^2}{2} \mbox{Var}[Z_{d}] + \cdots
  \label{eq:taylorexpand}
\end{align}

Now, consider the expression $G(d,s) = \frac{-F(d,s)}{s}$. From \eqref{eq:taylorexpand}, we know 
$$G(d,s) = - \bigE[Z_{d}] - \frac{s}{2} \mbox{Var}[Z_{d}] + \cdots.$$

Computing the derivative of $G(d,s)$ we have that:
  \begin{align}
    \frac{\partial}{\partial s} G(d,s) &=
    \frac{(\ln \bigE \left[ |1 + Bd|^s \right])
      \bigE\left[|1+Bd|^s\right]
      -s \bigE\left[|1 + Bd|^s \ln|1+Bd|\right]}
         {s^2 \bigE\left[|1+Bd|^s\right]} \label{eq:derivativeOfD}
  \end{align}
We note that $f(x) = x \ln (x)$ is convex, and hence Jensen's inequality implies that $f\left( \bigE[x] \right) \leq \bigE\left[ f(x)\right].$ Choosing $x = |1 + Bd|$ gives us that the numerator in~\eqref{eq:derivativeOfD} must be $\leq 0$, and hence $\frac{\partial}{\partial s} G(d,s) \leq 0$, which implies that $G(d,s)$ must be a decreasing function of $s$. 

Now, from \eqref{eq:computingetacapacity} we know that the control capacity $C_\eta(\S) = G(d^*_{\eta},\eta)$ where $d^*_{\eta} := \argmax_d G(d,\eta)$. 
Furthermore, looking at the Shannon control capacity expression in \eqref{eq:computingShannonCapacity}, we can define
$J(d) := -\bigE \left[ Z_{d} \right] = G(d,0)$. Let $\argmax_{d} J(d) := d^*_{\sh}$, then we have $J(d^{*}_{\sh}) = C_{\sh}(\S)$. Finally, since $G(d,s)$ is a decreasing function of $s$, $J(d)$ is an upper bound on $G(d,s)$ for all $s\geq 0$.

We note now that 
\begin{align}
\lim_{\eta \rightarrow 0} C_{\eta}(\S) = \lim_{\eta \rightarrow 0} \sup_d G(d,\eta). \label{eq:interchangegoal1}
\end{align}

Since, $\sup_d  \lim_{\eta \rightarrow 0} G(d,\eta) = \sup_{d} J(d) = C_{\sh}(\S)$, we are done if we can show that the limit and sup can be interchanged in~\eqref{eq:interchangegoal1}. To show this,
  we need to establish that $G(d,\eta)$ is uniformly continuous in
  $\eta$. However, this is not clear when $d$ is unbounded. Hence, we
  will show that the optimizing $d^{*}_{\eta}$ and $d^{*}_{\sh}$ are
  attained in a bounded interval around $0$, and establish uniform continuity in that interval.

We show that there is a bound $d_m(B)$
  that depends on the distribution of $B$ for which we know that
  $-d_m(B) \leq d^*(\eta) \leq +d_m(B)$ and also $-d_m(B) \leq
  d^*_{\sh} \leq +d_m(B)$. Note that $G(0,\eta) = J(0) = 0$. Hence it suffices to show that the functions $G(d,\eta)$ and $J(d)$ are both
  bounded above by zero outside this interval for $d$. 
  
  Since $J(d) > G(d,\eta)$ for all $\eta > 0$, we focus on $J(d)$ and
  show that it is bounded above by $0$ outside an interval $[-d_m(B), +d_m(B)]$. The
  function $L_{k}(x)$, parametrized by $k > 1$ serves as the requisite
  lower bound on the natural logarithm function $\ln$ for this
  purpose. For $x> 0$:  
  \begin{align}
    L_k(x) &:= \begin{cases}
      \ln(x) & \mbox{if }x < 1 \\
      0 & \mbox{if }1 \leq x < k \\
      \ln k & \mbox{if }x \geq k
    \end{cases}. \label{eq:logboundingfunction}
  \end{align}
  Because $\ln$ is an increasing function, $L_k(x)$ as defined in
  \eqref{eq:logboundingfunction} is valid lower-bound to $\ln(x)$ by inspection. 

  Applying this lower bound by expanding the definitions of $J(d)$ and
  $Z_{d}$, we know that 
  \begin{align*}
    J(d) = -\bigE \left[ \ln |1 + Bd| \right]
    \leq \bigE \left[-L_k(|1 + Bd|) \right].
  \end{align*}

  We split the random variable $-L_k(|1 + Bd|)$ into its positive
  component $J^+ \geq 0$ and its negative component $J^- < 0$ so
  that $J^+ + J^- = -L_k\left(|1 + Bd|\right)$ and $J^{+} J^{-} = 0$ ensuring that
  both cannot simultaneously be nonzero. 
  
  First we upper bound  the expectation 
  $$\bigE[J^+] = \int_0^\infty \P(J^+ > t)~dt.$$ 
  As in Lemma~\ref{lem:varbound} let $Z_{d} = \ln |1 + B d|$. Since $L_k(\cdot) = \ln(\cdot)$ in the interval
  $(0,1)$, we know that $J^+ = -Z_{d}^{-}$ from Lemma~\ref{lem:varbound}. Applying the lemma, we know that for all $t > 0$,
   $$\P(J^+ > t) = \P(Z_{d} < -t) \leq K_{B} e^{-t}.$$ 
  Integrating this gives us that for all $d$:
  \begin{align}
    \bigE[J^+] \leq K_{B}.
    \label{eq:boundonpositivepart}
  \end{align}

Now we must bound $\bigE[J^{-}]$. From~\eqref{eq:logboundingfunction} we have that $L_{k}(|1 + Bd|)$ is an
indicator random variable that equals $-\ln k$ whenever $|1 + Bd| \geq e^k$. Consequently we know that 
  $$\bigE[ J^-] \leq -\left(\ln k\right) \P\left( |1 + Bd| \geq e^k\right).$$ 
  We know by an argument parallel to the one for Lemma~\ref{lem:varbound} that
  since $\gamma$ bounds the density, \mbox{$\P\left(|1 + Bd| \leq  e^k\right) \leq \frac{2\gamma}{|d|}e^{k}$} and so 
$\P\left( |1 + Bd| \geq e^k\right) \geq 1 - \frac{2\gamma}{|d|}e^{k}.$  
Hence we have that
$$\bigE[ J^-] \leq -\left(\ln k\right) \left(1 - \frac{2\gamma}{|d|}e^{k}\right).$$
  
  Choose parameter $k = e^{2 K_{B}}$ so $-\ln k = -2 K_B$. Then, let $d_m = 4 \gamma
  e^{k}$ be our bound on $|d|$ so that $\bigE[ J^-] \leq -K_{B}$. Combining this with 
  \eqref{eq:boundonpositivepart} tells us that for all $|d| > d_m$,
  the function $J(d)$ is upper-bounded by zero. Since $J(0) = 0$, we focus our attention on the interval $[-d_m, +d_m]$.

Now it remains to show show uniform continuity of $G(d,s)$ within the interval $[-d_m, +d_m]$ for $s$ in the neighborhood of $s=0$. For this, we notice that $G(d,s)$ is monotone in $s$, and both $G(d,s)$ and $J(d)$ are continuous in $d$. Hence, by Dini's theorem, the convergence must be uniform, which completes the proof.

\end{proof}

\section{Computing Control Capacity} \label{sec:discussion}

\begin{figure}[htbp]
\begin{center}
\includegraphics[width = 0.55\textwidth]{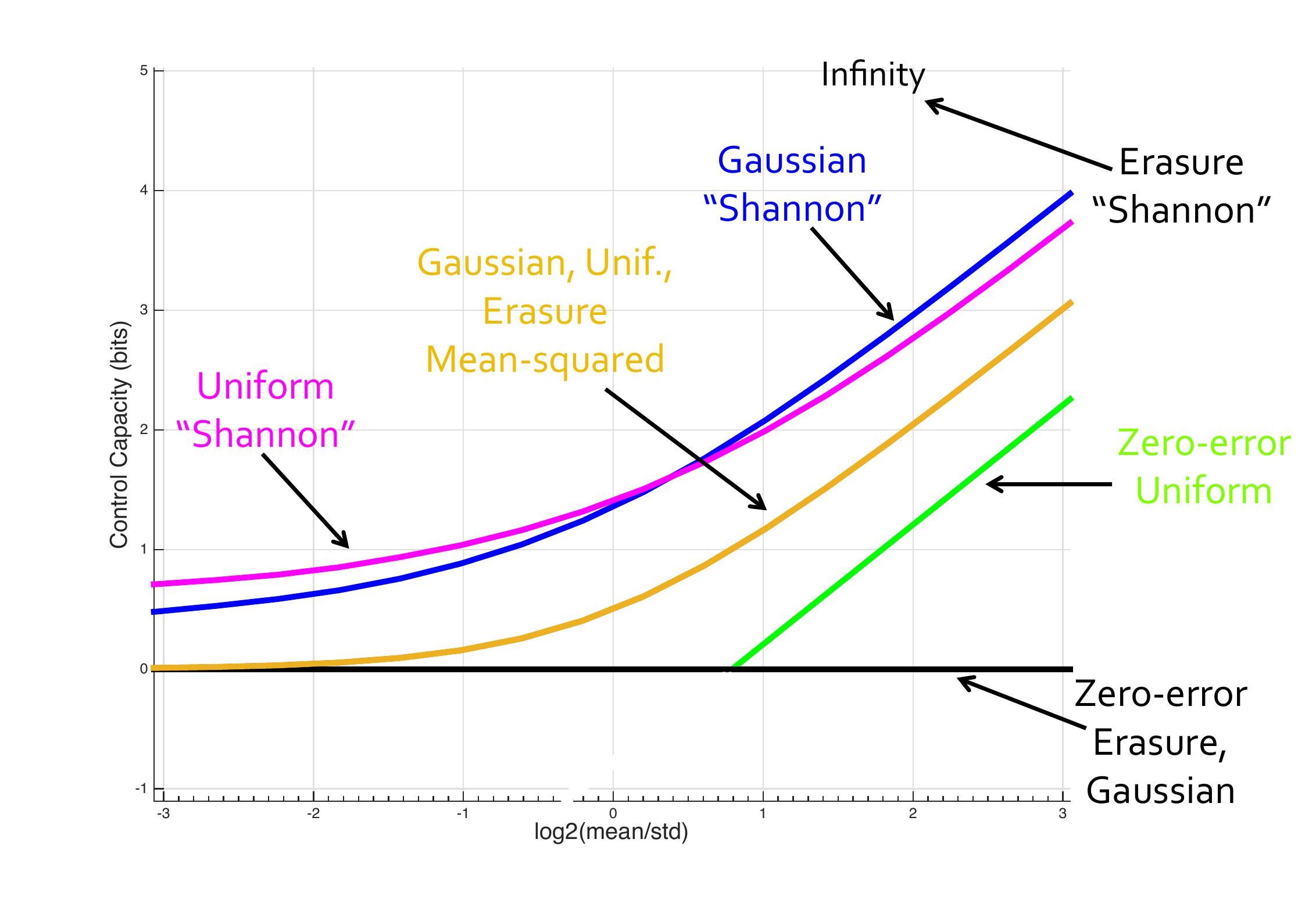}
\caption{Examples of control capacity as we vary different
  distributions for $B$ parametrized by the ratio of mean to standard deviation.}
\label{fig:cc1}
\end{center}
\end{figure}

Fig.~\ref{fig:cc1} plots the zero-error, Shannon and second-moment
control capacities for an actuation channel with a $B$ (the multiplicative noise) having a Gaussian distribution, a Bernoulli-$(p)$
distribution (erasure channel) and a Uniform distribution. These
distributions are normalized so that they all have the same ratio of
the mean to the standard deviation. The x-axis is the log of this
ratio, which is all that matters for the second-moment control capacity as seen in
Corollary~\ref{cor:meansq}. Consequently, the second moment control capacities for all three
distributions line up exactly. We see that the Shannon sense control capacity for both the
Gaussian and the Uniform are larger than the second-moment capacity as
expected. The Shannon capacity for the Bernoulli actuation channel is
infinity since it has an atom at $1$, while the zero-error capacity is
zero because it has an atom at $0$. The zero-error capacity for the Gaussian channel is zero because it is unbounded. The Uniform distribution follows the zero-error capacity line for bounded distributions, and has slope $1$. 

Notice that as the ratio of the mean to the standard deviation goes to infinity, all of the lines approach slope 1. We conjecture that in this ``high SNR'' regime, this ratio is essentially what dictates the scaling of control capacity. This is predicted by the carry-free models discussed in the Appendix since the capacity in both the zero-error and Shannon senses depends only on the number of deterministic bits in the control channel gain $g_\det-g_\ran$.

\begin{figure}[hbtp]
\begin{center}
\includegraphics[width = 0.5\textwidth]{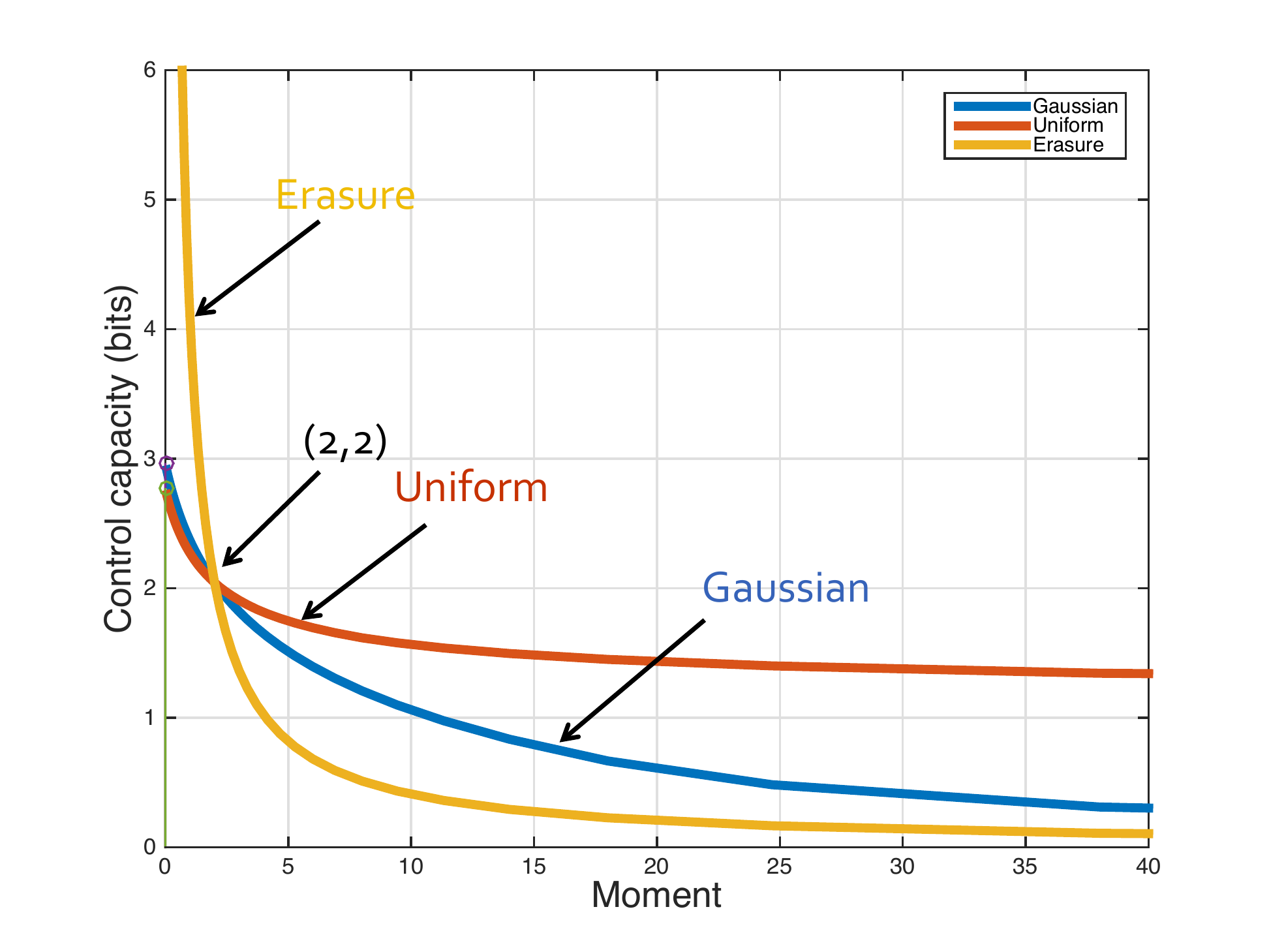}
\caption{The relationship between the different moment-senses of control capacity. For the Uniform, the zero-error control capacity is $1.2075$ which is the asymptote as $\eta \rightarrow \infty$. As $\eta\rightarrow0$ the $\eta$-th moment capacity converges to the Shannon sense. Here the Shannon control capacity for the Uniform is $2.7635$ and for the Gaussian is $2.9586$, which are the two small points seen on the extreme left (i.e.~green and purple on the y-axis).
}
\label{fig:cc2}
\end{center}
\end{figure}

Fig.~\ref{fig:cc2} allows us to explore the behavior of $\eta$-th moment capacities for the same three channels. The plot presents the $\eta$-th moment capacities for Gaussian, Erasure and Uniform control channels. We chose the three distributions such that their second-moment capacities are $2$, and all three curves intersect there. As expected, from Thm.~\ref{thm:limits}, as $\eta \rightarrow 0$, the curves approach the Shannon capacities and as $\eta \rightarrow \infty$ the curves asymptote at the zero-error control capacity. The results in this paper help us characterize this entire space, while previously only the $(2,2)$ point was really known.

\section{Additive system noise} \label{sec:additive}

The development of control capacity in the previous sections ignored additive noise and focused on the multiplicative uncertainty in the actuation channel. The results in this section show that nothing was lost by this focus. 

Consider the system $\widetilde{S}_{a}$, with additive observation noise $V[n]$ and additive system disturbance $W[n]$. 
\begin{align}
\begin{split}
\widetilde{X}[n+1] &= a(\widetilde{X}[n] + B[n]\widetilde{U}[n]) + W[n], \\
\widetilde{Y}[n] &= \widetilde{X}[n] + V[n]. \label{eq:noise}
\end{split}
\end{align}

The multiplicative noise $B[n]$ in~\eqref{eq:noise} is distributed according to $p_{B}$ as in system $\S_{a}$ in~\eqref{eq:SAsystem}, with finite $\eta$-th moment.
$V[n]$ and $W[n]$ are independent random variables at each time $n$, with finite $\eta$-th moments. Let $M_{\eta} < \infty$ be such that $\bigE\left[|V[n]|^{\eta}\right] \leq  M_{\eta}$, $\bigE\left[|W[n]|^{\eta}\right] \leq  M_{\eta}$, $\bigE\left[|B[n]|^{\eta}\right] \leq M_{\eta} < \infty$. 

Further, in this section we allow $\widetilde{X}[0]$ to be a random variable such that $\bigE\left[|\widetilde{X}[0]|^{\eta}\right] \leq M_{\eta} < \infty$.

Theorem~\ref{thm:additive} will show that this system is indeed $\eta$-th moment stabilizable if the $\eta$-control capacity is large enough --- the same condition that tells us that the system $\S_{a}$ in~\eqref{eq:SAsystem} is $\eta$-th moment stabilizable. 

\begin{thm} \label{thm:additive}
Suppose that $\S_{a}$ in~\eqref{eq:SAsystem} is $\eta$-th  moment stabilizable, and that the $\eta$-th moment control capacity of the actuation channel $\S$ in~\eqref{eq:Ssystem}, $C_{\eta}(\S) > \log |a|$. Let $U[n] = d\cdot Y[n]$ be the linear memoryless stationary strategy that achieves this control capacity, and also $\eta$-th moment stabilizes the system $\S_{a}$. Then, the control strategy $\widetilde{U}[n] = d\cdot \widetilde{Y}[n]$ also stabilizes system $\widetilde{\S}_{a}$~\eqref{eq:noise} in the $\eta$-th moment sense.
\end{thm}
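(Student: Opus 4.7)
The plan is to observe that under the proposed strategy the closed-loop dynamics become a scalar random linear recursion with geometric contraction in the $\eta$-th moment, driven by an additive noise with finite $\eta$-th moment. Substituting $\widetilde{U}[n] = d\,\widetilde{Y}[n] = d(\widetilde{X}[n] + V[n])$ into \eqref{eq:noise} gives
\begin{equation*}
\widetilde{X}[n+1] = A[n]\,\widetilde{X}[n] + N[n], \qquad A[n] := a(1 + d B[n]),\ N[n] := a d B[n] V[n] + W[n].
\end{equation*}
Since the optimal $d$ achieves $C_\eta(\S) = -\tfrac{1}{\eta}\log\bigE[|1+dB|^\eta]$ (Thm.~\ref{thm:calcetacap}) and we are assuming $C_\eta(\S) > \log|a|$, we obtain the key multiplicative contraction
\begin{equation*}
\rho \;:=\; \bigE\!\left[|A[n]|^\eta\right] \;=\; |a|^\eta\,\bigE\!\left[|1+dB|^\eta\right] \;<\; 1.
\end{equation*}
Independence of $B[n]$, $V[n]$, $W[n]$ together with their finite $\eta$-th moments (and $B[n]$'s finite $\eta$-th moment) gives a uniform bound $\nu := \bigE[|N[n]|^\eta] < \infty$.

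Next I would unroll the recursion:
\begin{equation*}
\widetilde{X}[n+1] \;=\; \Bigl(\prod_{k=0}^{n} A[k]\Bigr)\widetilde{X}[0] \;+\; \sum_{k=0}^{n}\Bigl(\prod_{j=k+1}^{n} A[j]\Bigr)N[k].
\end{equation*}
The crucial independence structure is that $\prod_{j=k+1}^{n} A[j]$ depends only on $B[k+1],\dots,B[n]$, which is independent of $N[k]$ (a function of $B[k],V[k],W[k]$) and of $\widetilde{X}[0]$. Hence the $\eta$-th moment of each product term factorizes: $\bigE[|\prod_{j=k+1}^{n} A[j]|^\eta\,|N[k]|^\eta] = \rho^{n-k}\bigE[|N[k]|^\eta] \leq \rho^{n-k}\nu$, and similarly the initial-state term contributes $\rho^{n+1}\bigE[|\widetilde{X}[0]|^\eta] \leq \rho^{n+1}M_\eta$.

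To aggregate the terms I would handle the two regimes of $\eta$ separately. For $\eta \geq 1$, apply Minkowski's inequality to get
\begin{equation*}
\bigl(\bigE[|\widetilde{X}[n+1]|^\eta]\bigr)^{1/\eta} \;\leq\; M_\eta^{1/\eta}\rho^{(n+1)/\eta} + \nu^{1/\eta}\sum_{k=0}^{n} \rho^{(n-k)/\eta},
\end{equation*}
which is bounded by $M_\eta^{1/\eta} + \nu^{1/\eta}/(1-\rho^{1/\eta})$ uniformly in $n$. For $0 < \eta < 1$, use subadditivity $|x+y|^\eta \leq |x|^\eta + |y|^\eta$ directly on the unrolled sum to get $\bigE[|\widetilde{X}[n+1]|^\eta] \leq M_\eta \rho^{n+1} + \nu \sum_{k=0}^{n}\rho^{n-k} \leq M_\eta + \nu/(1-\rho)$. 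Either way, $\sup_n \bigE[|\widetilde{X}[n]|^\eta] < \infty$.

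The main thing to be careful about is the lack of independence between $A[k]$ and $N[k]$ (both depend on $B[k]$); the argument only invokes independence of later multiplicative factors $A[j]$ ($j>k$) from $N[k]$, which holds. Beyond that, the proof is routine, as both cases reduce to summing a geometric series in $\rho$ (or $\rho^{1/\eta}$). A brief closing remark would note that the strategy achieving $C_\eta(\S)$ is robust: even though it was designed for the noiseless actuation channel, plugging in the noisy observation $\widetilde{Y}[n]$ merely inflates the effective additive disturbance without touching the multiplicative contraction rate.
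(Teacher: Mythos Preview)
Your proposal is correct and is essentially the same argument as the paper's: unroll the closed-loop recursion, split into the regimes $\eta \geq 1$ (Minkowski) and $\eta < 1$ (subadditivity of $|\cdot|^\eta$), factor expectations using the independence of the later multiplicative gains from the current noise term, and sum the resulting geometric series in $\rho$. The only cosmetic difference is that you bundle $adB[k]V[k] + W[k]$ into a single noise term $N[k]$, whereas the paper keeps the two additive contributions separate; your explicit flagging of the $A[k]$--$N[k]$ dependence (and why it does no harm) is a nice clarification.
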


\begin{proof}

We know that when we apply the control strategy $U[n] = d\cdot Y[n]$ to system~\eqref{eq:SAsystem} we get:
\begin{align}
\bigE\left[|X[n+1]|^{\eta}\right] &= \bigE\left[|a(1 + d \cdot B) |^{\eta}\right] \bigE\left[|X[n]|^{\eta}\right] \nonumber \\
&= \bigE\left[|a(1 + d \cdot B)|^{\eta}\right]^{n+1} \bigE\left[ |X[0]|^{\eta}\right], \label{eq:base}
\end{align}
where $B\sim p_{B}(\cdot)$. We will use~\eqref{eq:base} to prove the theorem. 

First, consider the case when $\eta > 1$.

Let $Q_{k} := a(1 + d B[k])$. Also, let $\bigE\left[|Q_{k}|^{\eta}\right] = L$, where $L$ is not indexed by $k$ since the $Q_{k}$'s are i.i.d.. Since the $d$ achieves $C_{\eta} > \log |a|$, we have that $L<1$.

Now, consider the evolution of the system~\eqref{eq:noise}, under the control strategy $\widetilde{U}[n] = d\cdot \widetilde{Y}[n]$.
\begin{align*}
\widetilde{X}\left[n+1\right] &= \left(a(1 + d B[n])\right) \widetilde{X}[n] + a d B[n]V[n]+ W[n] \\
&= Q_{n} \left(Q_{n-1} \widetilde{X}[n-1] + a d B[n-1]V[n-1] + W[n-1]\right) + a d B[n]V[n] +W[n]\\
&=\widetilde{X}[0] \prod_{k=0}^{n} Q_{k} +  a d \sum_{k=0}^{n} B[k]V[k] \left(\prod_{j=k+1}^{n} Q_{j}\right) + \sum_{k=0}^{n} W[k] \left(\prod_{j=k+1}^{n} Q_{j}\right). 
\end{align*}
Consider the $\eta$-th norm of $\widetilde{X}[n+1]$, 
\begin{align*}
\bigE[|\widetilde{X}[n+1]|^{\eta}]^{\frac{1}{\eta}} 
&= \bigE\left[\biggr|\widetilde{X}[0]\left(\prod_{k=0}^{n} Q_{k}\right) +  ad\sum_{k=0}^{n} B[k]V[k] \left(\prod_{j=k+1}^{n} Q_{j}\right) + \sum_{k=0}^{n} W[k] \left(\prod_{j=k+1}^{n} Q_{j}\right) \biggr|^{\eta}\right]^{\frac{1}{\eta}}.
\end{align*}

Minkowski's inequality states that for $a_{k}, b_{k} \in \mathbb{R}$, 
$$\left(\sum_{k=0}^{n} \left|a_{k} + b_{k}\right|^{\eta}\right)^{\frac{1}{\eta}} \leq \left(\sum_{k=0}^{n} \left|a_{k} \right|^{\eta}\right)^{\frac{1}{\eta}}  + \left(\sum_{k=0}^{n} \left|b_{k}\right|^{\eta}\right)^{\frac{1}{\eta}}.$$
Applying this gives:
\begin{align*}
\bigE[|\widetilde{X}[n+1]|^{\eta}]^{\frac{1}{\eta}} 
&\leq \bigE\left[\biggr|\widetilde{X}[0] \left(\prod_{k=0}^{n} Q_{k}\right) \biggr|^{\eta}\right]^{\frac{1}{\eta}} + a d\sum_{k=0}^{n} \bigE\left[\biggr|B[k]V[k] \left(\prod_{j=k+1}^{n} Q_{j}\right) \biggr|^{\eta}\bigr.\right]^{\frac{1}{\eta}} + \sum_{k=0}^{n} \bigE\left[\biggr|W[k] \left(\prod_{j=k+1}^{n} Q_{j}\right) \biggr|^{\eta}\bigr.\right]^{\frac{1}{\eta}}\\
&= L^{\frac{n+1}{\eta}} \bigE\left[ |X[0]|^{\eta} \right]^{\frac{1}{\eta}} +  a d\sum_{k=0}^{n} L^{\frac{n-k}{\eta}}\bigE\left[|B[k]V[k]|^{\eta}\bigr.\right]^{\frac{1}{\eta}} +  \sum_{k=0}^{n} L^{\frac{n-k}{\eta}}\bigE\left[|W[k]|^{\eta}\bigr.\right]^{\frac{1}{\eta}}\\
&\leq \left(\sum_{k=-1}^{n} L^{\frac{n-k}{\eta}}\right) M_{\eta}^{\frac{1}{\eta}} + \left(\sum_{k=0}^{n} L^{\frac{n-k}{\eta}}\right) \left(ad M_{\eta}^{\frac{2}{\eta}}\right)\\
&< \frac{1}{1-L^{\frac{1}{\eta}}} ~M_{\eta}^{\frac{1}{\eta}}\left(1 + ad M_{\eta}^{\frac{1}{\eta}}\right),
\end{align*}
where we use $L<1$ in the last step above.
Hence for all $n$,
\begin{equation}
\bigE\left[|\widetilde{X}[n+1]|^{\eta}\right] <  \frac{1}{(1-L^{1/\eta})^{\eta}} M_{\eta} \left(1 + ad M_{\eta}^{\frac{1}{\eta}}\right)^{\eta} < \infty.
\end{equation}

Now we consider the case where $\eta \leq 1$. 
From above, we know that
\begin{align*}
\widetilde{X}[n+1] &= \widetilde{X}[0] \prod_{k=0}^{n} Q_{k} + a d\sum_{k=0}^{n} B[k]V[k] \left(\prod_{j=k+1}^{n} Q_{j}\right) + \sum_{k=0}^{n} W[k] \left(\prod_{j=k+1}^{n} Q_{j}\right).
\end{align*}
Notice that for $\eta \leq 1$, concavity tells us that we can upperbound the $\eta$-th power of a sum by the sum of the $\eta$-th powers of the individual terms:
\begin{align*}
\bigE[|\widetilde{X}[n+1]|^{\eta}] &\leq \bigE\left[\biggr|\widetilde{X}[0] \left(\prod_{k=0}^{n} Q_{k}\right) \biggr|^{\eta}\right] + ad\sum_{k=0}^{n} \bigE\left[\biggr|B[k]V[k] \left(\prod_{j=k+1}^{n} Q_{j}\right) \biggr|^{\eta}\right]+\sum_{k=0}^{n} \bigE\left[\biggr|W[k]\left(\prod_{j=i+1}^{n} Q_{j}\right) \biggr|^{\eta}\right]\\
&\leq L^{n+1}\bigE\left[| X[0] |^{\eta}\right] + ad \sum_{k=0}^{n} L^{n-k}\bigE\left[|B[k]V[k]|^{\eta}\right]+\sum_{k=0}^{n} L^{n-k}\bigE\left[|W[k]|^{\eta}\right]\\
&< \frac{1}{1-L}~M_{\eta}~\left(1 + a d M_{\eta}\right) < \infty.
\end{align*}

Thus, in both cases the system $\widetilde{\S}_{a}$ is $\eta$-th moment stabilizable using the same memoryless linear stationary strategy that stabilized $\S_{a}$. Note the controls applied are not the same, because they are based on the observations $\widetilde{Y}[\cdot]$, but the control gain $d$ is the same. 

Although this section has talked exclusively about $\eta$-moment stability, Theorem~\ref{thm:limits} tells us that we get essentially the same result for the Shannon sense of stability as well. This is because if $C_{\sh}(\S(p_{B})) > \log |a|$, we know since $\lim_{\eta \rightarrow 0} C_{\eta}(\S(p_{B})) = C_{\sh}$ that there must exist an $\eta > 0$ for which $C_{\eta} > \log |a|$ as well. The corresponding control law gives the desired result. To understand zero-error control capacity with additive noise, a proof that exactly parallels the proof above can be given. Instead of expectations, maximizations can be used along with assuming bounds on all the additive disturbances as well as the initial condition.

\end{proof}

\section{Control capacity with side information} \label{sec:sideinfo}
This final section allows us to take advantage of the informational perspective on uncertainty in control systems developed in the earlier sections, and we can understand the impact of side information in systems. We provide a definition for the notion of control capacity with side information. Theorem~\ref{thm:sideinfo}  provides an operational meaning for the definition. Theorem~\ref{thm:sideinfocap} and~\ref{thm:sideinfocapeta} allow us to calculate the control capacity with side information in the i.i.d.~case.

We consider the same system $\S$ as in~\eqref{eq:Ssystem}, however, consider that the controller has access to additional side information $T[n]$ in addition to the observations $Y[n]$ at time $n$. 
\begin{align*}
X[n+1] &= X[n] + B[n] U[n]\\
Y[n] &= X[n], \label{eq:sideinfosystem}
\end{align*}

The pair $(B[k], T[k])$ for  $0 \leq k \leq n$ is drawn from a joint distribution $p_{B,T}(\cdot,\cdot)$ at each time. The applied control signal $U[n]$ can causally depend on $Y[k], 0\leq k \leq n$ as well as on the side information $T[k], 0\leq k \leq n$.

Now, we can naturally extend the definition in~\cite{controlcapacity} to define control capacity with side information.
\begin{defn}
The Shannon control capacity of the system $\mathcal{S}$ with side information $T$ is defined as
\begin{align}
C_{\sh}(\mathcal{S}|T) = \liminf_{n\to\infty}\underset{U[0], \cdots, U[n]}{\max}~\frac{1}{n} \bigE\left[\log \frac{|X[0]|}{|X[n]|} \right],
\end{align}
where each $U[n]$ is a causal function of $(Y[k], T[k])$ for $0\leq k \leq n$.
\end{defn}

\noindent The control capacity with side information is the maximum uncertainty (in bits) that can be dissipated on average from the state using both the observation and the side information. Parallel to Thm~\ref{thm:shannoncapacitygrowth} we can immediately characterize the logarithmic stabilizability of the system $\S_{a}$ when given access to the same side information.

\begin{thm} \label{thm:sideinfo}
Consider the system $\S_{a}$ as in~\eqref{eq:SAsystem} but with access to the additional side information $T[n]$ at time $n$. 
Then, system $\mathcal{S}_{a}$ is logarithmically stabilizable with side information $T[k]$ received by the controller at time $k$ if
$C_{\sh}(\mathcal{S}|T) > \log |a|.$ Conversely, if the system $\mathcal{S}_{a}$ is logarithmically stabilizable with
side information $T[k]$ received by the controller at time $k$, then
$C_{\sh}(\mathcal{S}|T) \geq \log |a|.$
\end{thm}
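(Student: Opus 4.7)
The plan is to mirror the proof of Theorem~\ref{thm:shannoncapacitygrowth}, since the presence of side information $T[k]$ does not break the basic bijection between control strategies on $\S$ and $\S_a$ that Lemma~\ref{lem:whichcontrol} established. The first step is to prove a side-information analogue of Lemma~\ref{lem:whichcontrol}: given any causal strategy $U[k]$ for $\S$ that depends on $Y[0],\ldots,Y[k]$ and $T[0],\ldots,T[k]$, the scaled controls $U_a[k]=a^k U[k]$ form a valid causal strategy for $\S_a$ (and vice versa under $U[k]=a^{-k}U_a[k]$). Since the side information $T[j]$ is the same random variable in both systems and since $Y_a[j]=a^j Y[j]$ is invertible given $a$, the measurability with respect to the observation-plus-side-information sigma field is preserved. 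The same inductive calculation as in Lemma~\ref{lem:whichcontrol} then yields $X_a[k]=a^k X[k]$, because the side information $T[k]$ only enters through the control choice, not through the state recursion itself.

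With this extended lemma in hand, the achievability direction proceeds exactly as in Theorem~\ref{thm:shannoncapacitygrowth}. If $C_{\sh}(\S|T)>\log|a|$, pick a causal (observation and side-information dependent) control sequence $U[0],U[1],\ldots$ and an $N$ so that for $n>N$,
\[
-\tfrac{1}{n}\bigE\!\left[\log\tfrac{|X[n]|}{|x_0|}\right]\geq \log|a|.
\]
Applying the transformation $U_a[k]=a^k U[k]$ and the side-information version of Lemma~\ref{lem:whichcontrol} gives $\bigE[\log|X_a[n]|]\leq \bigE[\log|x_0|]<\infty$, which is precisely logarithmic stabilizability of $\S_a$ with side information $T$.

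For the converse, assume $\S_a$ is logarithmically stabilizable with side information, so that some causal $U_a[0],U_a[1],\ldots$ (each a function of $Y_a[0],\ldots,Y_a[k]$ and $T[0],\ldots,T[k]$) yields $\bigE[\log|X_a[n]|]<M$ for $n>N$. Setting $U[k]=a^{-k}U_a[k]$, which is causal in $(Y[\cdot],T[\cdot])$ by the lemma, and invoking $|X_a[n]|=|a|^n|X[n]|$, we divide by $n$ and take $\liminf$ exactly as in the proof of Theorem~\ref{thm:shannoncapacitygrowth} to obtain
\[
\liminf_{n\to\infty}-\tfrac{1}{n}\bigE\!\left[\log\tfrac{|X[n]|}{|x_0|}\right]\geq \log|a|,
\]
which by definition says $C_{\sh}(\S|T)\geq\log|a|$.

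I do not expect a substantive obstacle here, because the only new ingredient is that the side information $T[k]$ threads through the controller on both sides of the correspondence. The one point to be careful about is the information pattern: I must verify that the transformation $U\leftrightarrow U_a$ preserves causal dependence on both $Y$ and $T$ simultaneously. This is essentially immediate since $T[k]$ is common to the two problems and $Y_a[k]$ determines $Y[k]$ (and conversely, given $a$ and $k$), so no additional side information is being smuggled in either direction.
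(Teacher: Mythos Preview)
Your proposal is correct and matches the paper's own approach: the paper explicitly states that the proof follows that of Theorem~\ref{thm:shannoncapacitygrowth} and omits the details. Your observation that the side information $T[k]$ threads unchanged through the $U\leftrightarrow U_a$ correspondence of Lemma~\ref{lem:whichcontrol} (since $Y_a[j]=a^jY[j]$ is invertible and $T[j]$ is common to both systems) is exactly the one point needed to make the original argument go through.
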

The proof of this theorem follows that of Thm.~\ref{thm:shannoncapacitygrowth} and is omitted here. 

The next theorem shows that the value of the side information is computable and can be thought of as a conditional expectation when $(B[n], T[n])$ are distributed i.i.d. according to a joint distribution $p_{B,T}.$
\begin{thm}
The Shannon control capacity of the system $\mathcal{S}(p_{B, T})$ with side information $T[n]$ at time $n$ is given by
\begin{align}
C_{\sh}(\mathcal{S}|T) = \bigE \left[\underset{~d(T)}{\max}~\bigE\left[-\log \bigr|1 + B \cdot d(T)\bigr|~\biggr|~T \right]\right].
\end{align}
The maximization allows $d$ to depend on the side-information $T$. 
\label{thm:sideinfocap}
\end{thm}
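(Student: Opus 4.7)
The plan is to imitate the proof of Theorem~\ref{thm:calcshannoncap} but with all one-step optimizations carried out conditionally on the side information $T[n]$. The central observation is that, once $X[n] = x[n] \neq 0$ is fixed and $T[n] = t$ is revealed, any causal control $U[n]$ is just a deterministic function of $t$ (and of past randomness that is independent of $B[n]$ given $T[n]$). Writing $U[n]/x[n] = d(T[n])$, the one-step log-ratio becomes $-\bigE[\log|1 + B[n]\, d(T[n])|]$. Since we may choose $d(t)$ separately for each value of $t$, the supremum over such functions is exactly $\bigE[\max_{d}\bigE[-\log|1+B\,d|\mid T = t]]$, evaluated at $t = T[n]$. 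I would record this as a one-step lemma analogous to Lemma~\ref{lem:shcclem}. I would also note that Lemma~\ref{lem:nozero} transfers verbatim: conditioned on $T[n]$, the distribution of $B[n]$ still has no atoms off zero, so the event $B[n] = -X[n]/U[n]$ still has probability zero whenever $U[n] \neq 0$.

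For the achievability side, I would apply the memoryless stationary strategy $U[k] = d^{\ast}(T[k])\,Y[k]$, where $d^{\ast}$ is the conditional maximizer from the one-step lemma. A telescoping expansion gives
\begin{equation*}
-\frac{1}{n}\bigE\!\left[\log\frac{|X[n]|}{|X[0]|}\right]
= -\frac{1}{n}\sum_{k=0}^{n-1}\bigE\!\left[\log|1 + B[k]\,d^{\ast}(T[k])|\right],
\end{equation*}
and since the pairs $(B[k],T[k])$ are i.i.d., each summand equals the claimed conditional expectation, so the $\liminf$ attains it.

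For the converse, I would run the same induction as in Theorem~\ref{thm:calcshannoncap}, splitting
\begin{equation*}
-\bigE\!\left[\log\tfrac{|X[n]|}{|X[0]|}\right]
\le -\bigE\!\left[\log\tfrac{|X[n]|}{|X[n-1]|}\right]
\; - \;\bigE\!\left[\log\tfrac{|X[n-1]|}{|X[0]|}\right],
\end{equation*}
and now conditioning the first term on the full history $(B[0],T[0],\dots,B[n-2],T[n-2])$ together with $T[n-1]$. Under this conditioning $U[n-1]/X[n-1]$ is a constant $d$, and $B[n-1]$ is distributed as $p_{B\mid T}(\cdot\mid T[n-1])$ by the joint i.i.d.\ assumption on $(B[k],T[k])$. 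Maximizing over $d$ pointwise in $T[n-1]$ and then taking expectation yields exactly the claimed single-letter quantity; the second term is handled by the inductive hypothesis. The main subtlety, and the step I expect to take the most care, is justifying that allowing $U[n-1]$ to depend on the entire past (including past $T$'s) does not improve over a function of $T[n-1]$ alone: this relies on $(B[n-1],T[n-1])$ being independent of the pre-history, which lets us absorb all past dependence into the constant $d$ before conditioning on $T[n-1]$. Once this is granted the induction closes cleanly, and the memoryless, $T$-adapted linear strategy is shown to be optimal, matching the achievability bound.
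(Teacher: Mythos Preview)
Your proposal is correct and matches exactly what the paper indicates: the paper does not give a detailed proof but simply says it ``follows the proof of Thm.~\ref{thm:calcshannoncap},'' and your sketch is precisely the natural adaptation of that argument with the one-step optimization in Lemma~\ref{lem:shcclem} carried out conditionally on $T[n]$, Lemma~\ref{lem:nozero} used unchanged, and the converse induction conditioned on the past together with $T[n-1]$. The subtlety you flag---that dependence of $U[n-1]$ on the full history collapses to a constant $d$ once you condition on $(B[0],T[0],\ldots,B[n-2],T[n-2],T[n-1])$, after which $B[n-1]$ is drawn from $p_{B\mid T}(\cdot\mid T[n-1])$ by the i.i.d.\ assumption---is exactly the right place to be careful, and your handling of it is correct.
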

The proof of this theorem also follows the proof of Thm.~\ref{thm:calcshannoncap} and is not provided. It is discussed in~\cite{controlcapacitysideinfo, gireejaBeast}.

An $\eta$-th moment control capacity with side information for the system $\mathcal{S}$ also makes sense.
\begin{defn}
The $\eta$-th moment control capacity of the system $\mathcal{S}$ with side information $T$ is defined as
\begin{align}
C_{\eta}(\mathcal{S}|T) = \lim_{n\rightarrow\infty}\underset{U[0], \cdots U[n]}{\max}~\frac{1}{n}\frac{1}{\eta} \log \bigE\left[\biggr|\frac{X[0]}{X[n]}\biggr|^{\eta} \right],
\end{align}
where each $U[n]$ is a causal function of $(Y[k], T[k])$ for $0\leq k \leq n$.
\end{defn}

\begin{thm}
Let  $(B[n], T[n])$ be distributed i.i.d. according to a joint distribution $p_{B,T}.$ Then, the $\eta$-th moment control capacity of the system $\mathcal{S}(p_{B,T})$ is given by
\begin{align*}
C_{\eta}(\mathcal{S}|T) = \frac{-1}{\eta}\log~\bigE \left[\underset{d(T)}{\min}~\bigE \left[\bigr|1 + B \cdot d(T)\bigr|^{\eta}~\biggr|~T \right]\right].
\end{align*}
\label{thm:sideinfocapeta}
\end{thm}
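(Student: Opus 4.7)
The plan is to mirror the argument used for Theorem~\ref{thm:calcetacap}, with the essential modification that the controller may now use $T[n]$ in addition to $Y[n]$ when choosing $U[n]$. First, I would state and prove a one-step analog of Lemma~\ref{lem:etalem}: conditional on $X[n] = x[n] \neq 0$ and on observing $T[n]$, the controller picks $U[n]$ as an arbitrary (measurable) function of $(x[n], T[n])$, and minimizing $\bigE[|X[n+1]/x[n]|^{\eta} \mid T[n]]$ over $U[n]$ reduces, by writing $d(T[n]) := U[n]/x[n]$, to the pointwise-in-$T$ scalar optimization $\min_{d(T)} \bigE[|1 + B d(T)|^{\eta} \mid T]$. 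Taking expectation over $T[n]$ then gives the single-letter quantity $L := \bigE_{T}\bigl[\min_{d(T)} \bigE[|1 + B d(T)|^{\eta} \mid T]\bigr]$ that appears in the theorem.

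For achievability, I would apply, at each step $k$, the memoryless stationary strategy $U[k] = d^{\star}(T[k]) \cdot X[k]$ where $d^{\star}(\cdot)$ is the function attaining the inner minimum. Then $X[k+1] = X[k](1 + B[k] d^{\star}(T[k]))$, so by the i.i.d.\ structure of the pairs $(B[k], T[k])$, the expectation of the product factors:
\begin{align*}
\bigE\!\left[\left|\frac{X[n]}{x_0}\right|^{\eta}\right] = \prod_{k=0}^{n-1} \bigE\!\left[|1 + B[k] d^{\star}(T[k])|^{\eta}\right] = L^{n}.
\end{align*}
Taking $-\tfrac{1}{n\eta}\log$ and letting $n \to \infty$ yields the lower bound $C_{\eta}(\S \mid T) \geq -\tfrac{1}{\eta}\log L$.

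For the converse, I would proceed by induction exactly as in the proof of Theorem~\ref{thm:calcetacap}. Let $\mathcal{F}_{n-1}$ be the sigma-field generated by $B[0], T[0], \ldots, B[n-2], T[n-2], T[n-1]$, so that $X[n-1]$ and any admissible $U[n-1]$ are $\mathcal{F}_{n-1}$-measurable, and $B[n-1]$ given $\mathcal{F}_{n-1}$ has conditional law $p_{B\mid T}(\cdot \mid T[n-1])$. Restricting without loss of generality to strategies that apply zero control to a zero state, and setting $Q_n := 1 + B[n-1]\,U[n-1]/X[n-1]$ when $X[n-1] \neq 0$, we get $\bigE[|Q_n|^{\eta} \mid \mathcal{F}_{n-1}] \geq \min_{d}\bigE[|1 + Bd|^{\eta} \mid T = T[n-1]]$ since the conditional ratio $U[n-1]/X[n-1]$ is a constant given $\mathcal{F}_{n-1}$. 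Taking expectations and using the tower property gives $\bigE[|X[n]/x_0|^{\eta}] \geq L \cdot \bigE[|X[n-1]/x_0|^{\eta}]$, and induction yields $\bigE[|X[n]/x_0|^{\eta}] \geq L^{n}$, matching the achievability bound.

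The main obstacle, and where one has to be careful, is the conditioning structure in the converse: one must argue that the availability of future side information $T[k]$, $k \geq n$, cannot help in lowering the one-step bound at time $n-1$. This is exactly where the i.i.d.\ assumption on $(B[k], T[k])$ across time is used, together with the observation that $B[n-1]$ conditioned on the full past filtration depends only on $T[n-1]$ through $p_{B\mid T}$. Once that independence step is set up correctly, the rest is a direct copy of the Shannon-with-side-information argument (Theorem~\ref{thm:sideinfocap}) and of Theorem~\ref{thm:calcetacap}, and as a corollary one again obtains that memoryless stationary linear-in-$X$ strategies with coefficients $d^{\star}(T)$ depending only on the current side information are optimal.
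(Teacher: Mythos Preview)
Your proposal is correct and is exactly the approach the paper intends: the paper omits the proof entirely, stating only that it ``follows the proof of the corresponding theorem without side information,'' i.e.\ Theorem~\ref{thm:calcetacap}, which is precisely the argument you outline with the one-step lemma now conditioned on $T[n]$ and the converse induction carried out relative to the filtration that includes the current side information. The independence observation you flag---that $X[n-1]$ is a function of $(B[0],T[0]),\ldots,(B[n-2],T[n-2])$ and hence independent of $T[n-1]$, so that the one-step lower bound factors out as $L$---is the only additional ingredient beyond Theorem~\ref{thm:calcetacap}, and you have it right.
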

The proof of this theorem follows the proof of the corresponding theorem without side information and is omitted.

\subsection{Control capacity with side information: an example}

\begin{figure}[htbp]
\begin{center}
\includegraphics[width = 0.5\textwidth]{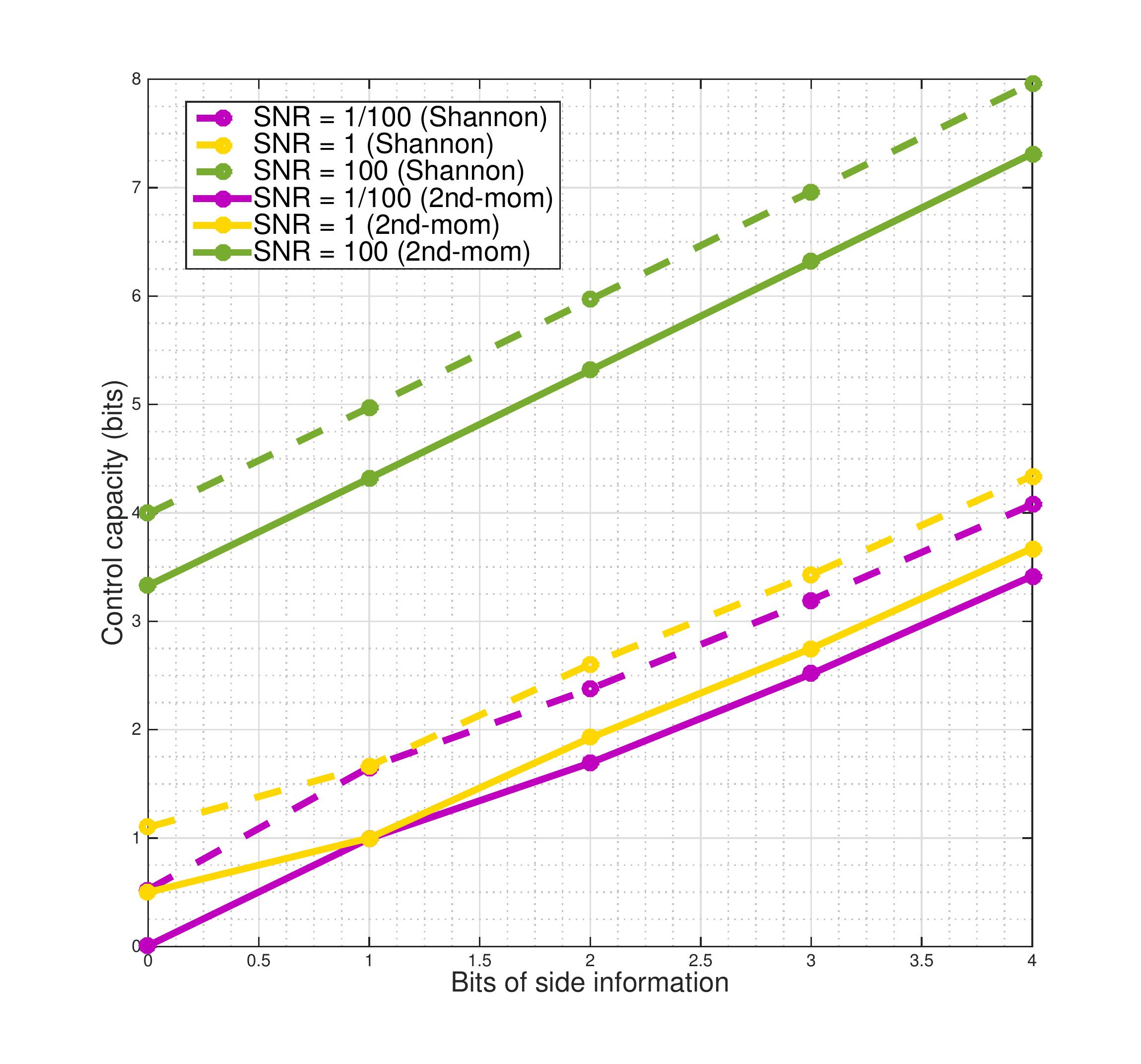}
\caption{This plot shows the increase in control capacity with side information. The actuation channels considered have a uniform distribution with different mean/standard deviation ratios (SNR) as in the legend.}
\label{fig:sideinfoplot}
\end{center}
\end{figure}

As an example, we plot the change in Shannon and second-moment control
capacities with zero to four bits of side information for a set of
actuation channels in Fig.~\ref{fig:sideinfoplot}. As in earlier figures, we focus on the ``SNR'' of the actuation channel, i.e. $\frac{\mu_{B}^{2}}{\sigma_{B}^{2}}$, as we know this is the critical parameter to compute second-moment control capacity from Corollary~\ref{cor:meansq}. We plot the control capacities for actuation channels with base ``SNR'' $\frac{1}{100}, 1$ and $100$, and thus mean to standard deviation ratios of $\frac{1}{10}, 1$ and $10$.

We consider a uniform distribution on the unreliability in the actuation channel. $B\sim\textrm{Uniform}\left[b_{1}, b_{2}\right]$. The controller is provided one bit of side information in the form of knowledge of the half-interval into which the realization of $B$ falls. i.e. the controller is told whether the realization of $B$ is in $\left[b_{1}, \frac{b_{1}+b_{2}}{2}\right]$ or in $\left(\frac{b_{1}+b_{2}}{2}, b_{2}\right]$. Two bits of side information resolves the interval into four equal-sized subintervals, and so on.

The green curves represent the second-moment (solid) and Shannon (dashed) control capacity for the uniform distribution with SNR = $100$. For both these curves, as the number of bits of side information increases the slope of both these curves approaches $1$ but are a shade below $1$ in the region close to $0$. 

On the other hand, consider the pink lines that represent the second
moment (solid) and Shannon (dashed) control capacities when the SNR $=
\frac{1}{100}$. The slope of the dashed line (Shannon) between 0 and 1
is actually slightly greater than 1! In this case, half the time, the
first bit of side information reveals perfectly the sign of the
distribution and can increase the control capacity by more than one
bit. This can been seen in the carry-free models in the Appendix, where the value of a bit of side-information can be more than a bit!  Of course, as the side information increases the capacity steadily increases and eventually, one bit of side information only increases the control capacity by one bit --- we can see that the slope of the curve tends to $1$. 

The slope of the second-moment control capacity (for SNR $=1/100$) between $0$ and $1$ is still less than $1$, but we see here that the value of the first bit of side information (that reveals the sign) is still more valuable than the second bit of side information. This curve also converges to slope $1$ as the controller gets more side information. 

Finally, we come to the control capacities of the distributions with SNR $= 1$ with the yellow curves. (Note here that the values for both the Shannon and second-moment control capacities with one bit of side information (i.e. the points corresponding to x-coordinate $1$) are slightly higher than the points for SNR $= 1/100$ even though it is not apparent in this figure). These curves shows a very intriguing phenomenon --- the first bit of side information is actually worth less than a bit, and the first bit of side information is worth less than the second bit that is received. This is certainly something we plan to investigate further.

\section{Acknowledgements}
The authors would like to thank the NSF for a GRF and CNS-0932410, CNS-1321155, and ECCS-1343398. Thanks also to Yuval Peres and Miklos Racz for helpful discussions.

\appendices

\section{Bit-level models for uncertainty in control}\label{sec:carryfreeappendix}
This appendix describes bit-level models for unknown dynamical systems. These simple models motivated the definitions and theorems in the paper, and this appendix is included to share the insights from these models with the reader. 

The carry-free bit level models described here build on previous bit-level models developed in wireless network information theory, i.e.~the deterministic models developed by Avestimehr, Diggavi and Tse (ADT models)~\cite{tseDetmodel}, and lower-triangular or carry-free models developed by Niesen and Maddah-Ali~\cite{lowerTriangular}. We previously used these models~\cite{carryfree} to understand the $\log \log SNR$ result for communication over channels with unknown fading~\cite{lapidothLogLog} and then to explore noncoherent relay networks. We call these models ``carry-free'' to indicate that the addition operation is defined without carry-overs from one bit level to the next. We will discuss this in more detail in below and in Figure~\ref{fig:multiplication}.

\subsection{Bit-level models for rate-limited control}
First, we will describe how the data-rate theorems~\cite{wong1997systems,nair2000stabilization,tatikonda1998control} can be understood using bit-level models.
Consider the system:
\begin{align}
X[n+1] &= a \cdot X[n] + U[n] + W[n], \label{eq:tat}
\end{align}
where $a$ is a fixed scalar, and the additive noise $W[n]$ is drawn i.i.d.~Unif$[0,1]$.
The controller must generate $U[n]$ based on observations over an $R$-bit channel. The data-rate theorems show that a rate of $R > \log |a|$ is necessary and sufficient to stabilize the system.

It turns out we can understand this result pictorially through bit-level models. Let us represent the system state $X[n]$ by its binary expansion as:
\[
x_{m}[n]x_{m-1}[n]\cdots x_{1}[n]x_{0}[n].x_{-1}[n]x_{-2}[n]\cdots,
\]
where $x_{i}[n] \in \{0,1\}$. The index $m\in\mathbb{Z}$ represents the highest non-zero bit level of the state. To recover the value of the state we can consider the polynomial-like formal series:
\begin{align*}
x[n](z) = x_{m}[n]z^{m} + x_{m-1}[n] z^{m-1} + \cdots + x_{0}[n] + x_{-1}z^{-1} \cdots.
\end{align*} 
Substituting $z=2$ will give back $X[n]$.

Let us also consider also the system gain $a > 0$ as expressed in binary. For simplicity, let us assume that $a\geq 1$ is a power of two, and hence we can write it as a monomial of degree 
$g_{a} = \log a$:
\[ 
a[n](z) = a(z) = 1 \cdot z^{g_{a}}.
\]

These bit-level models are particularly conducive to modeling explicit rate constraints, since a rate limit simply caps the number of levels that are visible to the estimator or controller at any given time. We can construct a bit-level model of the system in~\eqref{eq:tat} as below:
\begin{equation} \label{eq:cfratelimited}
x[n+1](z) = a(z) x[n](z) +  u[n](z) + w[n](z),
\end{equation}
where $u[n](z)$ is the control signal in binary that is based on observations received over an $R$ bit channel at each time. $w[n](z)$ is an additive binary noise sequence. We restrict this to be below the decimal level and so the highest power in the formal series representation is $-1$.
\begin{align*}
w[n](z) = w_{-1}[n]z^{-1} + x_{-2}[n] z^{-2} + \cdots,
\end{align*} 
Each $w_{-i}$ is drawn i.i.d. Bernoulli-$\left(\frac{1}{2}\right)$.

\begin{figure}[thbp]
\begin{center}
\includegraphics[width=.3\textwidth]{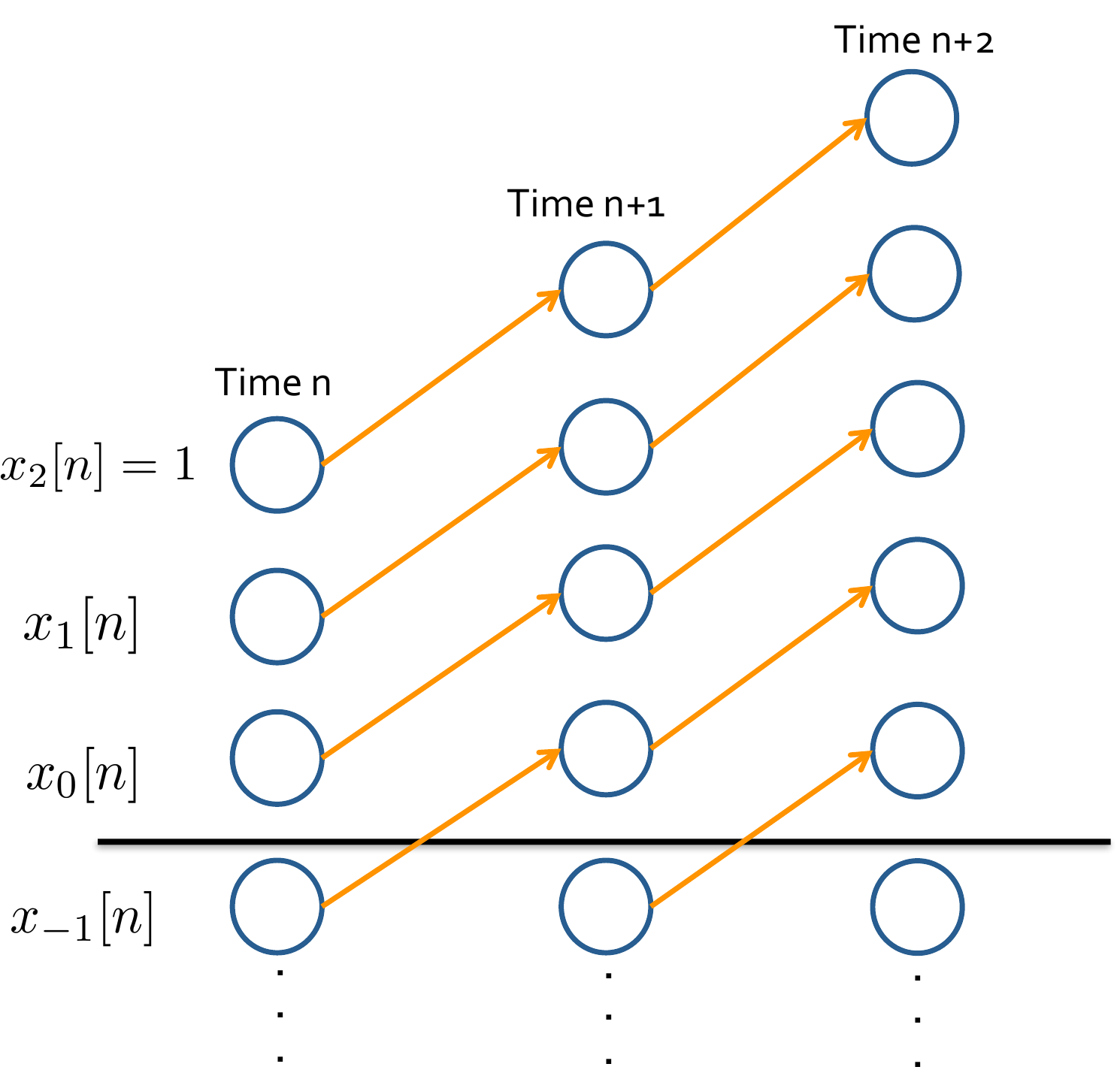}
\caption{The open-loop system state can be thought of as a stack of bits marching upward with the gain $a$.}
\label{fig:bitsmarching}
\end{center}
\end{figure}

Figure~\ref{fig:bitsmarching} represents how the system state in~\eqref{eq:cfratelimited} is growing. Consider the bits that represent the state arranged as a vertical stack, with the most significant bit at the top. Multiplication by the gain $a(z)$ causes the stack to increase in height by $g_{a}$ levels each time. As the bit-levels rise, the bits that are below the decimal point at the noise level rise above the noise level and bring added uncertainty to the system. To avoid this stack growing unboundedly, the controller must cancel at least $g_{a}$ bits at each time step, and to do this it must know their value. Hence, the minimum communication rate required for estimation is $\log a = g_{a}$. 

\subsection{Carry-free models}
Carry-free models generalize the idea of bit-level multiplication in the previous subsection to the case where the gain might not be a power-of-two. Our primary interest is in modeling the impact of randomness in system parameters, and thus we want to capture multiplication by random binary bit strings. Before introducing randomness into the picture, we first generalize to the case when $a(z)$ is not a power of two. First, we define carry-free addition and multiplication between two binary strings in a manner that parallels those operations for formal power series.

\begin{defn}
Let $a(z) = a_{m}...a_{1}a_{0}a_{-1}...$ and $b(z) = b_{n}...b_{1}b_{0}b_{-1}...$ be two binary strings. Then, their carry-free sum is defined as $c_{m}...c_{1}c_{0}c_{-1}... := a_{n}...a_{1}a_{0}a_{-1}... \oplus_{c} b_{n}...b_{1}b_{0}b_{-1}...$ where $c_{i} = a_{i} + b_{i}~\left(\textrm{mod}~2\right)$. 
\end{defn}

The addition operation involves no carryovers unlike in real addition. Bit interactions at one level do not affect higher level bits. We derive the name ``carry-free'' from this property. 

\begin{defn}
Let $a(z) = a_{m}...a_{1}a_{0}a_{-1}...$ and $b(z) = b_{n}...b_{1}b_{0}b_{-1}...$ be two binary strings.  Then, their carry-free multiplication is defined as $c_{2m}...c_{1}c_{0}c_{-1}... := a_{m}...a_{1}a_{0}a_{-1}... \otimes_{c} b_{m}...b_{1}b_{0}b_{-1}...,$ where $c_{i} = \underset{k}{\sum} a_{k}b_{i-k}~\left(\textrm{mod}~2\right)$. 
\end{defn}

Thus, carry-free multiplication of the bit-levels is like convolution of the signals represented by the bit levels in time, where the bit-level corresponds to the time index (Figure~\ref{fig:multiplication})~\cite{lowerTriangular, carryfree}. We note that it is commutative and associative.

\begin{figure}[htbp]
\begin{center}
\includegraphics[width=0.5\textwidth]{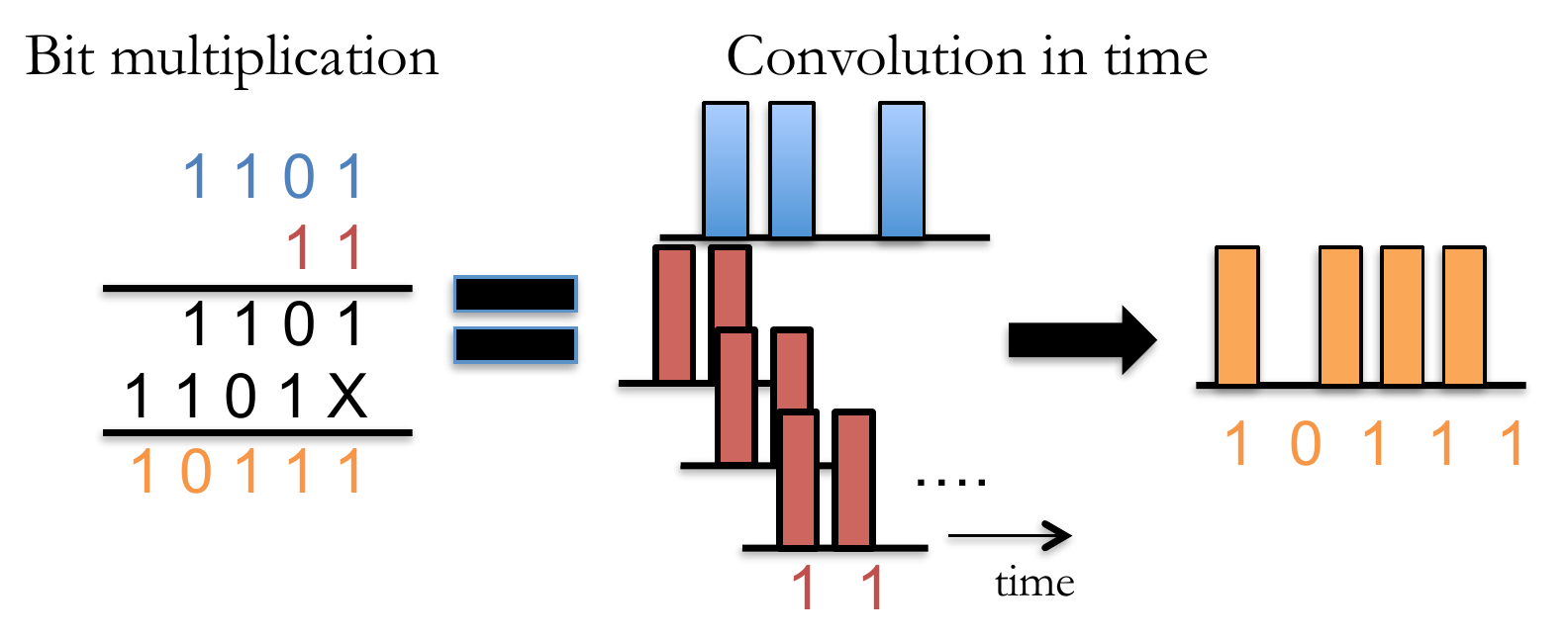}
\caption{Carry-free multiplication as ``convolution'' in time of the signal represented by the bit-levels.}
\label{fig:multiplication}
\end{center}
\end{figure}

Let $a(z)$ be $a(z) = 1 \cdot z^{g_{a}}+a_{g_{a}-1}z^{g_{a}-1}+\cdots,$ so $a$ in~\eqref{eq:tat} is not restricted to being a power of two. 
Now, we can model the same bit-level state evolution as in~\eqref{eq:cfratelimited}, but using carry-free multiplication and addition. Figure~\ref{fig:carryfreefixed} shows a bubble-picture for the rate-limited bit-level system. This figure captures the effect of growth similar to Figure~\ref{fig:bitsmarching} for one time-step, except with carry-free multiplication. 

\begin{figure}[htbp]
\begin{center}
\includegraphics[width=.35\textwidth]{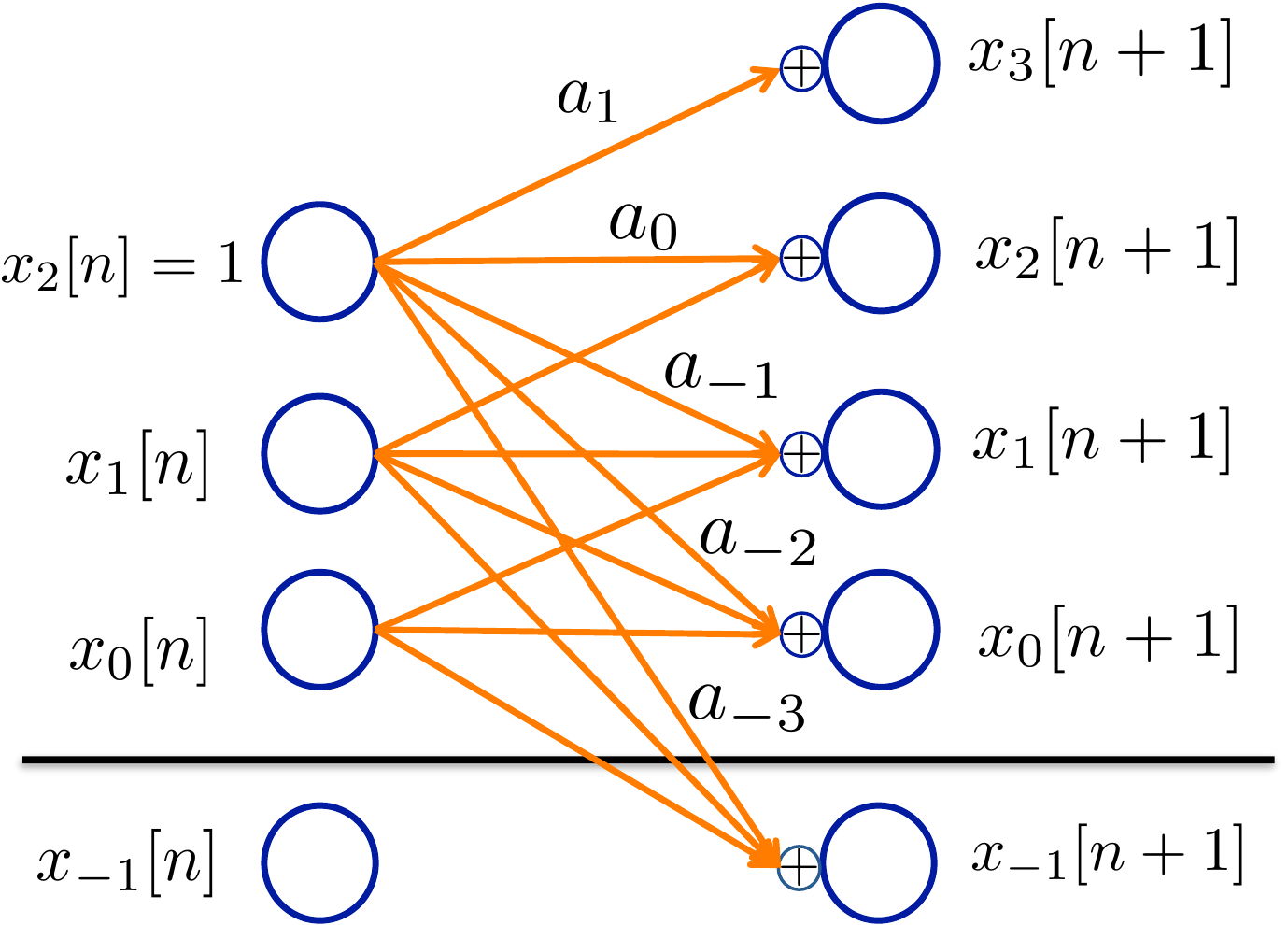}
\caption{Carry-free for models with highest bit at level $3$, and power $g_{a}=1$.}
\label{fig:carryfreefixed}
\end{center}
\end{figure}

\subsection{Carry-free actuation channels}
We build on these ideas to model the system in~\eqref{eq:SAsystem}, with a random actuation gain $B[n]$. We will use these models to understand the zero-error sense of stability as well the Shannon notion of stability (stability in expectation). For this, we introduce the notion of carry-free multiplication by a random gain to capture the i.i.d. nature of the $B[n]$'s.

We consider the binary expansion for a random actuation gain $B[n]$. $g_{b}$ is the highest non-zero bit level. The high-order bits are deterministic, and we define $g_{\det}$ as the highest deterministic level. There are a total of $g_{\det} - g_{\ran}$ deterministic bits, with $g_{\ran}$ are the first random (Bernoulli$-\left(\frac{1}{2}\right)$) bit level. 
Thus,
\begin{align}
b[n](z) &=  b_{g_{b}}[n] \cdot z^{g_{b}}+b_{g_{b}-1}[n]\cdot z^{g_{b}-1}+\cdots  \nonumber \\
&=1 \cdot z^{g_\det}+0\cdot z^{g_\det-1}+0\cdot z^{g_\det-2}+\cdots+b _{g_\ran}[n] \cdot z^{g_\ran} + b_{g_\ran-1}[n] \cdot z^{g_\ran-1} + \cdots \label{fig:b},
\end{align}
and we have
\begin{align*}
g_\ran = \max\left\{i|b_i \sim \mbox{Bernoulli}\left(\tfrac{1}{2}\right),  g_{b} \geq i \geq -\infty\right\}.
\end{align*}
Since the $B[n]$ are identically distributed, $g_{\det}$ and $g_{\ran}$ do not vary with $n$, but the realizations of random bits are drawn identically at each time. $b_{i}[n] \sim\mbox{Bernoulli}(\frac{1}{2})$ for $i \leq g_\ran$. The realizations of these bits are unknown to the controller. Also we can write 
\begin{align*}
g_\det = 
\begin{cases}
\max\{i \mid b_i = 1\}~\mbox{if}~ g_\ran < g_{b} \\
g_\ran,~\mathrm{otherwise}.\\
\end{cases}
\end{align*}

Without loss of generality, we fix all the bits from $b_{g_\det-1}$ to $b_{g_\ran+1}$ to be $0$, and these bits are known to the controller. Our arguments extend to any other set of deterministic bits, with a leading $1$. 
This is illustrated in Figure~\ref{fig:cfut}.

\begin{figure}[htbp]
\begin{center}
\includegraphics[width=.6\textwidth]{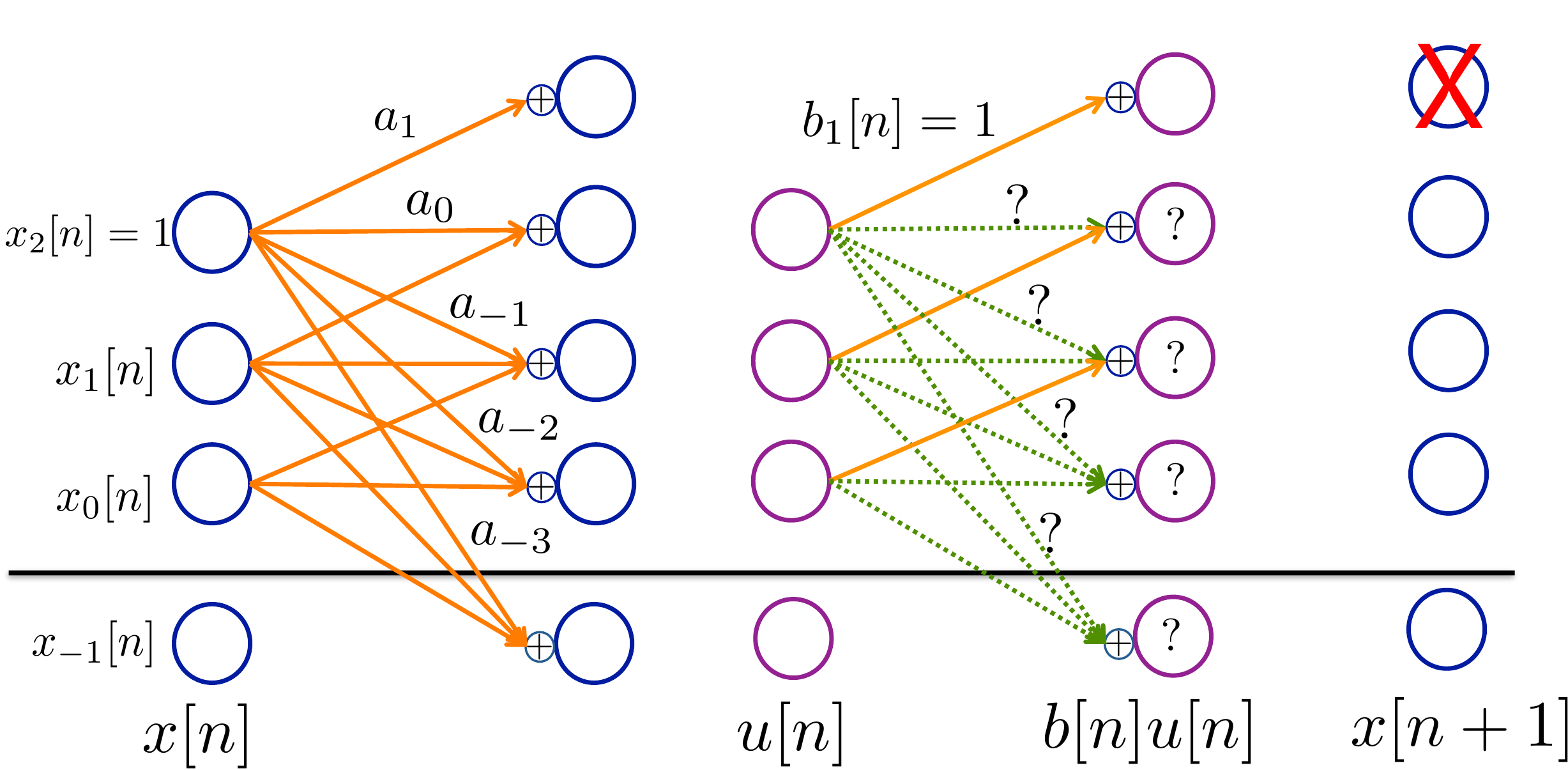}
\caption{This figure shows a carry-free model for system~\eqref{eq:cfwithb}. The solid orange lines represent deterministic bits and the dotted green lines represent random bits. The system gain has $g_{a}=1$, $g_{\det} = 1$ and $g_{ran} = 0$. So $b_{1}[n]=1$ is a deterministic value that is the same at each time step. $g_{ran}[n] = 1$, so bits $b_{0}[n], b_{-1}[n],b_{-2}[n], \cdots$ are all random Bernoulli-$\left(\frac{1}{2}\right)$ random bits. As a result the controller can only influence the top bit of the control going in.}
\label{fig:cfut}
\end{center}
\end{figure}

Now we introduce the carry-free system model for system~\eqref{eq:SAsystem}. We restrict attention to the case where the gain on the state is a known constant $a(z) = 1\cdot z^{g_{a}}$ for all $n$. Consider the system $\mathcal{S}^{\CF}_{a}$:
\begin{align}
x[n+1](z) = a(z) \cdot x[n](z) + b[n](z)\cdot u[n](z) + w[n](z)\label{eq:cfwithb}
\end{align}
Let $d_{n}$ be the degree of $x[n](z)$. Our aim is to understand the stability of this system, which is captured by the behavior of the degree $d_{n}$. 

Pictorially, the illustration in Figure~\ref{fig:cfut} shows us that $d_{n}$ will be bounded with probability $1$ only when $g_{\det} -  g_{\ran} \geq  g_{a}$ for $g_{a} >0$. (The system is self-stabilizing when $g_{a} \leq 0$.) In all of the figures, the solid orange lines represent deterministic bits and the dotted green lines represent random bits. Since at every time step the magnitude of the system state increases by exactly $g_{a}$ bits, as long as the controller can dissipate $g_{a}$ bits, it can stabilize the system. 

{\remark{Unlike the case with deterministic system gains, ADT models would not suffice to understand systems with random control gains, since they only capture bit shifts. The loss of information due to multiplicative scrambling by the random gains is essential to understand the bottleneck due to the uncertainty.}}

We now formalize some notions of stability for carry-free models and define control capacity.

\subsection{Zero-error stability}

For the zero-error stability of a carry-free system, we require that the degree of the state be bounded with probability $1$.

\begin{defn}
The system~\eqref{eq:cfwithb1} is stablizable in the zero-error sense if there exists a control strategy $u[\cdot](z)$ such that there exists $M< \infty$ and $N > 0$ such that for all $n \geq N$, we know $\P(d_{n} < M) = 1$.
\end{defn}

Now let us consider the system $\mathcal{S}^{\CF}$ below, with $a=1$ and $g_{a} = 0$. This parallels the system in~\eqref{eq:Ssystem}, and is illustrated in Figure~\ref{fig:cfccnoa}.
\begin{align}
x[n+1](z) = x[n](z) + b[n](z)\cdot u[n](z) + w[n](z)\label{eq:cfwithb1}
\end{align}
The maximum gain $g_{a}$ that can be tolerated for the system~\eqref{eq:cfwithb} is related to the maximum rate at which uncertainty can be dissipated in~\eqref{eq:cfwithb}. Thus, we define as the zero-error control capacity, $C_{\ze}$, as the maximum possible decay in the degree of the system per unit time.

\begin{defn}
The zero-error control capacity of the system $\mathcal{S}^{\CF}$ from~\eqref{eq:cfwithb1} is defined as the largest constant $C_{\ze}\left(\mathcal{S}^{\CF}\right)$ for which there exists a control strategy $u[0](z), \ldots, u[n](z)$ such that 
\begin{align*}
\P\left(\frac{1}{n}\bigr(d_{0} - d_{n}\bigr) \geq C_{\ze}\bigr(\mathcal{S}^{\CF}\bigr)\right) = 1.
\end{align*}
for all time steps $n$.
\end{defn}

\begin{figure}[htbp]
\begin{center}
\includegraphics[width = .5\textwidth]{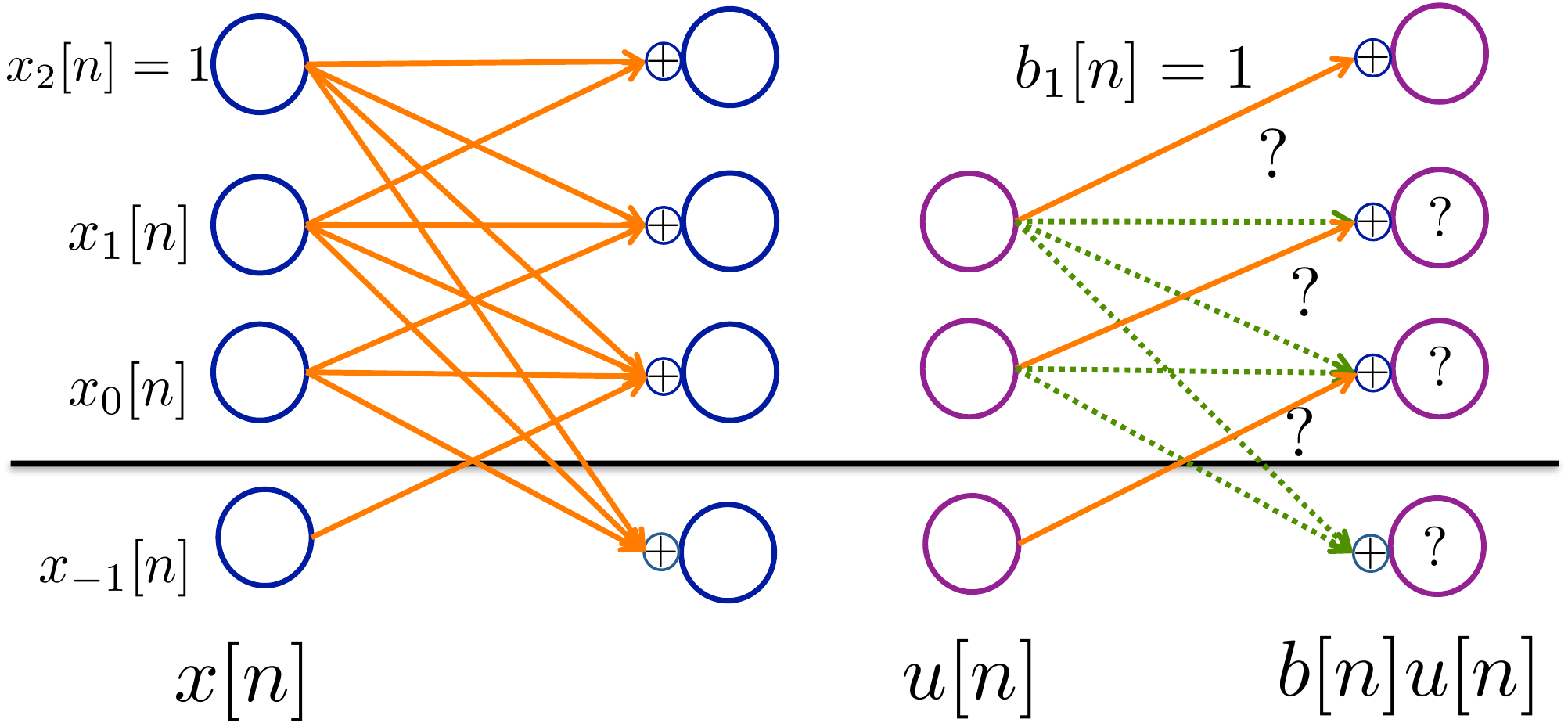}
\caption{This figure depicts the system in~\eqref{eq:cfwithb1}), with random control gain $b[n]$ but system growth $a$ fixed at $1$, so the system is not growing. The orange solid lines represent deterministic bits and the green dotted lines represent random bits in the control gain.}
\label{fig:cfccnoa}
\end{center}
\end{figure}

\begin{thm}
Consider the system $\S^{\CF}_{a}$ from~\eqref{eq:cfwithb} and the affiliated system $\mathcal{S}^{\CF}$ from~\eqref{eq:cfwithb1}, such that the actuation gains $b[n](z)$ are drawn identically in both cases. Then, $\S^{\CF}_{a}$ is stabilizable in a zero-error sense if and only if $C_{\ze}(\mathcal{S}^{\CF}) \geq g_{a}$.
\label{thm:cfzemeaning}
\end{thm}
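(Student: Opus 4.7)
The plan is to mirror the proof of Theorem~\ref{thm:debug} by establishing a carry-free analogue of Lemma~\ref{lem:whichcontrol}. The essential observation is that in the carry-free model the gain $a(z) = z^{g_{a}}$ acts as a pure bit-shift, so pre-multiplying the controls by $z^{k g_{a}}$ at time $k$ makes the state sequences of $\S^{\CF}$ and $\S^{\CF}_{a}$ track each other, shifted in bit position by exactly $k g_{a}$.

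The tracking lemma I would prove is the following. Given any causal control law $u[0](z), u[1](z), \ldots$ for $\S^{\CF}$ yielding state $x[k](z)$, the shifted controls $u_{a}[k](z) := z^{k g_{a}} u[k](z)$ form a causal control law for $\S^{\CF}_{a}$ whose resulting state satisfies $\deg(x_{a}[k](z)) = k g_{a} + \deg(x[k](z))$ sample-path by sample-path, and the reverse implication uses $u[k](z) := z^{-k g_{a}} u_{a}[k](z)$. In the noiseless case this is an immediate induction from the identity $z^{g_{a}} \bigl( z^{k g_{a}} x[k](z) \bigr) \oplus_{c} b[k](z) \bigl( z^{k g_{a}} u[k](z) \bigr) = z^{(k+1) g_{a}} x[k+1](z)$, and causality is preserved because $x[j](z) = z^{-j g_{a}} x_{a}[j](z)$ is a deterministic function of past observations. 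With additive noise present, the term $w[j](z)$ gets amplified by $z^{(n-1-j) g_{a}}$ in the unrolled state of $\S^{\CF}_{a}$, so it has degree at most $(n-1-j) g_{a} - 1$, strictly below the leading $z^{n g_{a}}$ contribution from $x_{a}[0](z)$ and the $z^{(n-1) g_{a}}$ contribution from the transformed controls, so the leading degree is undisturbed.

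Equipped with this lemma, achievability is direct: if $C_{\ze}(\S^{\CF}) \geq g_{a}$, there is a causal control for $\S^{\CF}$ with $\P(d_{0} - d_{n} \geq n g_{a}) = 1$, and applying $u_{a}[k](z) = z^{k g_{a}} u[k](z)$ to $\S^{\CF}_{a}$ gives $\deg(x_{a}[n](z)) = n g_{a} + d_{n} \leq d_{0}$ with probability one, establishing zero-error stability with bound $M = d_{0}$. Conversely, if $\S^{\CF}_{a}$ is stabilized by $u_{a}[\cdot](z)$ with $\P(\deg(x_{a}[n](z)) \leq M) = 1$ for all $n \geq N$, then $u[k](z) := z^{-k g_{a}} u_{a}[k](z)$ yields $\deg(x[n](z)) \leq M - n g_{a}$ with probability one, so $\tfrac{1}{n}(d_{0} - d_{n}) \geq g_{a} + (d_{0} - M)/n$, which tends to $g_{a}$ and forces $C_{\ze}(\S^{\CF}) \geq g_{a}$.

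The main technical obstacle is the noise bookkeeping in the tracking lemma: carry-free arithmetic over $\mathbb{F}_{2}$ does not admit the clean cancellations of real arithmetic, so I must verify degree-by-degree that shifted noise in $\S^{\CF}_{a}$ cannot create a $1$ at the leading level, and that noise in $\S^{\CF}$ does not obscure the corresponding reduction. This reduces to a sample-path degree check that the noise contributions remain strictly below the leading state and control terms at every time step, which follows from their starting degree of at most $-1$ and the bounded amplification by $z^{(n-1-j) g_{a}}$ across $n - 1 - j$ time steps.
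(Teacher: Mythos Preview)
The paper's own proof is a single sentence: ``This theorem follows naturally from the definition.'' It is relying on the pictorial argument in Figure~\ref{fig:cfut} --- the gain $a(z)=z^{g_a}$ shifts the degree up by $g_a$ each step, and the actuation channel can pull it down by $C_{\ze}(\S^{\CF})$ --- and does not spell out a tracking lemma at all. Your proposal is therefore considerably more detailed than what the paper provides; you are essentially writing out the carry-free analogue of Lemma~\ref{lem:whichcontrol} and Theorem~\ref{thm:debug}, which the paper leaves implicit.

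There is, however, a genuine slip in your tracking identity. In the real-valued system $\S_{a}$ of~\eqref{eq:SAsystem} the gain $a$ multiplies \emph{both} the state and the control, so $U_{a}[k]=a^{k}U[k]$ works. In the carry-free system $\S^{\CF}_{a}$ of~\eqref{eq:cfwithb}, by contrast, $a(z)$ multiplies only $x[n](z)$ and not $b[n](z)u[n](z)$. Consequently the identity you wrote,
\[
z^{g_{a}}\bigl(z^{kg_{a}}x[k](z)\bigr)\oplus_{c} b[k](z)\bigl(z^{kg_{a}}u[k](z)\bigr)=z^{(k+1)g_{a}}x[k+1](z),
\]
is false: the left side equals $z^{(k+1)g_{a}}x[k](z)\oplus_{c} z^{kg_{a}}b[k](z)u[k](z)$, while the right side is $z^{(k+1)g_{a}}x[k](z)\oplus_{c} z^{(k+1)g_{a}}b[k](z)u[k](z)$. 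The correct substitution is $u_{a}[k](z)=z^{(k+1)g_{a}}u[k](z)$, which restores the identity in the noiseless case and makes your induction go through. With that one-index fix, your degree-bookkeeping for the noise (which indeed stays strictly below the leading terms) is fine, and the argument is a sound --- if more laborious --- version of what the paper waves at.
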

\begin{proof}
This theorem follows naturally from the definition. 
\end{proof}

We can calculate the zero-error control capacity this using the following theorem that is intuitively illustrated in Figure~\ref{fig:cfccnoa}.
\begin{thm}
The zero-error control capacity for the system $\mathcal{S}^{\CF}$ in~\eqref{eq:cfwithb1} is equal to 
\begin{align*}
C_{\ze}\left(\mathcal{S}\right) = g_\det - g_\ran.
\end{align*} 
\label{thm:cfzecapacity}
\end{thm}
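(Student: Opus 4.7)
The plan is to prove the two inequalities separately, paralleling the achievability/converse structure of Theorem~\ref{thm:zeroerrorcalc} but with carry-free (mod-$2$) arithmetic. The key observation is that the carry-free product $b[n](z) \cdot u[n](z)$ decomposes into a deterministic part $z^{g_\det} \cdot u[n](z)$ together with a random part whose highest nonzero level is at most $g_\ran + m$, where $m$ is the top level of $u[n](z)$.

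For achievability, I would construct the explicit ``cancel-the-top'' strategy. Given $x[n](z)$ of degree $d_n$, choose $u[n](z)$ with top level $m = d_n - g_\det$, and with bits at levels $d_n - g_\det, d_n - g_\det - 1, \ldots, d_n - 2 g_\det + g_\ran + 1$ selected so that the deterministic contribution $z^{g_\det} \cdot u[n](z)$ cancels (bitwise mod $2$) the top $g_\det - g_\ran$ bits of $x[n](z)$. The random part of $b[n](z) \cdot u[n](z)$ lives at levels at most $g_\ran + m = d_n - (g_\det - g_\ran)$, so every bit of $x[n+1](z)$ strictly above level $d_n - (g_\det - g_\ran)$ vanishes almost surely. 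Hence $d_{n+1} \leq d_n - (g_\det - g_\ran)$ a.s., and iterating gives $d_0 - d_n \geq n (g_\det - g_\ran)$ a.s., establishing $C_{\ze}(\mathcal{S}^{\CF}) \geq g_\det - g_\ran$.

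For the converse, let $\Delta_k := d_k - d_{k+1}$ and let $\mathcal{F}_{k-1}$ be the sigma-field generated by $b[0], \ldots, b[k-1]$. I would show $\P(\Delta_k \leq g_\det - g_\ran \mid \mathcal{F}_{k-1}) \geq 1/2$ for any causal strategy by splitting on the value of the single bit $u_{d_k - g_\det}[k]$. If that bit equals $0$, the deterministic term $z^{g_\det} \cdot u[k](z)$ contributes nothing at level $d_k$, so the top bit of $x[k](z)$ survives and $\Delta_k \leq 0 \leq g_\det - g_\ran$. If it equals $1$, then the bit of $x[k+1](z)$ at level $d_k - (g_\det - g_\ran)$ has the form $c \oplus b_{g_\ran}[k]$ for some $\mathcal{F}_{k-1}$-measurable constant $c$ (built from the corresponding bit of $x[k]$ together with a freely chosen lower bit of $u[k]$); since $b_{g_\ran}[k] \sim \mathrm{Bernoulli}(1/2)$ is independent of $\mathcal{F}_{k-1}$, this bit equals $1$ with conditional probability exactly $1/2$, forcing $d_{k+1} \geq d_k - (g_\det - g_\ran)$ and hence $\Delta_k \leq g_\det - g_\ran$. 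Chaining across time via the tower property and using that $\{b_{g_\ran}[k]\}_k$ are i.i.d., $\P(\Delta_k \leq g_\det - g_\ran \text{ for all } k < n) \geq (1/2)^n > 0$, so $\P(\tfrac{1}{n}(d_0 - d_n) \leq g_\det - g_\ran) > 0$. For any $C > g_\det - g_\ran$ this yields $\P(\tfrac{1}{n}(d_0 - d_n) \geq C) < 1$, giving $C_{\ze}(\mathcal{S}^{\CF}) \leq g_\det - g_\ran$, and combined with achievability we obtain equality.

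The main obstacle will be the careful book-keeping in the converse for general strategies: identifying which single bit of $u[k]$ governs cancellation of the top, locating precisely where the first random contribution of the convolution $b[k] \cdot u[k]$ sits, and verifying that no combination of freely chosen controller bits can undo the $\mathrm{Bernoulli}(1/2)$ noise that $b_{g_\ran}[k]$ injects at level $d_k - (g_\det - g_\ran)$.
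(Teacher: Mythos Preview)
Your proposal is correct and follows the same route as the paper: a one-step bound (the paper's Lemma~\ref{lem:carryfreeonestep}) showing that $d_k - d_{k+1} \leq g_\det - g_\ran$ with probability at least $1/2$ regardless of the state and control, followed by telescoping; your achievability is exactly the linear-equation solving the paper alludes to.

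One caveat on the converse bookkeeping: you split on the single bit $u_{d_k - g_\det}[k]$, but both of your branches tacitly assume the top level $m$ of $u[k]$ satisfies $m \leq d_k - g_\det$. If instead $m > d_k - g_\det$, then in your ``$=1$'' branch the coefficient at level $d_k - (g_\det - g_\ran)$ is \emph{not} of the clean form $c \oplus b_{g_\ran}[k]$, because higher bits of $u[k]$ multiply lower random bits $b_j[k]$ ($j < g_\ran$) and land at that same level. The paper avoids this by splitting instead on $m$ into three cases ($m > d_k - g_\det$, $m = d_k - g_\det$, $m < d_k - g_\det$); the extra case $m > d_k - g_\det$ is harmless since then $d_{k+1} = g_\det + m > d_k$ with probability~$1$, but you should add it explicitly. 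Finally, the chaining over $n$ in your converse is sound but unnecessary: the definition demands the probability-one condition for \emph{all} $n$, so the one-step lemma at $n=1$ already gives $C_{\ze} \leq g_\det - g_\ran$.
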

\vspace{-.8cm}
The heart of argument lies in the illustrations in Figure~\ref{fig:cfccnoa}. Once the pictorial representation is clear, the formalism of the proof is just counting. Before we prove this theorem, here is a key lemma that bounds the decay that can happen in one step, regardless of the system state $x[n]$.
\begin{lem}
For the system defined by eq.~\eqref{eq:cfwithb1}, for \textbf{any} state $x[n]$, the largest constant $C_{\ze,n}$ such that 
$$P\left(d_{n}-d_{n+1} \geq C_{\ze,n}\right) = 1,$$
is $C_{\ze,n} = g_\det- g_\ran$.  \label{lem:carryfreeonestep}
\end{lem}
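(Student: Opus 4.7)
The plan is to prove the two directions $C_{\ze,n}\geq g_\det-g_\ran$ (achievability) and $C_{\ze,n}\leq g_\det-g_\ran$ (converse) separately, and both arguments reduce to a bookkeeping exercise on the carry-free convolution $b[n]\otimes_{c}u[n]$. The key structural fact is that among the high-order bit-levels of $b[n](z)$ only $g_\det-g_\ran$ are deterministic---the leading $1$ at level $g_\det$ plus the forced zeros at levels $g_\det-1,\ldots,g_\ran+1$---while every bit at level $\leq g_\ran$ is i.i.d.~Bernoulli-$\left(\tfrac{1}{2}\right)$ and unknown to the controller. This pins down exactly how many high-order bits of $x[n]$ can be cancelled with probability one in a single step.

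For achievability, I will take $u[n](z)$ to have its highest nonzero bit at level $d_n-g_\det$, so that $b[n]\otimes_{c}u[n]$ has top level exactly $d_n$. A direct computation of the carry-free product shows that for every $j$ with $0\leq j\leq g_\det-g_\ran-1$,
\begin{equation*}
(b[n]\otimes_{c}u[n])_{d_n-j}\;=\;\sum_{k} b_k[n]\,u_{d_n-j-k}[n]\;=\;u_{d_n-g_\det-j}[n],
\end{equation*}
because any random bit $b_k[n]$ with $k\leq g_\ran$ would be forced to pair with an index $d_n-j-k\geq d_n-g_\det+1$ lying strictly above the top of $u[n]$, and the intermediate deterministic bits $b_{g_\det-1},\ldots,b_{g_\ran+1}$ all vanish. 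Setting $u_{d_n-g_\det-j}[n]:=x[n]_{d_n-j}$ for each such $j$ cancels the top $g_\det-g_\ran$ levels of $x[n]$ deterministically. Since the additive $w[n](z)$ lives strictly below the decimal point, it cannot revive any of those top bits, yielding $d_{n+1}\leq d_n-(g_\det-g_\ran)$ almost surely; this scheme is linear memoryless in $x[n]$.

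For the converse, fix any (possibly nonlinear, state-dependent) control $u[n](z)$, let $\ell^{\ast}:=d_n-(g_\det-g_\ran)$, and show that $\P(x[n+1]_{\ell^{\ast}}\neq 0)>0$; this forces $\P(d_n-d_{n+1}\leq g_\det-g_\ran)>0$ and hence $C_{\ze,n}\leq g_\det-g_\ran$. Let $m$ denote the highest nonzero bit-level of $u[n]$ (with $m=-\infty$ if $u[n]=0$). If $m>d_n-g_\det$, then $(b\otimes_{c}u)_{m+g_\det}=b_{g_\det}u_m=1$ creates a new nonzero bit above $d_n$; if $m<d_n-g_\det$, then $(b\otimes_{c}u)_{d_n}=0$ leaves $x[n]_{d_n}=1$ uncancelled. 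In either subcase $d_n-d_{n+1}\leq 0$ deterministically, which is strictly less than $g_\det-g_\ran+1$. The essential case is $m=d_n-g_\det$, so $u_{d_n-g_\det}[n]=1$. Evaluating the carry-free product at $\ell^{\ast}$, the random bits $b_k[n]$ with $k<g_\ran$ would need to pair with $u$-entries at levels above $d_n-g_\det$ (which are zero by definition of $m$), so only $b_{g_\det}$ and $b_{g_\ran}$ contribute:
\begin{equation*}
(b[n]\otimes_{c}u[n])_{\ell^{\ast}}\;=\;u_{d_n-2g_\det+g_\ran}[n]\,\oplus_{c}\,b_{g_\ran}[n].
\end{equation*}
Hence $x[n+1]_{\ell^{\ast}}$ equals a quantity determined by $x[n]$ and $u[n]$ XOR'd with the independent Bernoulli-$\left(\tfrac{1}{2}\right)$ bit $b_{g_\ran}[n]$, so it is itself Bernoulli-$\left(\tfrac{1}{2}\right)$.

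The main obstacle is the careful convolution bookkeeping at the critical level $\ell^{\ast}$ in the converse: I need to verify that no random bit $b_k[n]$ with $k<g_\ran$ can reach $\ell^{\ast}$ by pairing with a nonzero entry of $u[n]$, so that $b_{g_\ran}[n]$ enters $x[n+1]_{\ell^{\ast}}$ as a genuinely independent coin flip the controller has no way to neutralize. This is exactly why the ``gap'' $g_\det-g_\ran$ is the right quantity---it counts how many deterministic high-order bits of $b[n]$ the controller can leverage to cancel $x[n]$-bits before the first uncancellable random bit is hit.
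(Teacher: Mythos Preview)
Your proof is correct and follows essentially the same approach as the paper: achievability via explicit bit-cancellation at the top $g_\det-g_\ran$ levels, and a converse that splits into three cases according to whether the leading level $m$ of $u[n]$ lies above, below, or exactly at $d_n-g_\det$, with the critical case handled by isolating the coefficient at level $\ell^\ast=d_n-(g_\det-g_\ran)$ and observing that $b_{g_\ran}[n]$ enters as an unavoidable independent Bernoulli-$\tfrac{1}{2}$ bit. Your achievability is more explicit than the paper's (which only invokes ``solving the appropriate set of linear equations''), and your converse bookkeeping at $\ell^\ast$ is in fact cleaner---the paper's printed version writes the critical level as $d_n-g_\ran+g_\det$ rather than $d_n+g_\ran-g_\det$, an apparent sign typo that your computation avoids.
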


\begin{proof}
\textbf{Achievability:} The achievability follows naturally by solving the appropriate set of linear equations to calculate controls to cancel the bits of the state.\\

\noindent \textbf{Converse:} To show the converse, we must show that for any $x[n]$ and for $u[n]$ that depends on $x[n]$ and its history we cannot beat $g_\det-g_{\ran}$,
\begin{align*}
P\left(d_{n} - d_{n+1} < g_\det - g_\ran+1\right) > 0.
\end{align*}

Consider any $x[n] = z^{d_{n}} + x_{d_{n}-1} z^{d_{n}-1} + \ldots$, with degree $d_{n}$. Let $u[n] = u_{m_{n}}n[]z^{m_{n}} + u_{m_{n}-1}[n] z^{m_{n}-1} + \ldots$ be any control action. The leading coefficients for both the state and the control must be $1$, else we could just reduce the degree, so we have that $u_{m_{n}}[n]=1$.
\begin{align*}
x[n+1] = &(x_{d_{n}}[n]z^{d_{n}} + x_{d_{n}-1}[n] z^{d_{n}-1} + \ldots) + \nonumber \\
&(z^{g_\det}+b _{g_\ran}[n] z^{g_\ran} + b_{g_\ran-1}[n] \cdot z^{g_\ran-1} + \ldots) \cdot (u_{m_{n}}[n]z^{m_{n}} + u_{m_{n}-1}[n]z^{m_{n}-1} + \ldots)
\end{align*}

We recall that $g_\det-g_\ran \geq 0$ by definition of $g_\det$ and $g_\ran$.
First we consider the case where $g_\det + m_{n} > d_{n}.$ Then, the degree at time $n+1$ is given as $d_{n+1} = g_\det + m_{n}$. So 
\begin{align*}
d_{n} - d_{n+1} = d_{n} - (g_\det + m_{n}) \leq 0 < g_\det - g_\ran + 1. 
\end{align*}

Next, assume $g_\det + m_{n} < d_{n}.$ Then, we must have that the degree of the state does not change after the control is applied, i.e. $d_{n+1} = d_{n}$. So we have that
\begin{align*}
d_{n} - d_{n+1} = 0 < g_\det - g_\ran + 1. 
\end{align*}

Finally, we consider the case when $g_\det + m_{n} = d_{n}.$ To calculate $d_{n+1}$, we first consider the coefficient of $z^{d_{n} - g_\ran + g_\det}$ in $x[n+1]$ as below:
\begin{align*}
x_{d_{n} - g_\ran +g_\det}[n] + b_{g_\ran}[n] \cdot u_{m_{n}}[n] + b_{g_\det}[n] \cdot u_{g_\ran + d_{n} - 2g_\det}[n] 
\end{align*}
Recall that that $g_\det + m_{n} = d_{n}$, and $u_{m_{n}}[n] = 1$, and all coefficients of $b$ between $z^{g_\det}$ and $z^{g_\ran}$ are zero, which gives the second and third terms above.

Now consider the term below. Recall $b_{g_{\det}}[n]=1$.
\begin{align}
&x_{d_{n} - g_\ran +g_\det}[n] + b_{g_\ran}[n] \cdot 1+ 1 \cdot u_{g_\ran + d_{n} - 2g_\det}[n] \nonumber \\
= &x_{d_{n} - g_\ran +g_\det}[n] + b_{g_\ran}[n] + u_{g_\ran + d_{n} - 2g_\det}[n] 
\end{align}
Since here $b_{g_\ran}[n]$ is a Bernoulli$-(\frac{1}{2})$, this term will be zero exactly with probability $\frac{1}{2}$. Hence, with probability $\frac{1}{2}$, $d_{n+1} \geq d_{n} - g_\ran + g_\det$. Hence, with probability $\frac{1}{2}$, $d_{n}-d_{n+1} \leq g_\ran - g_\det$. Thus $\P\left(d_{n}-d_{n+1} < 1+ g_\ran - g_\det\right) \geq \frac{1}{2}$, which gives the converse.

\end{proof}
Lemma~\ref{lem:carryfreeonestep} is the key ingredient that gives Theorem~\ref{thm:cfzecapacity}. This proof follows easily since the lemma decouples the controls at different time steps. The proofs of the real-valued notions of control capacity were inspired by this structure. The lemma frees us from considering time-varying or state-history dependent control strategies, which generally makes this style of converse difficult.

\begin{proof}[Proof of Thm.~\ref{thm:cfzecapacity}]
The lemma above bounds the decrease in degree of the state at any given time $n$, regardless of the control $u[n](z)$ or the state of the system $x[n](z)$. We have that
\begin{align*}
\frac{1}{n}  (d_{0} - d_{n}) = \frac{1}{n} \sum_{i=0}^{n-1}\left(d_{i} - d_{i+1}\right)
\end{align*}
Now, we know from Lemma~\ref{lem:carryfreeonestep} that, 
\begin{align*}
\P\left( \frac{1}{n} \sum_{i=0}^{n-1}\left(d_{i} - d_{i+1}\right) > \sum_{i=0}^{n-1} \left(g_\det - g_\ran\right)\right)= 1.
\end{align*}
Hence, we must have
\begin{align*}
C_{\ze}(\mathcal{S}) =\frac{1}{n} \sum_{i=0}^{n-1} \left(g_\det - g_\ran\right) = g_\det - g_\ran,
\end{align*}
which concludes the proof.
\end{proof}

\subsection{Stability in expectation}
Zero-error control capacity considers stability of the carry-free system with probability $1$. A weaker notion of stability is ``stability in expectation.'' This parallels the Shannon notion of capacity for real-valued systems. Since the definitions and theorems for this notion of stability are very similar to that of zero-error stability in the earlier section, we only state them here and omit details and proofs, which can be found in~\cite{gireejaBeast}.
\begin{defn}
The system~\eqref{eq:cfwithb1} is stablizable in expectation if there exists a control strategy $u[\cdot](z)$ such that for some $M< \infty$ we have that that $\limsup_{n\rightarrow\infty}\bigE\left[d_{n}\right] < M$. 
\end{defn}
We can also define a related notion of control capacity.
\begin{defn}
The ``Shannon'' control capacity of the system $\mathcal{S}^{\CF}$ from~\eqref{eq:cfwithb1}, $C_{\sh}(\mathcal{S})$, is defined as $$\max_{u[0](z), \cdots, u[n](z)}\lim_{n\rightarrow\infty}\frac{1}{n} \bigE[d_{0} - d_{n}].$$ 
\end{defn}

We can connect the stability of system $\mathcal{S}^{\CF}$ to $\mathcal{S}^{\CF}_{a}$ as in zero-error case through the following theorem.
\begin{thm}
Consider the system $\S^{\CF}_{a}$ from~\eqref{eq:cfwithb} and the affiliated system $\mathcal{S}^{\CF}$ from~\eqref{eq:cfwithb1}, such that the actuation channels, i.e. the $b[n](z)$ are drawn identically in both cases.  
$\S^{\CF}_{a}$ is stabilizable in expectation if and only if $C_{\sh}(\mathcal{S}) \geq g_{a}$. 
\label{thm:cfshmeaning}
\end{thm}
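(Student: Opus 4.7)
The plan is to follow the pattern of Theorem~\ref{thm:shannoncapacitygrowth} in the real-valued case, but with multiplication by $a^{k}$ replaced by a bit-level shift by $z^{k g_a}$. The zero-error version (Theorem~\ref{thm:cfzemeaning}) uses exactly this structure, so I want to adapt that to the expectation setting.

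First I would establish a carry-free analog of Lemma~\ref{lem:whichcontrol}: given any causal control strategy $u[0](z), u[1](z), \ldots$ for $\mathcal{S}^{\CF}$, define $u_{a}[k](z) := z^{(k+1) g_{a}} u[k](z)$ as controls for $\mathcal{S}^{\CF}_{a}$. Ignoring the noise momentarily, I claim by induction that $x_{a}[k](z) = z^{k g_{a}} x[k](z)$ for all $k$. The base case is trivial, and the inductive step is
\[
x_{a}[k+1](z) = z^{g_{a}} x_{a}[k](z) + b[k](z)\, u_{a}[k](z) = z^{(k+1)g_{a}}\bigl(x[k](z) + b[k](z)\,u[k](z)\bigr) = z^{(k+1)g_{a}} x[k+1](z),
\]
using only the linearity of carry-free addition and multiplication. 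Causality is preserved since $y_{a}[k](z) = z^{k g_{a}} y[k](z)$. Consequently the degrees satisfy $d^{a}_{k} = d_{k} + k g_{a}$, and in expectation $\bigE[d^{a}_{k}] = \bigE[d_{k}] + k g_{a}$. The reverse direction uses $u[k](z) := z^{-(k+1)g_{a}} u_{a}[k](z)$, interpreted as a formal Laurent series.

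For achievability, suppose $C_{\sh}(\mathcal{S}^{\CF}) \geq g_{a}$. Paralleling Lemma~\ref{lem:carryfreeonestep} (and the Shannon one-step Lemma~\ref{lem:shcclem}), the $\max$ in the definition of $C_{\sh}(\mathcal{S}^{\CF})$ is attained by a stationary memoryless strategy $u^{*}[\cdot](z)$ achieving $\tfrac{1}{n}\bigE[d_{0}-d_{n}] = C_{\sh}(\mathcal{S}^{\CF})$ for every $n$. Applying the translated controls $u_{a}^{*}[k](z) = z^{(k+1)g_{a}} u^{*}[k](z)$ to $\mathcal{S}^{\CF}_{a}$, Step~1 yields $\bigE[d^{a}_{n}] = \bigE[d_{n}] + n g_{a} = \bigE[d_{0}] + n\bigl(g_{a} - C_{\sh}(\mathcal{S}^{\CF})\bigr) \leq \bigE[d_{0}]$, uniformly bounded, so $\mathcal{S}^{\CF}_{a}$ is stabilizable in expectation. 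Conversely, if $\mathcal{S}^{\CF}_{a}$ is stabilizable in expectation via some $u_{a}[\cdot](z)$ with $\bigE[d^{a}_{n}] \leq M$ for $n > N$, then pulling back via $u[k](z) = z^{-(k+1)g_{a}} u_{a}[k](z)$ gives $\bigE[d_{n}] = \bigE[d^{a}_{n}] - n g_{a} \leq M - n g_{a}$, so
\[
\frac{1}{n}\bigE[d_{0} - d_{n}] \geq \frac{\bigE[d_{0}] + n g_{a} - M}{n} \longrightarrow g_{a}
\]
as $n \to \infty$, which shows $C_{\sh}(\mathcal{S}^{\CF}) \geq g_{a}$.

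The main obstacle is handling the additive noise $w[n](z)$. Because the shift $z^{(k+1)g_{a}}$ is applied only to the state and the control, the noise in $\mathcal{S}^{\CF}$ and $\mathcal{S}^{\CF}_{a}$ does not translate cleanly: propagating the identity through a noisy step introduces a residual term of the form $z^{g_{a}} N_{a}[k] + w[k]\bigl(1 - z^{(k+1)g_{a}}\bigr)$, whose degree can grow with $k$. Showing that these noise contributions stay controlled in expectation needs an argument that each step only exposes $g_{a}$ fresh noise bits above the decimal point (mirroring the data-rate-theorem picture of Figure~\ref{fig:bitsmarching}), and that the stabilizing controller can absorb them without affecting the leading-order degree count. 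At the level of intuition conveyed by this appendix the noiseless argument already captures the essence, and the noisy case extends along the lines of Section~\ref{sec:additive}.
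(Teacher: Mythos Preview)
The paper's own proof is a single line: ``The proof follows naturally from the definition.'' The carry-free appendix is written as an intuition-builder, and the authors do not spell out the bijection between strategies for $\mathcal{S}^{\CF}$ and $\mathcal{S}^{\CF}_{a}$ at all. Your shift argument via $u_{a}[k](z)=z^{(k+1)g_{a}}u[k](z)$ is exactly the carry-free analogue of Lemma~\ref{lem:whichcontrol} and Theorem~\ref{thm:shannoncapacitygrowth}, and it is the right way to make the statement precise.

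Where you overshoot is the noise. The residual you compute does have degree growing like $ng_{a}-1$, but that is a discrepancy between the \emph{full bit strings} $x_{a}[n](z)$ and $z^{ng_{a}}x[n](z)$; the theorem is only about degrees. In the carry-free model $\deg(w[k])\leq -1$, so adding $w[k]$ can never alter any bit at level $\geq 0$. Consequently, whenever $d_{n}\geq 0$ the degree is determined by the noiseless part $x[n]+b[n]u[n]$ alone, and whenever the noiseless part has degree $<0$ the resulting $d_{n}$ is already $\leq -1$ and hence bounded. Thus $\bigE[d_{n}]$ is bounded for the noisy system iff it is bounded for the noiseless one, and for the noiseless system your shift transformation gives the exact identity $d^{a}_{n}=d_{n}+ng_{a}$, from which both directions follow immediately. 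There is no need to invoke Section~\ref{sec:additive} or to ``absorb'' the noise step by step; the carry-free structure makes the noise irrelevant to degree tracking, which is precisely why the authors regard the result as immediate.
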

\begin{proof}
The proof follows naturally from the definition.
\end{proof}
The last theorem in this section explicitly calculates the carry-free Shannon control capacity in terms of the parameters of the random gain in the carry-free model. 
\begin{thm}
The Shannon control capacity for the system $\mathcal{S}$ is given by 
\begin{align*}
C_{\sh}\left(\mathcal{S}^{CF}\right) = g_\det - g_\ran + 1.
\end{align*} 
\label{thm:shcfcapacity}
\end{thm}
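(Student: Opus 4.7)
The plan is to mirror the zero-error proof by first establishing a one-step Shannon lemma: for any state $x[n]$ with degree $d_n$ and any control $u[n]$ causally generated from history, $\bigE[d_n - d_{n+1} \mid x[n]] \leq g_\det - g_\ran + 1$, with equality achievable. Iterating this lemma then yields both the achievability and the converse of $C_{\sh}(\mathcal{S}^{\CF}) = g_\det - g_\ran + 1$.

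For the one-step lemma, I would first dispose of the non-aligned cases. If $\deg(u[n]) > d_n - g_\det$, the leading coefficient of $b[n] \otimes_c u[n]$ is $b_{g_\det} \cdot u_{\deg(u)} = 1$, which lifts the degree deterministically, so $d_{n+1} > d_n$ with probability one. If $\deg(u[n]) < d_n - g_\det$, the leading bit of $x[n]$ survives unchanged, so $d_{n+1} = d_n$. Both give expected reduction at most $0$, so the only useful choice is $m_n := \deg(u[n]) = d_n - g_\det$ with $u_{m_n} = 1$. Under this alignment, for $1 \leq j \leq K - 1$ where $K := g_\det - g_\ran$, the new bit at position $d_n - j$ reduces to $x_{d_n-j}[n] + u_{m_n - j}[n]$ because the only $b$-bit in range is the deterministic $b_{g_\det} = 1$; the greedy choice $u_{m_n - j} = x_{d_n - j}$ zeroes all $K$ top bits deterministically.

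The key step concerns positions $d_n - K - j$ for $j \geq 0$. Writing out the carry-free convolution and using $u_{m_n} = 1$, the new bit there is
\begin{equation*}
Y_j = x_{d_n - K - j}[n] + u_{m_n - K - j}[n] + b_{g_\ran - j}[n] + \sum_{k=0}^{j-1} b_{g_\ran - k}[n]\, u_{m_n - j + k}[n].
\end{equation*}
The crucial observation is that $b_{g_\ran - j}[n]$ enters additively with coefficient $1$ and does not appear in any of $Y_0, \ldots, Y_{j-1}$. Since $\{b_{g_\ran - k}[n]\}_{k \geq 0}$ are i.i.d.~Bernoulli$(1/2)$ and independent of the controller's deterministic choices, $Y_j$ is Bernoulli$(1/2)$ conditional on $Y_0, \ldots, Y_{j-1}$ no matter how the remaining bits of $u[n]$ are picked. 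Hence $Y_0, Y_1, \ldots$ are i.i.d.~Bernoulli$(1/2)$ under every strategy. The number of additional leading zeros $R := \min\{j : Y_j \neq 0\}$ is then geometric with $\P(R \geq j) = 2^{-j}$ and $\bigE[R] = 1$, yielding $\bigE[d_n - d_{n+1} \mid x[n]] = K + 1 = g_\det - g_\ran + 1$.

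Achievability follows by applying this greedy strategy at every time step, so $\bigE[d_0 - d_n] = n(K+1)$. The converse uses iterated expectations: for any causal strategy, $\bigE[d_0 - d_n] = \sum_{i=0}^{n-1} \bigE[d_i - d_{i+1}] \leq n(K+1)$, so dividing by $n$ and taking the limit gives $C_{\sh}(\mathcal{S}^{\CF}) \leq K + 1$. The main obstacle is the independence argument for the $Y_j$: it requires careful bookkeeping of exactly which random bits $b_{g_\ran - k}[n]$ enter each output bit through the carry-free convolution, and leveraging the additive (mod 2) structure together with the freshness of $b_{g_\ran - j}[n]$ at each $j$ to derive the i.i.d.~Bernoulli$(1/2)$ property that is robust to the controller's free choices.
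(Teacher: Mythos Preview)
Your proposal is correct and follows essentially the same approach as the paper's own proof, which is only a brief sketch: establish a one-step expected reduction of $g_\det - g_\ran + 1$ by deterministically cancelling the top $K = g_\det - g_\ran$ bits and then arguing that each subsequent bit is zero with probability $\tfrac{1}{2}$, so the geometric tail contributes exactly one extra bit in expectation. Your treatment is considerably more detailed than the paper's---in particular, you write out the convolution formula for $Y_j$, isolate the fresh random bit $b_{g_\ran - j}$, and verify the conditional independence that makes the $Y_j$ i.i.d.\ Bernoulli$(1/2)$ under any control choice---but the underlying idea and structure are the same.
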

This proof is similar to the proof of Theorem~\ref{thm:cfzecapacity}. The extra bit of capacity is gained because the control only needs to cancel it ``in expectation''. Consider the bit at level $d_{n} + g_{\ran}-1$. The controller can only cancel this bit with probability $\frac{1}{2}$ (see Figure~\ref{fig:cfccnoa}. Thus, the bit at level $d_{n} + g_{\ran} - i$ for $g_{\ran} > -i$ is set to zero by the controller with probability $2^{-i}$, and summing the geometric terms gives the result.

\subsection{Carry-free model with side information}
Carry-free models easily allow us to illustrate how bits of side-information impact the system. We can count the non-random bits in the output and ``compute'' the value of a piece of side-information. We illustrate this here with an example.
\begin{figure}[htbp]
\begin{center}
\subfigure[]{
\includegraphics[width=.35\textwidth]{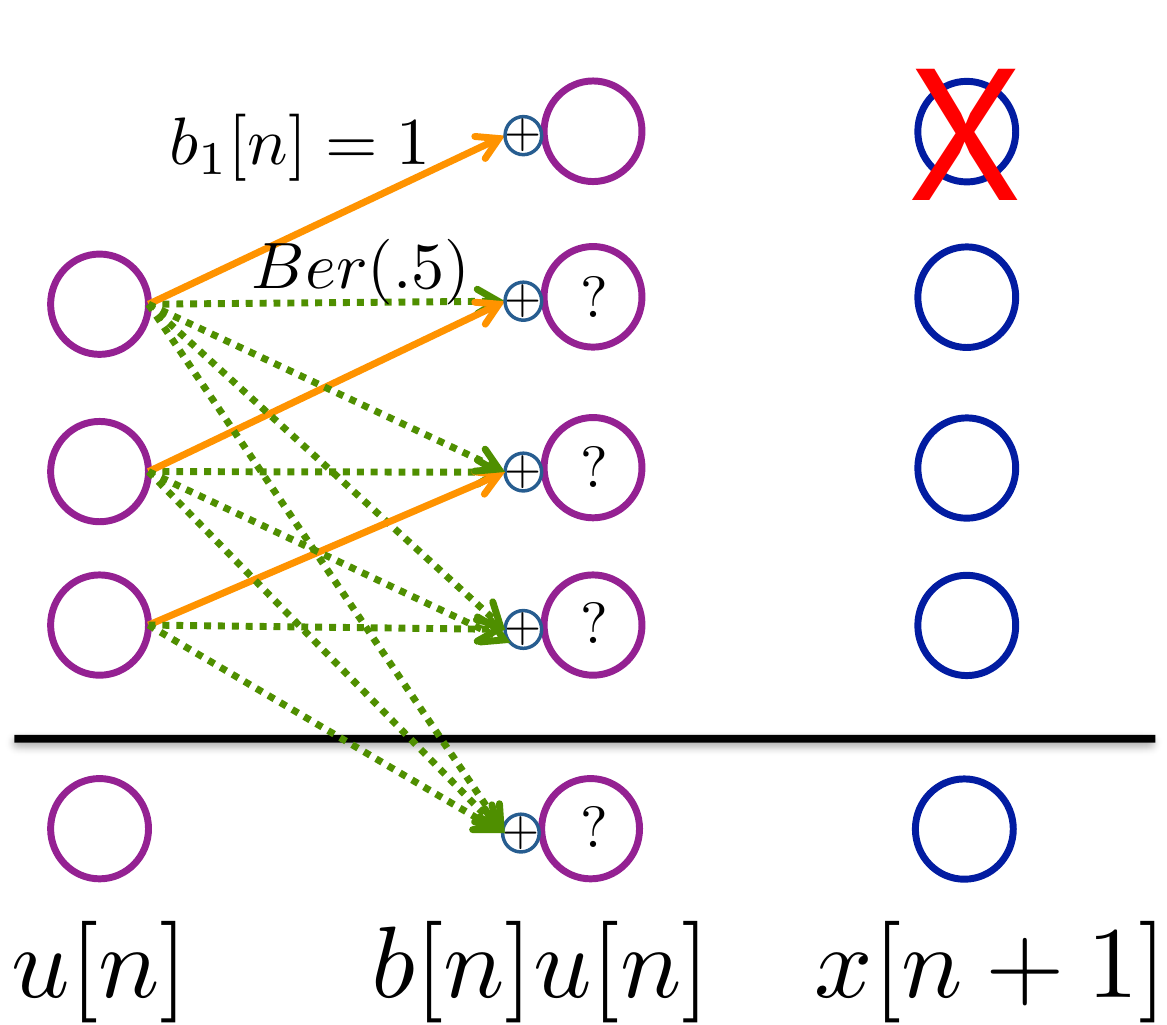}
}
\subfigure[]{\includegraphics[width=.35\textwidth]{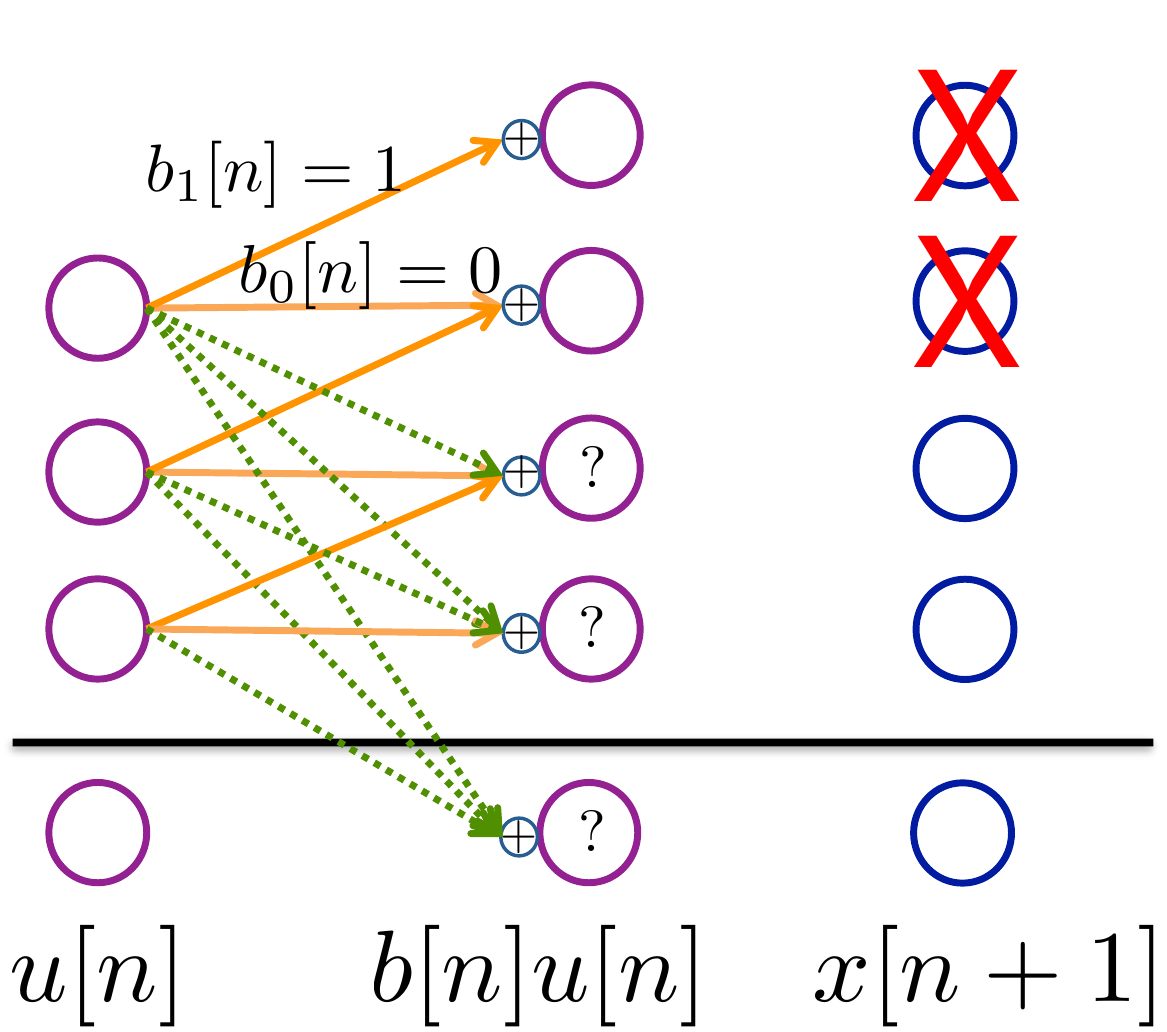}
}
\caption{This system has the highest deterministic link at level $g_{det} = 1$ and the highest unknown link at $g_{ran} = 0$. Bits $b_{-1}[n], b_{-2}[n], \cdots$ are all random Bernoulli-$(\frac{1}{2})$. As a result the controller can only influence the top bits of the control going in, and can only cancel one bit of the state. If one extra bit $b_{0}$ were known to the controller, it could cancel a second bit of the state.}
\label{fig:cfuncertain}
\end{center}
\end{figure}
In Figure~\ref{fig:cfuncertain}, we consider a simple bit-level carry-free model that is the counterpart of system~\eqref{eq:sys1}. Say the control gain $B[n]$ has one deterministic bit, so that $g_{det}=1$, but all lower bits are random Bernoulli~$-\left(\frac{1}{2}\right)$ bits. Thus, the controller can only cancel $1$ bit of the state each time with probability $1$ and the system has zero-error control capacity of $1$. If two bits (the most significant bit and the bit after that) of $B[n]$ were deterministic, the zero-error control capacity would be $2$. For instance,  if the value of the bit at level 0, i.e.~$b_{0}$, were also known, then we could tolerate a growth through $a$ of two bits at a time. We can think of this as the value of the side information $b_{0}$ for this problem. 

\subsection{A carry-free counterexample for the value of side information}
In the portfolio theory literature, it is known that the maximum increase in doubling rate due to side information $Z$ for a set of stocks distributed as $Q$ is upper bounded by $I\left(Q;Z\right)$. In this case, a bit buys a bit. It is tempting to conjecture a similar result in the case of control systems, however this turns out not to be true. 

\begin{figure}[htbp]
\begin{center}
\subfigure[]{
\includegraphics[width=.35\textwidth]{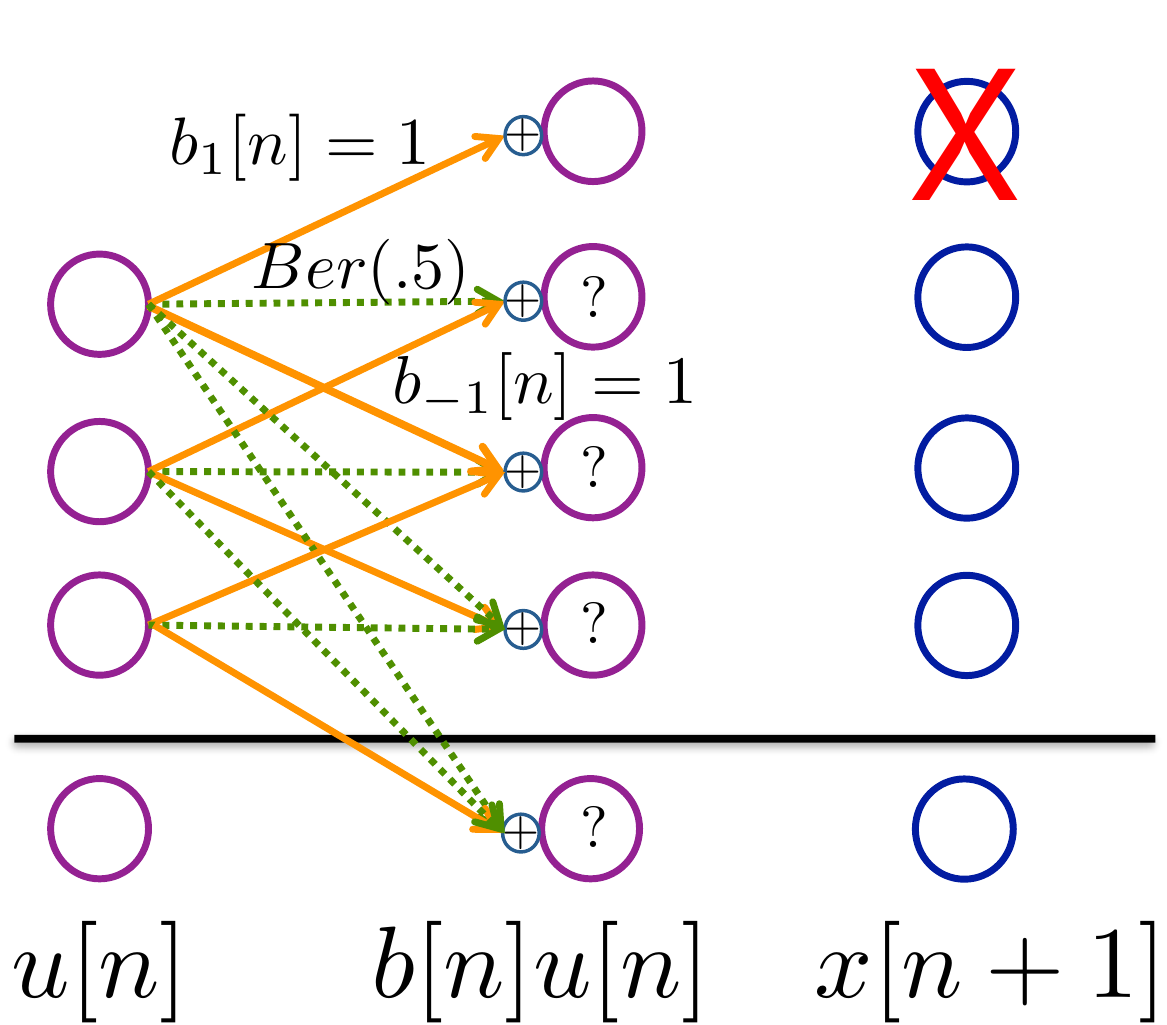}
}
\subfigure[]{
\includegraphics[width=.35\textwidth]{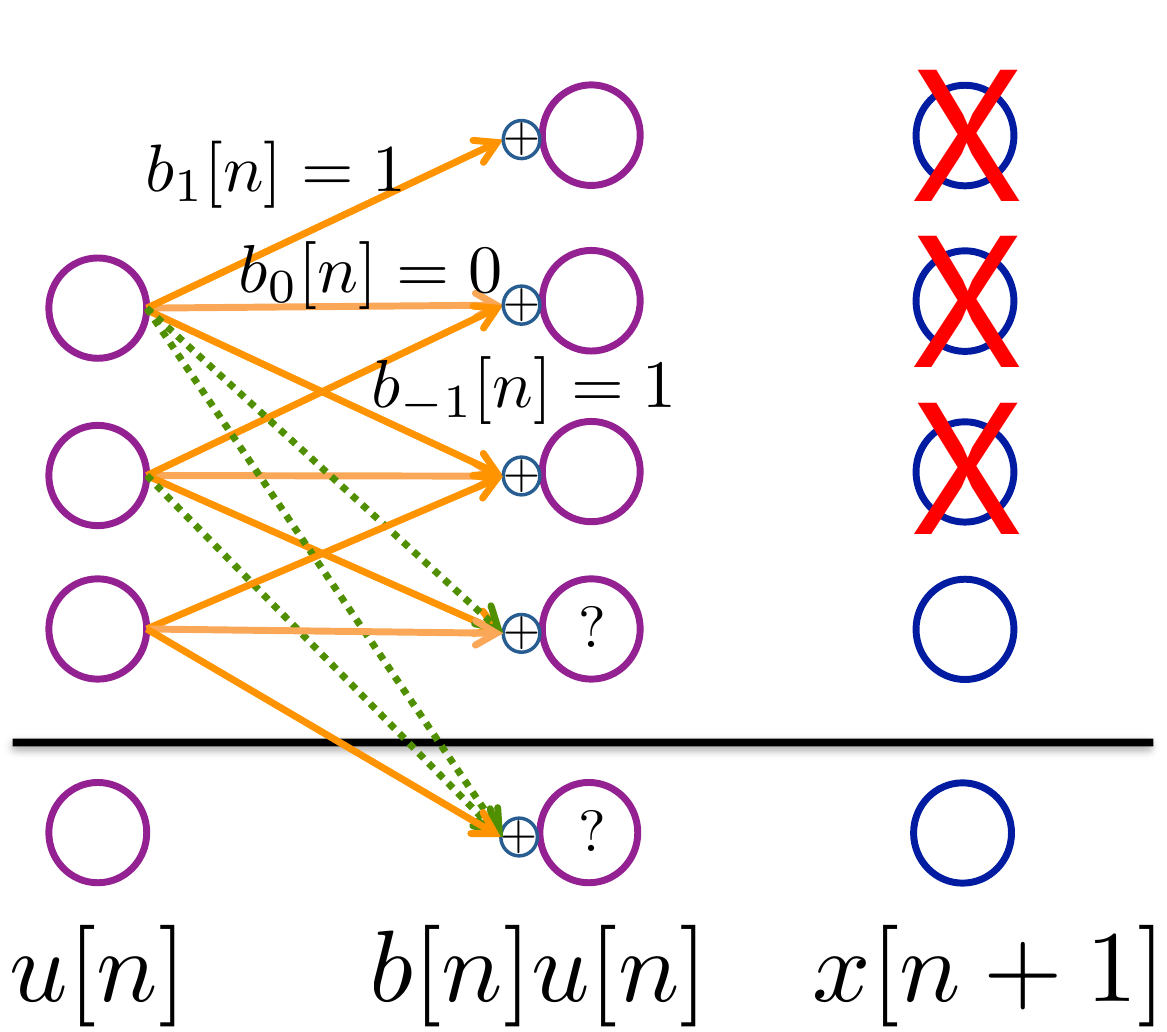}
}
\caption{Consider the following gain for the controller in (a): $b_{1}[n]=1, b_{-1}[n]=1$ are deterministically known, but all other links are Bernoulli-$(\frac{1}{2})$. Only a gain of $\log a = 1$ can be tolerated in this case. Now, say side information regarding the value of $b_{0}[n]$ is received as in (b). This suddenly buys the controller not just one, but two bits of growth.}
\label{fig:counterexample}
\end{center}
\end{figure}

To see a counter example, consider the carry-free model in Figure~\ref{fig:counterexample}. Here $u[n]$ is the control action, and $b[n]u[n] = z[n]$ is the control effect. In Figure~\ref{fig:counterexample}(a) the uncertainty in $b_{0}[n]$ does not allow the controller to utilize the knowledge that $b_{-1}[n] = 1$. However, one bit of information $b_{0}[n]$ in Figure~\ref{fig:counterexample}(b), lets the controller buy two bits of gain in the tolerable growth rate as explained in the caption. 

This carry-free model represents a real system where $B[n]$ is drawn from a mixture of disjoint uniform distributions, as in Figure~\ref{fig:dist1}(a). The first most significant bit and the third most significant bit are known, but the second most significant bit is not known. The first bit tells us whether $B[n]$ comes from group A-B or group C-D. The third bit only discriminates to the level that is shown in Figure~\ref{fig:dist1}(b), i.e. the controller only knows that $B[n]$ belongs to one of the two orange boxes. So the variance of the distribution isn't actually lowered by much due to the side information. The side information containing the second bit finally lowers the variance, as in Figure~\ref{fig:dist1}(c). Thus, the gain this one-bit of information provides is worth more than a bit.

\begin{figure}[htbp]
\begin{center}
\subfigure[]{
\includegraphics[width=.3\textwidth]{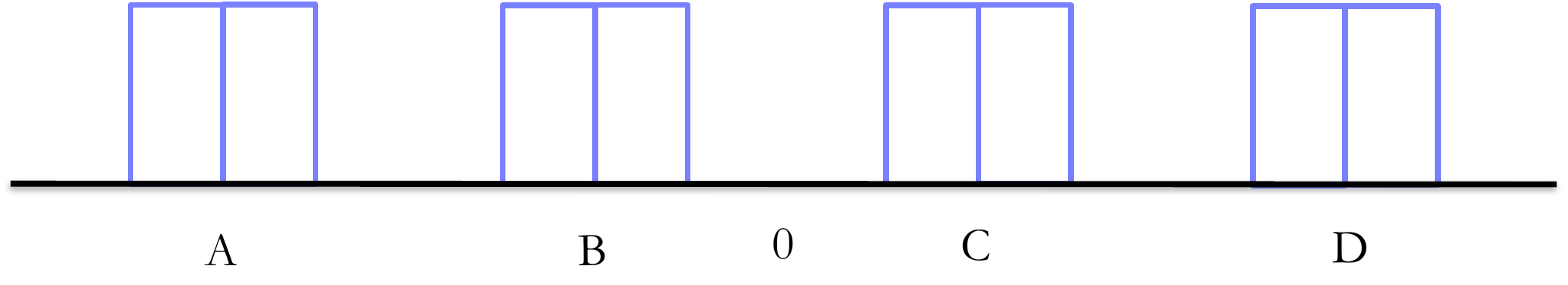}
}
\subfigure[]{
\includegraphics[width=.3\textwidth]{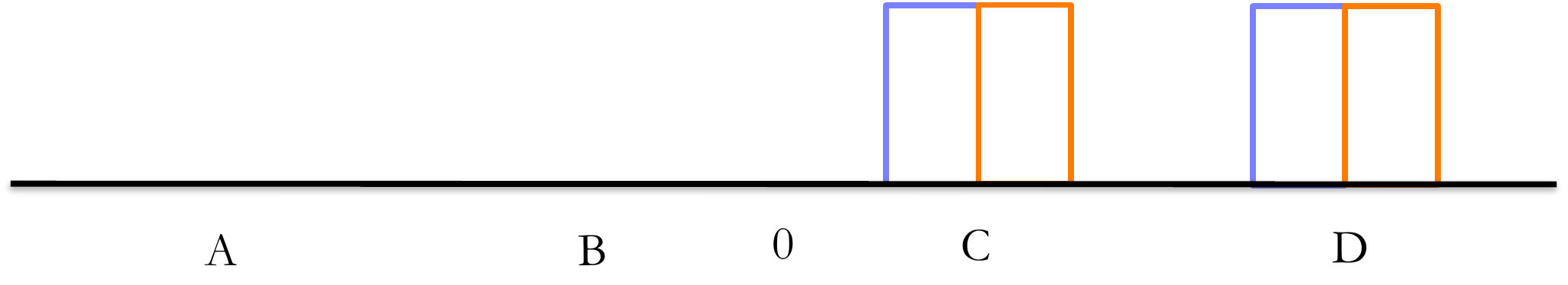}
}
\subfigure[]{
\includegraphics[width=.3\textwidth]{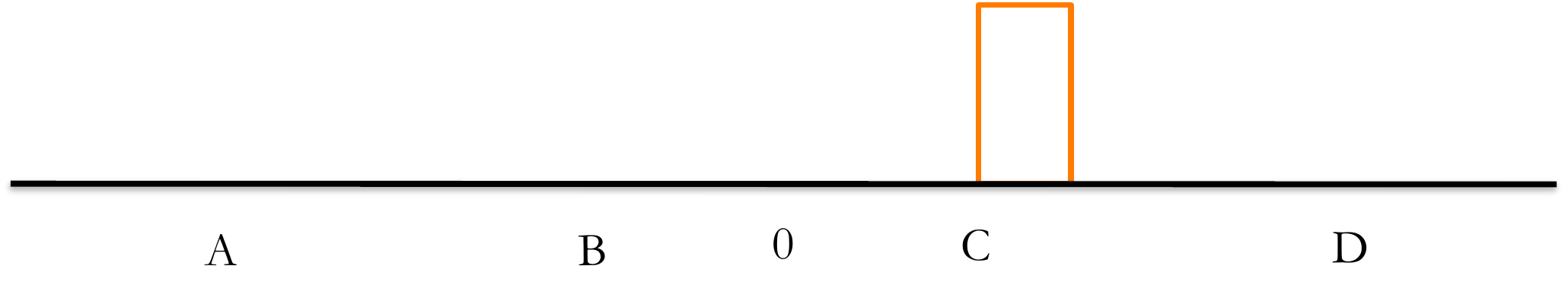}
}
\caption{(a) The distribution of $B[n]$ for the carry-free model. (b) The first and the third bit together tell the controller that $B[n]$ comes from one of the orange parts of the distribution. Since there are two orange sections that are far away, the effective variance of $B[n]$ is not reduced. (c) Once the second bit is also known, the fact that the controller already knew the third bit becomes useful.}
\label{fig:dist1}
\end{center}
\end{figure}

\subsection{A communication aside} 
\begin{figure}[htbp]
\begin{center}
\includegraphics[width=.3\textwidth]{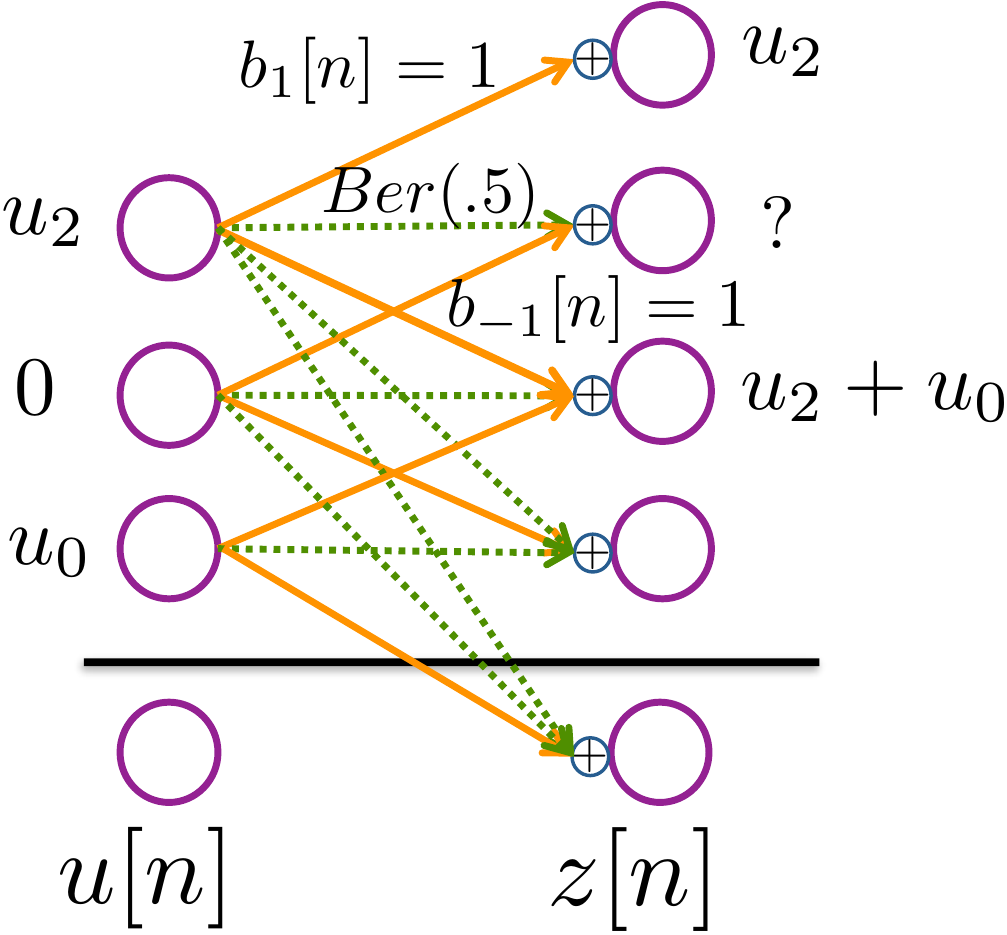}
\caption{With two known bits on the gain, the decoder can decode two additional bits about the message $u[n]$ from the received signal $z[n]$. However, these bits are decoded at specific positions. It is not possible to get information out of the received signal position $z_{2}$.}
\label{fig:commCE}
\end{center}
\end{figure}

If the problem was that of pure communication, we could still decode two bits of information about $u[n]$ from $b[n]u[n] = z[n]$. See Figure~\ref{fig:commCE}. Let $u_{2}$ and $u_{0}$ be the information carrying bits, and set $u_{1}= 0$ to zero. Then $z_{3} = u_{2}$ and $z_{1} = u_{2} + u_{0}$. With two equations and two unknowns, both $u_{2}[n]$ and $u_{0}[n]$ can be recovered at the decoder. In the control problem, this is not possible because of the contamination that is introduced by $b_{0}$ at $z_{2}$. While communication systems can choose which bits contain relevant information, control systems do not have that flexibility. A bit at a predetermined position must be cancelled or moved by the control action. 

In the case of portfolio theory, it is possible to hedge across uncertainty in the system and get ``partial-credit'' for uncertain quantities. This is not possible in control systems since it is not possible to hedge a control signal in the same way one can hedge a bet.  

\section{Zero-error control capacity Appendix} \label{sec:appendixzeroerror}

{\begin{restatable}{lem}{lemboundedlem}
\begin{align}
\underset{d \in \mathbb{R}}{\min}~\underset{B \in [b_{1}, b_{2}]}{\max} \bigr| 1 + B\cdot d \bigr| =
\begin{cases}
\frac{|b_{2}-b_{1}|}{|b_{2}+b_{1}|} ~\textrm{if}~~0 \not\in [b_{1}, b_{2}]\\
1~~\textrm{if}~~0 \in [b_{1},b_{2}]. \label{eq:lem1}
\end{cases}
\end{align}
Thus, if $B$ is a random variable with essential infimum $b_1$ and essential supremum $b_2$, we know that $\forall d, \forall \epsilon > 0$,
\begin{align}
P\left( \log \biggr| 1 + B\cdot d \biggr| < \log\biggr| \frac{b_{2}-b_{1}}{b_{1}+b_{2}}\biggr| -\epsilon \right) < 1~\textrm{if}~0 \not\in [b_{1}, b_{2}].
\end{align}
and 
\begin{align}
P\left( \log \bigr| 1 + B\cdot d \bigr|< -\epsilon \right) < 1~\textrm{if}~0 \in [b_{1},b_{2}].
\end{align}
\label{lem:boundedlem}
\end{restatable}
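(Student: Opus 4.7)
The plan is to reduce the inner maximum to a two-point problem via convexity, solve the resulting deterministic min-max in closed form, and then lift this to a probability statement using continuity and the definition of essential supremum/infimum. For each fixed $d$, the map $B \mapsto |1+Bd|$ is convex, so its maximum over the compact interval $[b_1,b_2]$ is attained at an endpoint. Hence the inner maximum equals $M(d) := \max\bigl(|1+b_1 d|,\,|1+b_2 d|\bigr)$, and it remains only to minimize $M(d)$ over $d \in \mathbb{R}$.

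When $0 \in [b_1,b_2]$, the choice $d=0$ gives $M(0)=1$; conversely, for $d>0$ we have $b_2\geq 0$ and hence $|1+b_2 d|\geq 1$, while for $d<0$ we have $b_1\leq 0$ and hence $|1+b_1 d|\geq 1$, so $M(d)\geq 1$ for all $d$. When $0 \notin [b_1,b_2]$, the symmetry $(d,b_1,b_2)\mapsto(-d,-b_2,-b_1)$ lets me assume $0 < b_1 \leq b_2$. The equioscillation candidate $d^* = -2/(b_1+b_2)$ produces $1+b_1 d^* = (b_2-b_1)/(b_1+b_2)$ and $1+b_2 d^* = -(b_2-b_1)/(b_1+b_2)$, both with absolute value $(b_2-b_1)/(b_1+b_2)$, so $M(d^*) = (b_2-b_1)/(b_1+b_2)$. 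A piecewise comparison of the two V-shaped curves $|1+b_i d|$ on the intervals cut out by the break points $-1/b_1 < -1/b_2 < 0$ shows that $M$ strictly decreases on $(-\infty,d^*]$ and strictly increases on $[d^*,\infty)$, confirming that $d^*$ is the global minimizer and establishing \eqref{eq:lem1}.

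For the probability statement, fix any $d$ and let $C^*$ denote the claimed min-max value, which is strictly positive whenever $b_1 \neq b_2$ (the degenerate case $b_1 = b_2$ makes $B$ a.s.\ constant and renders the claim immediate). From the preceding analysis, at least one endpoint $b_* \in \{b_1,b_2\}$ satisfies $|1+b_* d| \geq C^*$. Since $|1+Bd|$ is bounded away from $0$ near $b_*$, the map $B \mapsto \log|1+Bd|$ is continuous there, so there exists $\delta > 0$ such that $\log|1+Bd| > \log C^* - \epsilon$ for every $B$ in the one-sided neighborhood $N$ of $b_*$ inside $[b_1,b_2]$ of width $\delta$. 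Because $b_*$ is either the essential supremum ($b_2$) or the essential infimum ($b_1$) of $B$, the defining property of essential sup/inf yields $\P(B \in N) > 0$, and therefore $\P\bigl(\log|1+Bd| \geq \log C^* - \epsilon\bigr) > 0$, which is exactly the desired strict bound $\P\bigl(\log|1+Bd| < \log C^* - \epsilon\bigr) < 1$.

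The only delicate part of the plan is the bookkeeping in the piecewise-linear analysis of $M(d)$ that verifies $d^*$ is optimal in the $0 \notin [b_1,b_2]$ case; everything else — the convexity reduction, the continuity argument, and invoking the essential-sup/inf definition — is routine once the deterministic min-max has been pinned down.
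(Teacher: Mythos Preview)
Your proposal is correct and follows essentially the same route as the paper's proof. Both arguments identify the same candidate $d^* = -2/(b_1+b_2)$ (respectively $d^*=0$) and then verify optimality by examining what happens at the two endpoints $b_1,b_2$; the paper does this via a direct contradiction (evaluating $|1+b_i d| < C^*$ at $i=1,2$ forces incompatible inequalities on $d$), while you phrase the same computation as a monotonicity analysis of the piecewise-linear function $M(d)=\max(|1+b_1 d|,|1+b_2 d|)$ after the convexity reduction. Your treatment of the probability statement via continuity near the extremal endpoint is slightly more explicit than the paper's one-line appeal to the definition of essential sup/inf, but the content is the same.
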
}
\begin{proof}
First, we consider the case where $0 < b_{1} < b_{2}$. The case where $b_{1} < b_{2} < 0$ follows similarly. To show that the bound in~\eqref{eq:lem1} is achievable,  choose $d^{*}= -\left(\frac{b_{1}+b_{2}}{2}\right)^{-1}$. We claim this is the minimizing $d$.
Then, the maximum value of $\bigr|1 + B\cdot d^{*}\bigr|$ is attained when $B$ is realized as $B=b_{2}$ or $B = b_1$.
\begin{align*}
\left|1 - b_{2}\cdot \left(\frac{b_{1}+b_{2}}{2}\right)^{-1}\right| = \left|1 - b_{1}\cdot \left(\frac{b_{1}+b_{2}}{2}\right)^{-1}\right|  = \frac{b_{2}-b_{1}}{b_{1}+b_{2}}
\end{align*}
Now, suppose $\exists d \neq d^{*} \in \mathbb{R}$ that such that $\forall B \in [b_{1}, b_{2}]$, 
\begin{align*}
\left|1 + B \cdot d\right| < \left|\frac{b_{2}-b_{1}}{b_{1}+b_{2}} \right| = \frac{b_{2}-b_{1}}{b_{1}+b_{2}}.
\end{align*}
This implies for $B=b_{1}$ that $\left|1 + b_{1} \cdot d\right| < \frac{b_{2}-b_{1}}{b_{1}+b_{2}}.$. This can be written as:
\begin{align*}
& \frac{b_{1}-b_{2}}{b_{1}+b_{2}} < 1 + b_{1} \cdot d < \frac{b_{2}-b_{1}}{b_{1}+b_{2}} \\
\Rightarrow&  d < \frac{-2}{b_{1}+b_{2}}.
\end{align*}
Further, we also have for $B=b_{2}$ that $|1 + b_{2} \cdot d| < \frac{b_{2}-b_{1}}{b_{1}+b_{2}}.$ This gives
\begin{align*}
& \frac{b_{1}-b_{2}}{b_{1}+b_{2}} < 1 + b_{2} \cdot d < \frac{b_{2}-b_{1}}{b_{1}+b_{2}} \\
\Rightarrow&  d > \frac{-2}{b_{1}+b_{2}},
\end{align*}
which gives us a contradiction. Hence, $\forall d \in \mathbb{R}$, $\exists B \in [b_{1}, b_{2}]$, 
\begin{align*}
\left|1 + B \cdot d\right| \geq \biggr|\frac{b_{2}-b_{1}}{b_{1}+b_{2}} \biggr|. 
\end{align*}

Given the minmax of $|1 + B \cdot d|$ is  $\biggr|\frac{b_{2}-b_{1}}{b_{1}+b_{2}}\biggr|$, then for any bounded random variable $B$ with essential supremum $b_2$ and essential infimum $b_1$, we know by the definition of essential suprema and infima that: 
\begin{align}
\P\left( \log \biggr| 1 + B \cdot d \biggr| < \log\biggr| \frac{b_{2}-b_{1}}{b_{1}+b_{2}}\biggr| -\epsilon \right) < 1.
\end{align}

Now, we consider the case where $b_{1} \leq 0 < b_{2}.$ Here, choose $d^{*} = 0$ to show that the bound is achievable. Then $\left|1+ Bd\right| = 1$ for all $B$.

Suppose there exists $ d \neq d^{*} \in \mathbb{R}$ such that $\forall B \in [b_{1}, b_{2}]$,
$|1 + B \cdot d | < 1$.
This implies that for $B=b_{1}$: 
\begin{align*}
& -1 < 1 + b_{1} \cdot d < 1 ~\Rightarrow -2  < b_{1} \cdot d <  0
\end{align*}
Thus, $b_{1}$ cannot be equal to zero. 
If $b_{1} < 0$, then  since $b_{1} \cdot d  < 0$, we must have $d > 0$. Now,  we also know that 
\begin{align*}
|1 + b_{2} \cdot d| < 1~\Rightarrow& -1 < 1 + b_{2} \cdot d < 1 \\
\Rightarrow& -2 < b_{2} \cdot d < 0 
\end{align*}
$b_{2} \cdot d < 0$ implies $d < 0$,  which again gives us a contradiction. 

Hence, for all $d \in \mathbb{R}$, there exists $B \in \{b_{1}, b_{2}\}$ such that $|1 + B \cdot d| \geq 1.$ 

Consequently, for any bounded random variable $B$ with essential supremum $b_2$ and essential infimum $b_1$ we have that: 
\begin{align}
\P\left( \log \bigr| 1 + B[n]\cdot d \biggr|< -\epsilon \right) < 1.
\end{align}

\end{proof}


\lemzeroerroronestep*

\begin{proof}
\textbf{Achievability:}\\
The case where $X[n+1] = 0$ is trivial, so we focus on the case where $\left|X[n+1]\right| > 0$. Further, we see that if $0 \in [b_{1}, b_{2}]$, choosing $U[n] = 0$ gives the result. 

Let $0 \not\in [b_{1}, b_{2}]$. We would like to show that $\P\left(-\log \biggr|\frac{X[n+1]}{x[n]} \biggr| \geq \log \biggr|\frac{b_{1}+b_{2}}{b_{2}-b_{1}} \biggr| \right) = 1$. For any $x[n]\neq0$, choose $U[n] = - \left(\frac{b_{1}+b_{2}}{2}\right)^{-1} x[n]$. Then,
\begin{align*}
X[n+1] &= x[n] - B[n] \left(\frac{b_{1}+b_{2}}{2}\right)^{-1} x[n]\\
&= x[n] \left( 1 - \frac{2 B[n]}{b_{1}+b_{2}}\right).
\end{align*}
Now, note that the following inequalities are equivalent.
\begin{align*}
&-\log \biggr|\frac{X[n+1]}{x[n]} \biggr| \geq \log \biggr|\frac{b_{1}+b_{2}}{b_{2}-b_{2}}\biggr|\\
\Longleftrightarrow &\log \biggr|\frac{X[n+1]}{x[n]} \biggr| \leq \log  \biggr|\frac{b_{2}-b_{1}}{b_{1}+b_{2}}\biggr|\\
\Longleftrightarrow &\biggr| 1 - \frac{2 B[n]}{b_{1}+b_{2}}\biggr| \leq~  \biggr|\frac{b_{2}-b_{1}}{b_{1}+b_{2}}\biggr|.
\end{align*}

Hence, the event $\mathcal{E}_{1} = \left\{-\log \biggr|\frac{X[n+1]}{x[n]} \biggr| \geq \log \biggr|\frac{b_{1}+b_{2}}{b_{2}-b_{1}} \biggr| \right\}$, is identical to the event $\mathcal{E}_{2} =\left\{ \biggr| 1 - \frac{2 B[n]}{b_{1}+b_{2}}\biggr| \leq~  \biggr|\frac{b_{2}-b_{1}}{b_{1}+b_{2}}\biggr| \right\}$.

But we know from Lemma~\ref{lem:boundedlem} that the maximum of $\biggr| 1 - \frac{2 B[n]}{b_{1}+b_{2}}\biggr|$ is $\biggr|\frac{b_{2}-b_{1}}{b_{1}+b_{2}}\biggr|$. Hence, the event $\mathcal{E}_{2}$ occurs with probability $1$, which implies $\mathcal{E}_{1}$ occurs with probability $1$, which proves achievability.

\noindent \textbf{Converse:} Here, we must show $\forall x[n]\neq 0, \forall \epsilon > 0$, for any $U[n]$,
\begin{align*}
\P\left(- \log \biggr| \frac{X[n+1]}{x[n]} \biggr| > \log\biggr| \frac{b_{1}+b_{2}}{b_{2}-b_{1}}\biggr| +\epsilon \right) < 1.
\end{align*}

First, consider the case $0 < b_{1} < b_{2}$ or $b_{1} < b_{2} < 0.$ Then,
\begin{align*}
\P\left(- \log \biggr| \frac{X[n+1]}{x[n]} \biggr| > \log\frac{|b_{2}+b_{1}|}{|b_{2}-b_{1}|} +\epsilon \right) 
= \P\left( \log \biggr| \frac{x[n] + B[n]U[n]}{x[n]} \biggr| < \log\biggr| \frac{b_{2}-b_{1}}{b_{1}+b_{2}}\biggr| -\epsilon \right) 
\end{align*}
Let $\frac{U[n]}{x[n]} = d$, for any strategy $U[n].$ Then the above display must equal:
\begin{align*}
&=\P\left( \log \biggr| 1 + B[n]\cdot d \biggr| < \log\biggr| \frac{b_{2}-b_{1}}{b_{1}+b_{2}}\biggr| -\epsilon \right) \\
&< ~1.~(\textrm{by Lemma~\ref{lem:boundedlem}})
\end{align*}

Next if $b_{1} \leq 0 < b_{2}$  or $b_{1} < 0 \leq b_{2}$ then
\begin{align*}
&\P\left(- \log \biggr| \frac{X[n+1]}{x[n]} \biggr| > \epsilon  \right)
= \P\left( \log \biggr| \frac{X[n] + B[n]U[n]}{X[n]} \biggr|< -\epsilon \right)
\end{align*}
Let $\frac{U[n]}{x[n]} = d$, for any strategy $U[n].$ Then the above display must equal:
\begin{align*}
&=\P\left( \log \bigr| 1 + B[n]\cdot d \biggr|< -\epsilon \right)
< ~1,
\end{align*}
which is less than 1 by Lemma~\ref{lem:boundedlem}. This concludes the proof.
\end{proof}
\vspace{-8mm}
\label{sec:appendix1}

\bibliographystyle{IEEEtran}
\bibliography{ccjournal}

\end{document}